\newtheorem{theorem}{Theorem}[section]
\newtheorem{axiom}[theorem]{Axiom}
\newtheorem{conjecture}[theorem]{Conjecture} 
\newtheorem{corollary}[theorem]{Corollary} 
\newtheorem{definition}[theorem]{Definition} 
\newtheorem{example}[theorem]{Example} 
\newtheorem{exercise}[theorem]{Exercise} 
\newtheorem{lemma}[theorem]{Lemma}
\newtheorem{proposition}[theorem]{Proposition} 
\newtheorem{remark}[theorem]{Remark}
\newenvironment{proof}[1][Proof]{\noindent\textbf{#1.} }{\ \rule{0.5em}{0.5em}}
\let\pdfoutput=\undefined\fi
\chardef\@x10\chardef\@xv60
\def\tcitime{
\def\@time{%
  \@minute\time\@hour\@minute\divide\@hour\@xv
  \ifnum\@hour<\@x 0\fi\the\@hour:%
  \multiply\@hour\@xv\advance\@minute-\@hour
  \ifnum\@minute<\@x 0\fi\the\@minute
  }}%
\def\x@hyperref#1#2#3{%
   \catcode`\~ = 12
   \catcode`\$ = 12
   \catcode`\_ = 12
   \catcode`\# = 12
   \catcode`\& = 12
   \y@hyperref{#1}{#2}{#3}%
}
\def\y@hyperref#1#2#3#4{%
   #2\ref{#4}#3
   \catcode`\~ = 13
   \catcode`\$ = 3
   \catcode`\_ = 8
   \catcode`\# = 6
   \catcode`\& = 4
}
\def\QCTOpt[#1]#2{%
  \def\QCTOptB{#1}
  \def\QCTOptA{#2}
}
\def\QCTNOpt#1{%
  \def\QCTOptA{#1}
  \let\QCTOptB\empty
}
\def\Qct{%
  \@ifnextchar[{%
    \QCTOpt}{\QCTNOpt}
}
\def\QCBOpt[#1]#2{%
  \def\QCBOptB{#1}%
  \def\QCBOptA{#2}%
}
\def\QCBNOpt#1{%
  \def\QCBOptA{#1}%
  \let\QCBOptB\empty
}
\def\Qcb{%
  \@ifnextchar[{%
    \QCBOpt}{\QCBNOpt}%
}
\def\PrepCapArgs{%
  \ifx\QCBOptA\empty
    \ifx\QCTOptA\empty
      {}%
    \else
      \ifx\QCTOptB\empty
        {\QCTOptA}%
      \else
        [\QCTOptB]{\QCTOptA}%
      \fi
    \fi
  \else
    \ifx\QCBOptA\empty
      {}%
    \else
      \ifx\QCBOptB\empty
        {\QCBOptA}%
      \else
        [\QCBOptB]{\QCBOptA}%
      \fi
    \fi
  \fi
}
\def\GRAPHICSPS#1{%
 \ifcase\GRAPHICSTYPE
   \special{ps: #1}%
 \or
   \special{language "PS", include "#1"}%
 \fi
}%
\def\graffile#1#2#3#4{%
    \bgroup
	   \@inlabelfalse
       \leavevmode
       \@ifundefined{bbl@deactivate}{\def~{\string~}}{\activesoff}%
        \raise -#4 \BOXTHEFRAME{%
           \hbox to #2{\raise #3\hbox to #2{\null #1\hfil}}}%
    \egroup
}%
\def\draftbox#1#2#3#4{%
 \leavevmode\raise -#4 \hbox{%
  \frame{\rlap{\protect\tiny #1}\hbox to #2%
   {\vrule height#3 width\z@ depth\z@\hfil}%
  }%
 }%
}%
\let\nographics=\@msidraft
\newif\ifwasdraft
\def\GRAPHIC#1#2#3#4#5{%
   \ifnum\@msidraft=\@ne\draftbox{#2}{#3}{#4}{#5}%
   \else\graffile{#1}{#3}{#4}{#5}%
   \fi
}
\def\addtoLaTeXparams#1{%
    \edef\LaTeXparams{\LaTeXparams #1}}%
\newif\ifBoxFrame \BoxFramefalse
\newif\ifOverFrame \OverFramefalse
\newif\ifUnderFrame \UnderFramefalse
\def\BOXTHEFRAME#1{%
   \hbox{%
      \ifBoxFrame
         \frame{#1}%
      \else
         {#1}%
      \fi
   }%
}
\def\doFRAMEparams#1{\BoxFramefalse\OverFramefalse\UnderFramefalse\readFRAMEparams#1\end}%
\def\readFRAMEparams#1{%
 \ifx#1\end%
  \let\next=\relax
  \else
  \ifx#1i\dispkind=\z@\fi
  \ifx#1d\dispkind=\@ne\fi
  \ifx#1f\dispkind=\tw@\fi
  \ifx#1t\addtoLaTeXparams{t}\fi
  \ifx#1b\addtoLaTeXparams{b}\fi
  \ifx#1p\addtoLaTeXparams{p}\fi
  \ifx#1h\addtoLaTeXparams{h}\fi
  \ifx#1X\BoxFrametrue\fi
  \ifx#1O\OverFrametrue\fi
  \ifx#1U\UnderFrametrue\fi
  \ifx#1w
    \ifnum\@msidraft=1\wasdrafttrue\else\wasdraftfalse\fi
    \@msidraft=\@ne
  \fi
  \let\next=\readFRAMEparams
  \fi
 \next
 }%
\def\IFRAME#1#2#3#4#5#6{%
      \bgroup
      \let\QCTOptA\empty
      \let\QCTOptB\empty
      \let\QCBOptA\empty
      \let\QCBOptB\empty
      #6%
      \parindent=0pt
      \leftskip=0pt
      \rightskip=0pt
      \setbox0=\hbox{\QCBOptA}%
      \@tempdima=#1\relax
      \ifOverFrame
          \typeout{This is not implemented yet}%
          \show\HELP
      \else
         \ifdim\wd0>\@tempdima
            \advance\@tempdima by \@tempdima
            \ifdim\wd0 >\@tempdima
               \setbox1 =\vbox{%
                  \unskip\hbox to \@tempdima{\hfill\GRAPHIC{#5}{#4}{#1}{#2}{#3}\hfill}%
                  \unskip\hbox to \@tempdima{\parbox[b]{\@tempdima}{\QCBOptA}}%
               }%
               \wd1=\@tempdima
            \else
               \textwidth=\wd0
               \setbox1 =\vbox{%
                 \noindent\hbox to \wd0{\hfill\GRAPHIC{#5}{#4}{#1}{#2}{#3}\hfill}\\%
                 \noindent\hbox{\QCBOptA}%
               }%
               \wd1=\wd0
            \fi
         \else
            \ifdim\wd0>0pt
              \hsize=\@tempdima
              \setbox1=\vbox{%
                \unskip\GRAPHIC{#5}{#4}{#1}{#2}{0pt}%
                \break
                \unskip\hbox to \@tempdima{\hfill \QCBOptA\hfill}%
              }%
              \wd1=\@tempdima
           \else
              \hsize=\@tempdima
              \setbox1=\vbox{%
                \unskip\GRAPHIC{#5}{#4}{#1}{#2}{0pt}%
              }%
              \wd1=\@tempdima
           \fi
         \fi
         \@tempdimb=\ht1
         \advance\@tempdimb by -#2
         \advance\@tempdimb by #3
         \leavevmode
         \raise -\@tempdimb \hbox{\box1}%
      \fi
      \egroup%
}%
\def\DFRAME#1#2#3#4#5{%
  \vspace\topsep
  \hfil\break
  \bgroup
     \leftskip\@flushglue
	 \rightskip\@flushglue
	 \parindent\z@
	 \parfillskip\z@skip
     \let\QCTOptA\empty
     \let\QCTOptB\empty
     \let\QCBOptA\empty
     \let\QCBOptB\empty
	 \vbox\bgroup
        \ifOverFrame 
           #5\QCTOptA\par
        \fi
        \GRAPHIC{#4}{#3}{#1}{#2}{\z@}%
        \ifUnderFrame 
           \break#5\QCBOptA
        \fi
	 \egroup
  \egroup
  \vspace\topsep
  \break
}%
\def\FFRAME#1#2#3#4#5#6#7{%
  \@ifundefined{floatstyle}
    {
     \begin{figure}[#1]%
    }
    {
	 \ifx#1h
      \begin{figure}[H]%
	 \else
      \begin{figure}[#1]%
	 \fi
	}
  \let\QCTOptA\empty
  \let\QCTOptB\empty
  \let\QCBOptA\empty
  \let\QCBOptB\empty
  \ifOverFrame
    #4
    \ifx\QCTOptA\empty
    \else
      \ifx\QCTOptB\empty
        \caption{\QCTOptA}%
      \else
        \caption[\QCTOptB]{\QCTOptA}%
      \fi
    \fi
    \ifUnderFrame\else
      \label{#5}%
    \fi
  \else
    \UnderFrametrue%
  \fi
  \begin{center}\GRAPHIC{#7}{#6}{#2}{#3}{\z@}\end{center}%
  \ifUnderFrame
    #4
    \ifx\QCBOptA\empty
      \caption{}%
    \else
      \ifx\QCBOptB\empty
        \caption{\QCBOptA}%
      \else
        \caption[\QCBOptB]{\QCBOptA}%
      \fi
    \fi
    \label{#5}%
  \fi
  \end{figure}%
 }%
\def\makeactives{
  \catcode`\"=\active
  \catcode`\;=\active
  \catcode`\:=\active
  \catcode`\'=\active
  \catcode`\~=\active
}
   \gdef\activesoff{%
      \def"{\string"}%
      \def;{\string;}%
      \def:{\string:}%
      \def'{\string'}%
      \def~{\string~}%
    }
\def\FRAME#1#2#3#4#5#6#7#8{%
 \bgroup
 \ifnum\@msidraft=\@ne
   \wasdrafttrue
 \else
   \wasdraftfalse%
 \fi
 \def\LaTeXparams{}%
 \dispkind=\z@
 \def\LaTeXparams{}%
 \doFRAMEparams{#1}%
 \ifnum\dispkind=\z@\IFRAME{#2}{#3}{#4}{#7}{#8}{#5}\else
  \ifnum\dispkind=\@ne\DFRAME{#2}{#3}{#7}{#8}{#5}\else
   \ifnum\dispkind=\tw@
    \edef\@tempa{\noexpand\FFRAME{\LaTeXparams}}%
    \@tempa{#2}{#3}{#5}{#6}{#7}{#8}%
    \fi
   \fi
  \fi
  \ifwasdraft\@msidraft=1\else\@msidraft=0\fi{}%
  \egroup
 }%
\def\TEXUX#1{"texux"}
\long\def\QQQ#1#2{%
     \long\expandafter\def\csname#1\endcsname{#2}}%
\long\def\QQA#1#2{}%
\def\QTR#1#2{{\csname#1\endcsname {#2}}}%
\def\EXPAND#1[#2]#3{}%
\def\NOEXPAND#1[#2]#3{}%
\def\LaTeXparent#1{}%
\def\ChildStyles#1{}%
\def\ChildDefaults#1{}%
\def\QTagDef#1#2#3{}%
  \providecommand{\UNICODE}[2][]{\protect\rule{.1in}{.1in}}
  \providecommand{\U}[1]{\protect\rule{.1in}{.1in}}
\def\QQfnmark#1{\footnotemark}
 \def\abstract{%
  \if@twocolumn
   \section*{Abstract (Not appropriate in this style!)}%
   \else \small 
   \begin{center}{\bf Abstract\vspace{-.5em}\vspace{\z@}}\end{center}%
   \quotation 
   \fi
  }%
   \def\registered{\relax\ifmmode{}\r@gistered
                    \else$\m@th\r@gistered$\fi}%
 \def\r@gistered{^{\ooalign
  {\hfil\raise.07ex\hbox{$\scriptstyle\rm\text{R}$}\hfil\crcr
  \mathhexbox20D}}}}{}%
\newdimen\theight
\def\newfmtname{LaTeX2e}
  \DeclareOldFontCommand{\rm}{\normalfont\rmfamily}{\mathrm}
  \DeclareOldFontCommand{\sf}{\normalfont\sffamily}{\mathsf}
  \DeclareOldFontCommand{\tt}{\normalfont\ttfamily}{\mathtt}
  \DeclareOldFontCommand{\bf}{\normalfont\bfseries}{\mathbf}
  \DeclareOldFontCommand{\it}{\normalfont\itshape}{\mathit}
  \DeclareOldFontCommand{\sl}{\normalfont\slshape}{\@nomath\sl}
  \DeclareOldFontCommand{\sc}{\normalfont\scshape}{\@nomath\sc}
\def\alpha{{\Greekmath 010B}}%
\def\beta{{\Greekmath 010C}}%
\def\gamma{{\Greekmath 010D}}%
\def\delta{{\Greekmath 010E}}%
\def\epsilon{{\Greekmath 010F}}%
\def\zeta{{\Greekmath 0110}}%
\def\eta{{\Greekmath 0111}}%
\def\theta{{\Greekmath 0112}}%
\def\iota{{\Greekmath 0113}}%
\def\kappa{{\Greekmath 0114}}%
\def\lambda{{\Greekmath 0115}}%
\def\mu{{\Greekmath 0116}}%
\def\nu{{\Greekmath 0117}}%
\def\xi{{\Greekmath 0118}}%
\def\pi{{\Greekmath 0119}}%
\def\rho{{\Greekmath 011A}}%
\def\sigma{{\Greekmath 011B}}%
\def\tau{{\Greekmath 011C}}%
\def\upsilon{{\Greekmath 011D}}%
\def\phi{{\Greekmath 011E}}%
\def\chi{{\Greekmath 011F}}%
\def\psi{{\Greekmath 0120}}%
\def\omega{{\Greekmath 0121}}%
\def\varepsilon{{\Greekmath 0122}}%
\def\vartheta{{\Greekmath 0123}}%
\def\varpi{{\Greekmath 0124}}%
\def\varrho{{\Greekmath 0125}}%
\def\varsigma{{\Greekmath 0126}}%
\def\varphi{{\Greekmath 0127}}%
\def\nabla{{\Greekmath 0272}}
\def\FindBoldGroup{%
   {\setbox0=\hbox{$\mathbf{x\global\edef\theboldgroup{\the\mathgroup}}$}}%
}
\def\Greekmath#1#2#3#4{%
    \if@compatibility
        \ifnum\mathgroup=\symbold
           \mathchoice{\mbox{\boldmath$\displaystyle\mathchar"#1#2#3#4$}}%
                      {\mbox{\boldmath$\textstyle\mathchar"#1#2#3#4$}}%
                      {\mbox{\boldmath$\scriptstyle\mathchar"#1#2#3#4$}}%
                      {\mbox{\boldmath$\scriptscriptstyle\mathchar"#1#2#3#4$}}%
        \else
           \mathchar"#1#2#3#4%
        \fi 
    \else 
        \FindBoldGroup
        \ifnum\mathgroup=\theboldgroup 
           \mathchoice{\mbox{\boldmath$\displaystyle\mathchar"#1#2#3#4$}}%
                      {\mbox{\boldmath$\textstyle\mathchar"#1#2#3#4$}}%
                      {\mbox{\boldmath$\scriptstyle\mathchar"#1#2#3#4$}}%
                      {\mbox{\boldmath$\scriptscriptstyle\mathchar"#1#2#3#4$}}%
        \else
           \mathchar"#1#2#3#4%
        \fi     	    
	  \fi}
\newif\ifGreekBold  \GreekBoldfalse
\let\SAVEPBF=\pbf
\def\pbf{\GreekBoldtrue\SAVEPBF}%
  \newcounter{equationnumber}  
  \def\mathletters{%
     \addtocounter{equation}{1}
     \edef\@currentlabel{\theequation}%
     \setcounter{equationnumber}{\c@equation}
     \setcounter{equation}{0}%
     \edef\theequation{\@currentlabel\noexpand\alph{equation}}%
  }
    \def\BibTeX{{\rm B\kern-.05em{\sc i\kern-.025em b}\kern-.08em
                 T\kern-.1667em\lower.7ex\hbox{E}\kern-.125emX}}}{}%
\def\AmS{{\protect\usefont{OMS}{cmsy}{m}{n}%
                A\kern-.1667em\lower.5ex\hbox{M}\kern-.125emS}}}{}%
\def\@@eqncr{\let\@tempa\relax
    \ifcase\@eqcnt \def\@tempa{& & &}\or \def\@tempa{& &}%
      \else \def\@tempa{&}\fi
     \@tempa
     \if@eqnsw
        \iftag@
           \@taggnum
        \else
           \@eqnnum\stepcounter{equation}%
        \fi
     \fi
     \global\tag@false
     \global\@eqnswtrue
     \global\@eqcnt\z@\cr}
\def\TCItag{\@ifnextchar*{\@TCItagstar}{\@TCItag}}
\def\@TCItag#1{%
    \global\tag@true
    \global\def\@taggnum{(#1)}}
\def\@TCItagstar*#1{%
    \global\tag@true
    \global\def\@taggnum{#1}}
\def\ExitTCILatex{\makeatother }
\if@compatibility\message{amsmath already loaded}\fi\aftergroup\ExitTCILatex}
\if@compatibility\message{amstex already loaded}\fi\aftergroup\ExitTCILatex}
\if@compatibility\message{amsgen already loaded}\fi\aftergroup\ExitTCILatex}
\let\DOTSI\relax
\def\RIfM@{\relax\ifmmode}%
\def\FN@{\futurelet\next}%
\def\iint{\DOTSI\intno@\tw@\FN@\ints@}%
\def\iiint{\DOTSI\intno@\thr@@\FN@\ints@}%
\def\iiiint{\DOTSI\intno@4 \FN@\ints@}%
\def\idotsint{\DOTSI\intno@\z@\FN@\ints@}%
\def\ints@{\findlimits@\ints@@}%
\newif\iflimtoken@
\newif\iflimits@
\def\findlimits@{\limtoken@true\ifx\next\limits\limits@true
 \else\ifx\next\nolimits\limits@false\else
 \limtoken@false\ifx\ilimits@\nolimits\limits@false\else
 \ifinner\limits@false\else\limits@true\fi\fi\fi\fi}%
\def\multint@{\int\ifnum\intno@=\z@\intdots@                          
 \else\intkern@\fi                                                    
 \ifnum\intno@>\tw@\int\intkern@\fi                                   
 \ifnum\intno@>\thr@@\int\intkern@\fi                                 
 \int}
\def\multintlimits@{\intop\ifnum\intno@=\z@\intdots@\else\intkern@\fi
 \ifnum\intno@>\tw@\intop\intkern@\fi
 \ifnum\intno@>\thr@@\intop\intkern@\fi\intop}%
\def\intic@{%
    \mathchoice{\hskip.5em}{\hskip.4em}{\hskip.4em}{\hskip.4em}}%
\def\negintic@{\mathchoice
 {\hskip-.5em}{\hskip-.4em}{\hskip-.4em}{\hskip-.4em}}%
\def\ints@@{\iflimtoken@                                              
 \def\ints@@@{\iflimits@\negintic@
   \mathop{\intic@\multintlimits@}\limits                             
  \else\multint@\nolimits\fi                                          
  \eat@}
 \else                                                                
 \def\ints@@@{\iflimits@\negintic@
  \mathop{\intic@\multintlimits@}\limits\else
  \multint@\nolimits\fi}\fi\ints@@@}%
\def\intkern@{\mathchoice{\!\!\!}{\!\!}{\!\!}{\!\!}}%
\def\plaincdots@{\mathinner{\cdotp\cdotp\cdotp}}%
\def\intdots@{\mathchoice{\plaincdots@}%
 {{\cdotp}\mkern1.5mu{\cdotp}\mkern1.5mu{\cdotp}}%
 {{\cdotp}\mkern1mu{\cdotp}\mkern1mu{\cdotp}}%
 {{\cdotp}\mkern1mu{\cdotp}\mkern1mu{\cdotp}}}%
\def\RIfM@{\relax\protect\ifmmode}
\def\text{\RIfM@\expandafter\text@\else\expandafter\mbox\fi}
\let\nfss@text\text
\def\text@#1{\mathchoice
   {\textdef@\displaystyle\f@size{#1}}%
   {\textdef@\textstyle\tf@size{\firstchoice@false #1}}%
   {\textdef@\textstyle\sf@size{\firstchoice@false #1}}%
   {\textdef@\textstyle \ssf@size{\firstchoice@false #1}}%
   \glb@settings}
\def\textdef@#1#2#3{\hbox{{%
                    \everymath{#1}%
                    \let\f@size#2\selectfont
                    #3}}}
\newif\iffirstchoice@
\def\Let@{\relax\iffalse{\fi\let\\=\cr\iffalse}\fi}%
\def\vspace@{\def\vspace##1{\crcr\noalign{\vskip##1\relax}}}%
\def\multilimits@{\bgroup\vspace@\Let@
 \baselineskip\fontdimen10 \scriptfont\tw@
 \advance\baselineskip\fontdimen12 \scriptfont\tw@
 \lineskip\thr@@\fontdimen8 \scriptfont\thr@@
 \lineskiplimit\lineskip
 \vbox\bgroup\ialign\bgroup\hfil$\m@th\scriptstyle{##}$\hfil\crcr}%
\def\Sb{_\multilimits@}%
\def\endSb{\crcr\egroup\egroup\egroup}%
\def\Sp{^\multilimits@}%
\newdimen\ex@
\def\rightarrowfill@#1{$#1\m@th\mathord-\mkern-6mu\cleaders
 \hbox{$#1\mkern-2mu\mathord-\mkern-2mu$}\hfill
 \mkern-6mu\mathord\rightarrow$}%
\def\leftarrowfill@#1{$#1\m@th\mathord\leftarrow\mkern-6mu\cleaders
 \hbox{$#1\mkern-2mu\mathord-\mkern-2mu$}\hfill\mkern-6mu\mathord-$}%
\def\leftrightarrowfill@#1{$#1\m@th\mathord\leftarrow
\mkern-6mu\cleaders
 \hbox{$#1\mkern-2mu\mathord-\mkern-2mu$}\hfill
 \mkern-6mu\mathord\rightarrow$}%
\def\overrightarrow{\mathpalette\overrightarrow@}%
\def\overrightarrow@#1#2{\vbox{\ialign{##\crcr\rightarrowfill@#1\crcr
 \noalign{\kern-\ex@\nointerlineskip}$\m@th\hfil#1#2\hfil$\crcr}}}%
\def\overleftarrow{\mathpalette\overleftarrow@}%
\def\overleftarrow@#1#2{\vbox{\ialign{##\crcr\leftarrowfill@#1\crcr
 \noalign{\kern-\ex@\nointerlineskip}$\m@th\hfil#1#2\hfil$\crcr}}}%
\def\overleftrightarrow{\mathpalette\overleftrightarrow@}%
\def\overleftrightarrow@#1#2{\vbox{\ialign{##\crcr
   \leftrightarrowfill@#1\crcr
 \noalign{\kern-\ex@\nointerlineskip}$\m@th\hfil#1#2\hfil$\crcr}}}%
\def\underrightarrow{\mathpalette\underrightarrow@}%
\def\underrightarrow@#1#2{\vtop{\ialign{##\crcr$\m@th\hfil#1#2\hfil
  $\crcr\noalign{\nointerlineskip}\rightarrowfill@#1\crcr}}}%
\def\underleftarrow{\mathpalette\underleftarrow@}%
\def\underleftarrow@#1#2{\vtop{\ialign{##\crcr$\m@th\hfil#1#2\hfil
  $\crcr\noalign{\nointerlineskip}\leftarrowfill@#1\crcr}}}%
\def\underleftrightarrow{\mathpalette\underleftrightarrow@}%
\def\underleftrightarrow@#1#2{\vtop{\ialign{##\crcr$\m@th
  \hfil#1#2\hfil$\crcr
 \noalign{\nointerlineskip}\leftrightarrowfill@#1\crcr}}}%
\def\qopnamewl@#1{\mathop{\operator@font#1}\nlimits@}
\let\nlimits@\displaylimits
\def\setboxz@h{\setbox\z@\hbox}
\def\varlim@#1#2{\mathop{\vtop{\ialign{##\crcr
 \hfil$#1\m@th\operator@font lim$\hfil\crcr
 \noalign{\nointerlineskip}#2#1\crcr
 \noalign{\nointerlineskip\kern-\ex@}\crcr}}}}
 \def\rightarrowfill@#1{\m@th\setboxz@h{$#1-$}\ht\z@\z@
  $#1\copy\z@\mkern-6mu\cleaders
  \hbox{$#1\mkern-2mu\box\z@\mkern-2mu$}\hfill
  \mkern-6mu\mathord\rightarrow$}
\def\leftarrowfill@#1{\m@th\setboxz@h{$#1-$}\ht\z@\z@
  $#1\mathord\leftarrow\mkern-6mu\cleaders
  \hbox{$#1\mkern-2mu\copy\z@\mkern-2mu$}\hfill
  \mkern-6mu\box\z@$}
\def\projlim{\qopnamewl@{proj\,lim}}
\def\injlim{\qopnamewl@{inj\,lim}}
\def\varinjlim{\mathpalette\varlim@\rightarrowfill@}
\def\varprojlim{\mathpalette\varlim@\leftarrowfill@}
\def\varliminf{\mathpalette\varliminf@{}}
\def\varliminf@#1{\mathop{\underline{\vrule\@depth.2\ex@\@width\z@
   \hbox{$#1\m@th\operator@font lim$}}}}
\def\varlimsup{\mathpalette\varlimsup@{}}
\def\varlimsup@#1{\mathop{\overline
  {\hbox{$#1\m@th\operator@font lim$}}}}
\def\align{\@verbatim \frenchspacing\@vobeyspaces \@alignverbatim
You are using the "align" environment in a style in which it is not defined.}
\let\csname endalign*\endcsname =\endtrivlist
\def\alignat{\@verbatim \frenchspacing\@vobeyspaces \@alignatverbatim
You are using the "alignat" environment in a style in which it is not defined.}
\let\csname endalignat*\endcsname =\endtrivlist
\def\xalignat{\@verbatim \frenchspacing\@vobeyspaces \@xalignatverbatim
You are using the "xalignat" environment in a style in which it is not defined.}
\let\csname endxalignat*\endcsname =\endtrivlist
\def\gather{\@verbatim \frenchspacing\@vobeyspaces \@gatherverbatim
You are using the "gather" environment in a style in which it is not defined.}
\let\csname endgather*\endcsname =\endtrivlist
\def\multiline{\@verbatim \frenchspacing\@vobeyspaces \@multilineverbatim
You are using the "multiline" environment in a style in which it is not defined.}
\let\csname endmultiline*\endcsname =\endtrivlist
\def\arrax{\@verbatim \frenchspacing\@vobeyspaces \@arraxverbatim
You are using a type of "array" construct that is only allowed in AmS-LaTeX.}
\def\tabulax{\@verbatim \frenchspacing\@vobeyspaces \@tabulaxverbatim
You are using a type of "tabular" construct that is only allowed in AmS-LaTeX.}
\let\csname endarrax*\endcsname =\endtrivlist
\let\csname endtabulax*\endcsname =\endtrivlist
 \def\endequation{%
     \ifmmode\ifinner 
      \iftag@
        \addtocounter{equation}{-1} 
        $\hfil
           \displaywidth\linewidth\@taggnum\egroup \endtrivlist
        \global\tag@false
        \global\@ignoretrue   
      \else
        $\hfil
           \displaywidth\linewidth\@eqnnum\egroup \endtrivlist
        \global\tag@false
        \global\@ignoretrue 
      \fi
     \else   
      \iftag@
        \addtocounter{equation}{-1} 
        \eqno \hbox{\@taggnum}
        \global\tag@false%
        $$\global\@ignoretrue
      \else
        \eqno \hbox{\@eqnnum}
        $$\global\@ignoretrue
      \fi
     \fi\fi
 } 
 \newif\iftag@ \tag@false
 \def\TCItag{\@ifnextchar*{\@TCItagstar}{\@TCItag}}
 \def\@TCItag#1{%
     \global\tag@true
     \global\def\@taggnum{(#1)}}
 \def\@TCItagstar*#1{%
     \global\tag@true
     \global\def\@taggnum{#1}}
     \def\tag{\@ifnextchar*{\@tagstar}{\@tag}}
     \def\@tag#1{%
         \global\tag@true
         \global\def\@taggnum{(#1)}}
     \def\@tagstar*#1{%
         \global\tag@true
         \global\def\@taggnum{#1}}
\begin{document} 
 
\title{On the existence of quasipattern solutions of the Swift--Hohenberg 
equation} 
\author{G. Iooss$^1$ \and A. M. Rucklidge$^2$ \and $^{1}${\small I.U.F., 
Universit\'e de Nice, Labo J.A.Dieudonn\'{e}} \and {\small Parc Valrose, 
F-06108 Nice, France} \and $^{2}${\small Department of Applied Mathematics, 
University of Leeds,} \and {\small Leeds LS2 9JT, England} \\ 
{\small gerard.iooss@unice.fr, A.M.Rucklidge@leeds.ac.uk}} 
\maketitle 
 
\begin{abstract} 
Quasipatterns (two-dimensional patterns that are quasiperiodic in any 
spatial direction) remain one of the outstanding problems of pattern 
formation. As with problems involving quasiperiodicity, there is a small 
divisor problem. In this paper, we consider $8$-fold, $10$-fold, $12$-fold, 
and higher order quasipattern solutions of the Swift--Hohenberg equation. We 
prove that a formal solution, given by a divergent series, may be used to 
build a smooth quasiperiodic function which is an approximate solution of 
the pattern-forming PDE up to an exponentially small error.\newline 
~\newline 
Keywords: bifurcations, quasipattern, small divisors, Gevrey series\newline 
~\newline 
AMS: 35B32, 35C20, 40G10, 52C23 
\end{abstract} 
 
 
 
 
 
 
\section{Introduction} 
 
\label{sec:intro} 
 
Quasipatterns remain one of the outstanding problems of pattern formation. 
These are two-dimensional patterns that have no translation symmetries and 
are quasiperiodic in any spatial direction (see figure~\ref{fig:quasipattern}).
In spite of the lack of translation symmetry (in contrast to periodic 
patterns), the spatial Fourier transforms of quasipatterns have discrete 
rotational order (most often, $8$, $10$ or $12$-fold). Quasipatterns were first 
discovered in nonlinear pattern-forming systems in the Faraday wave 
experiment~\cite{Christiansen1992,Edwards1994}, in which a layer of fluid is 
subjected to vertical oscillations. Since their discovery, they have also 
been found in nonlinear optical systems~\cite{Herrero1999}, shaken 
convection~\cite{Volmar1997,Rogers2005} and in liquid crystals~\cite%
{Lifshitz2007a}, as well as being investigated in detail in large aspect 
ratio Faraday wave experiments~\cite%
{Binks1997,Binks1997b,Kudrolli1998,Arbell2002}. 
 
\begingroup  
 
\begin{figure}[tp] 
\begin{center} 
\includegraphics[width=0.9\hsize]{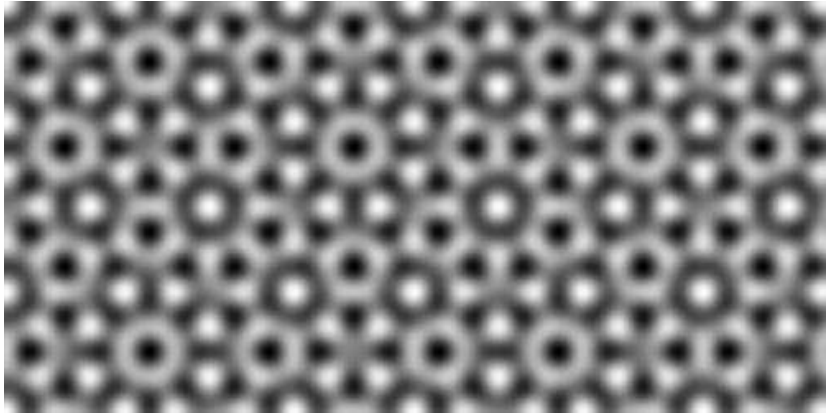} 
\end{center} 
\caption{Example $8$-fold quasipattern. This is an approximate solution of 
the Swift--Hohenberg equation~(\protect\ref{eq:shtime}) with $\protect\mu %
=0.1$, computed by using Newton iteration to find an equilibrium solution of 
the PDE truncated to wavenumbers satisfying $|\mathbf{k}|\leq \protect\sqrt{5%
}$ and to the quasilattice $\Gamma _{27}$.} 
\label{fig:quasipattern} 
\end{figure} 
 
\endgroup  
 
In many of these experiments, the domain is large compared to the size of 
the pattern, and the boundaries appear to have little effect. Furthermore, 
the pattern is usually formed in two directions ($x$ and $y$), while the 
third direction ($z$) plays little role. Mathematical models of the 
experiments are therefore often posed with two unbounded directions, and the 
basic symmetry of the problem is $E(2)$, the Euclidean group of rotations, 
translations and reflections of the $(x,y)$ plane. 
 
The mathematical basis for understanding the formation of periodic patterns 
is well founded in \emph{equivariant bifurcation theory}~\cite%
{Golubitsky1988}. With spatially periodic patterns, the pattern-forming 
problem (usually a PDE) is posed in a periodic spatial domain instead of the 
infinite plane. Spatially periodic patterns have Fourier expansions with 
wavevectors that live on a lattice. There is a parameter~$\mu $ in the PDE, 
and at the point of onset of the pattern-forming instability ($\mu =0$), the 
primary modes have zero growth rate and all other modes on the lattice have 
negative growth rates that are bounded away from zero. In this case, the 
infinite-dimensional PDE can be reduced rigorously to a finite-dimensional 
set of equations for the amplitudes of the primary modes~\cite%
{Carr1981,Chossat1994,Iooss1998a, Guckenheimer1983b,Vanderbauwhede1992}, and 
existence of periodic patterns as solutions of the pattern-forming PDE can 
be proved. The coefficients of leading order terms in these amplitude 
equations can be calculated and the values of these coefficients determine 
how the amplitude of the pattern depends on the parameter~$\mu $, and which 
of the regular patterns that fit into the periodic domain are stable. Due to 
symmetries, the solutions of the PDE are in general expressed as power 
series in~$\sqrt{\mu }$, which can be computed, and which have a non-zero 
radius of convergence. 
 
In contrast, quasipatterns do not fit into any spatially periodic domain and 
have Fourier expansions with wavevectors that live on a \emph{quasilattice} 
(defined below). At the onset of pattern formation, the primary modes have 
zero growth rate but there are other modes on the quasilattice that have 
growth rates arbitrarily close to zero, and techniques that are used for 
periodic patterns cannot be applied. These small growth rates appear as  
\emph{small divisors}, as seen below, and correspond at criticality ($\mu 
=0) $ to the fact that for the linearized operator at the origin (denoted $-%
\mathcal{L}_{0}$ below), the 0 eigenvalue is not isolated in the spectrum. 
 
If weakly nonlinear theory is applied in this case without regard to its 
validity, this results in a divergent power series~\cite{Rucklidge2003}, and 
this approach does not lead to a convincing argument for the existence of 
quasipattern solutions of the pattern-forming problem. 
 
This paper is primarily concerned with proving the \emph{existence} of 
quasipatterns as \emph{approximate} 
steady solutions of the simplest pattern-forming PDE, the 
Swift--Hohenberg equation:  
 \begin{equation} 
 \frac{\partial U}{\partial t}=\mu U-(1+\Delta )^{2}U-U^{3}  \label{eq:shtime} 
 \end{equation}
where $U(x,y,t)$ is real and $\mu $ is a parameter. We do not prove the
existence of quasipatterns as exact steady solutions of the~\hbox{PDE}. We are
not concerned with the \emph{stability} of these quasipatterns: in fact, they
are almost certainly unstable in the Swift--Hohenberg equation. Stability of a
pattern depends on the coefficients in the amplitude equations (as computed
using weakly nonlinear theory). In the Faraday wave experiment, and in more
general parametrically forced pattern forming problems, resonant mode
interactions have been identified as the primary mechanism for the stabilisation
of quasipatterns and other complex patterns (see~\cite{Rucklidge2009} and
references therein). These mode interactions are not present in the
Swift--Hohenberg equation, though their presence would not significantly alter
our results.
 
In many situations involving a combination of nonlinearity and 
quasiperiodicity, small divisors can be handled using \emph{hard implicit 
function theorems}~\cite{delaLlave2001}, of which the KAM theorem is an 
example. Unfortunately, there is as yet no successful existence proof for 
quasipatterns using this approach, although these ideas have been applied 
successfully to a range of small-divisor problems arising in other types of 
PDEs~\cite{Craig1993,Iooss2005d,Iooss2009}. There are also alternative 
approaches to describing quasicrystals based on Penrose tilings and on 
projections of high-dimensional regular lattices onto low-dimensional 
spaces~\cite{Janot1994}. 
 
We take a different approach in this paper: we show how the divergent power 
series that is generated by the naive application of weakly nonlinear theory 
can be used to generate a smooth quasiperiodic function that (a)~shares the 
same asymptotic expansion as the naive divergent series, and (b)~satisfies 
the PDE~(\ref{eq:shtime}) with an exponentially small error as $\mu$ tends 
to~$0$. This approach is based on summation techniques for divergent power 
series: see~\cite{Ramis1996,Candelpergher2006,Balser1994} for other 
examples. In order to make the paper self-contained, we put in Appendices 
some proofs of useful results, even though they are ``known''. 
 
In section~\ref{sec:Diophantine}, we define the quasilattice and derive 
Diophantine bounds for the small divisors that will arise in the nonlinear 
problem, for $Q$-fold quasilattices: Lemma~\ref{dioph_estimate} extends the 
results of~\cite{Rucklidge2003} covering the cases $Q=8$, $10$, $12$, to any 
even~$Q\geq 8$. We then compute in section~\ref{sec:formal} (following~\cite%
{Rucklidge2003}) the power series in~$\sqrt{\mu }$ for a formal 
$Q$-quasipattern solution~$U$ of the Swift--Hohenberg equation, where $\mu $ is 
the bifurcation parameter in the~\hbox{PDE}. 
 
In section~\ref{sec:spaces}, we define an appropriate function space~${%
\mathcal{H}}_{s}$: each term in the formal power series~$U$ is in this 
space. In section~\ref{sec:Gevrey}, we prove (Theorem~\ref{gevreythm}) 
bounds on the norm of each term in the formal power series solution of 
the~\hbox{PDE}. In the $Q$-fold case, the norm of the $\mu ^{n+\frac{1}{2}}$ 
term in the power series for the quasipattern is bounded by a constant 
times~$K^{n}(n!)^{4l}$, where $K$ is a constant and $l+1$ is the order of the 
algebraic number $\omega =2\cos (2\pi /Q)$, which is also half of \emph{%
Euler's Totient function}~$\varphi (Q)$ ($l=1$ for $Q=8$, $10$ and $12$, $%
l=2 $ for $Q=14$ and $18$, $l=3$ for $Q=16$, $20$, $24$, $30$, \dots ). This 
result was announced in~\cite{Rucklidge2003} for $Q\leq 12$, and is extended 
here to $Q\geq 14$. With a bound that grows in this way with~$n$, the power 
series is \emph{Gevrey-}$4l$, taking values in a space of $Q$-fold 
quasiperiodic functions.
 
In section~\ref{sec:Borel}, for convenience, we consider the cases $Q=8$, $10 
$ and~$12$. We introduce a small parameter~$\zeta $ related to the 
bifurcation parameter~$\mu $ by $\zeta =\root4\of\mu $, so that the norm of 
the $\zeta ^{4n+2}$ term in the power series for~$U$ is also bounded by a 
constant times $K^{n}(n!)^{4}<K^{n}(4n!)$. We use the \emph{Borel 
transform}~$\widehat{U}$ of the formal solution~$U$: 
the $\zeta ^{4n+2}$ term in the 
power series for $\widehat{U}$ is the $\zeta ^{4n+2}$ term in the power 
series for~$U$ divided by $(4n+2)!$. With this definition, $\widehat{U}$ is 
an analytic function of $\zeta $ in the disk $|\zeta |<K^{-1/4}$, and for 
each $\zeta $ in this disk, $\widehat{U}$ is a $Q$-fold quasiperiodic 
function of $(x,y)$ in the space~${\mathcal{H}}_{s}$. Of course the new 
function $\widehat{U}$ does not satisfy the original PDE, but we prove that 
it satisfies a transformed PDE (Theorem~\ref{BorelTransfthm}). 
 
The next stage would be to invert the Borel transform: however, the usual 
inverse Borel transform is a line integral (related to the Laplace 
transform) taking $\zeta $ from $0$ to~$\infty $, and $\widehat{U}$ is only 
an analytic function of $\zeta $ for $\zeta $ in a disk. If the definition 
of $\widehat{U}$ could be extended to a line in the complex~$\zeta $ plane, 
the inverse Borel transform would provide a quasiperiodic solution of 
the~\hbox{PDE} -- \emph{this remains an open problem}. 
 
Since the full inverse Borel transform cannot be used, in section~\ref%
{sec:TruncatedLaplace}, we use a truncated integral to define~$\bar{U}(\nu )$. 
This involves integrating~$\zeta $ along a line segment inside the disk 
where $\widehat{U}$~is analytic, weighted by an exponential that decays 
rapidly as $\nu \rightarrow 0$. We show that $\bar{U}(\nu )$ and $U(\mu )$ 
have the same power series expansion when we set $\nu =\root4\of\mu $,  
but unlike $U$, $\bar{U}(\nu )$ is a $C^{\infty }$ function of~$\nu $ in a 
neighbourhood of~$0$, taking values in~${\mathcal{H}}_{s}$. In other words, $%
\bar{U}(\mu ^{1/4})$ is a $Q$-fold quasiperiodic function of $(x,y)$ for small 
enough~$\mu $. This function is not an exact solution of the 
Swift--Hohenberg PDE, but we show in Theorem~\ref{sh_exp_estim} that the 
residual, when $\bar{U}(\mu ^{1/4})$ is substituted into the PDE, is 
exponentially small as $\mu \rightarrow 0$. Finally, in the last 
section~\ref{sec:initialvalueproblem},
we show that by taking as initial data the above approximate solution, the 
time dependent solution $U(t)$ stays exponentially close to the approximate 
solution for a long time, of the order $O(1/\mu ^{1+1/4l})$. 
 
In conclusion, we have shown that, for any even $Q\geq 8$, the divergent 
power series~$U(\mu )$ generated by the naive application of weakly 
nonlinear theory can be used to find a smooth $Q$-fold quasiperiodic
function~$\bar{U}(\mu ^{1/4l})$ that shares the same asymptotic expansion
as~$U$, and that satisfies the PDE with an exponentially small error.
 
This technique does not prove the existence of a quasiperiodic solution of 
the~\hbox{PDE}. However, this is a first step towards an existence proof for 
quasiperiodic solutions of PDEs like~(\ref{eq:shtime}). In particular, we 
may hope to use $\bar{U}$ as a starting point for the Newton iteration 
process that would form part of an existence proof using the Nash--Moser 
theorem. As an aside, ordinary numerical Newton iteration succeeds in 
finding an approximate solution of the truncated PDE for values of~$\mu $ 
where the formal power series has already diverged, as in figure~\ref%
{fig:amplitude}. 
 
An analogous result may be proved for example in the Rayleigh--B\'{e}nard 
convection problem (see \cite{Iooss2009a}), using the fact that the 
dispersion equation possesses the same property as in the present model: at 
the critical value of the parameter there is a circle of critical 
wavevectors in the plane. The method might also extend to the case of the 
Faraday wave experiment by considering fixed points of a stroboscopic map. 
 
In the present work we consider quasilattices generated by regularly spaced 
wavevectors on the unit circle, and solutions invariant under $2\pi/Q$ 
rotations. It might be worth studying the case of solutions having less 
symmetry on the same quasilattice, or quasilattices (still dense in the 
plane) generated by wavevectors that are irregularly spaced. 
 
\bigskip 
 
\emph{Acknowledgments}: We are grateful for useful discussions with Sylvie 
Benzoni, W. Crawley-Boevey, Andr\'{e} Galligo, Ian Melbourne, Jonathan 
Partington, David Sauzin and Gene Wayne. We are also grateful to the Isaac 
Newton Institute for Mathematical Sciences, where some of this work was 
carried out. 
 
\section{Small divisors: Quasilattices and Diophantine bounds} 
 
\label{sec:Diophantine} 
 
Let $Q\in\mathbb{N}$ ($Q\geq8$) be the order of a quasipattern and define 
wavevectors  
\begin{equation*} 
\mathbf{k}_{j}= \left( \cos \left( 2\pi\frac{j-1}{Q}\right), \sin \left( 2\pi 
\frac{j-1}{Q}\right) \right), \qquad j=1,2,\dots,Q 
\end{equation*} 
(see figure~\ref{fig:quasilattice}a). We define the \emph{quasilattice}~$%
\Gamma\subset\mathbb{R}^{2}$ to be the set of points spanned by integer 
combinations $\mathbf{k}_{\mathbf{m}}$ of the form  
\begin{equation} 
\mathbf{k}_{\mathbf{m}}=\sum_{j=1}^{Q}m_{j}\mathbf{k}_{j}, \qquad \text{where}
 \qquad \mathbf{m}=(m_{1},m_{2},\dots,m_{Q})\in\mathbb{N}^{Q}. 
\label{eq:bfkm} 
\end{equation} 
The set $\Gamma$ is dense in $\mathbb{R}^{2}$. 
 
We are interested in real functions $U(\mathbf{x})$ that are linear 
combinations of Fourier modes $e^{i\mathbf{k}\cdot\mathbf{x}}$, with $%
\mathbf{x}\in\mathbb{R}^{2}$ and $\mathbf{k}\in\Gamma$. If $U(\mathbf{x})$ 
is to be a real function, we need $Q$ to be even, with $\mathbf{k}_{j}$ and $%
-\mathbf{k}_{j}$ in $\Gamma$, hence the quasilattice~$\Gamma$ is symmetric 
with respect to the origin. 
 
\begin{figure}[tp] 
\hbox to \hsize{\hfil  
 \hbox to 0.3\hsize{\hfil (a)\hfil}\hfil              
 \hbox to 0.3\hsize{\hfil (b)\hfil}\hfil  
 \hbox to 0.3\hsize{\hfil (c)\hfil}\hfil} \vspace{0.5ex}  
\hbox to \hsize{\hfil                                           
  \mbox{\includegraphics[width=0.3\hsize]{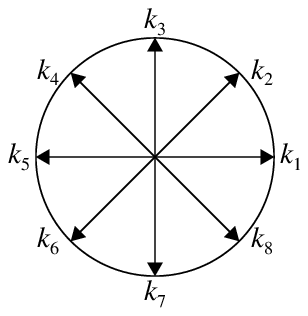}}\hfil  
  \mbox{\includegraphics[width=0.3\hsize]{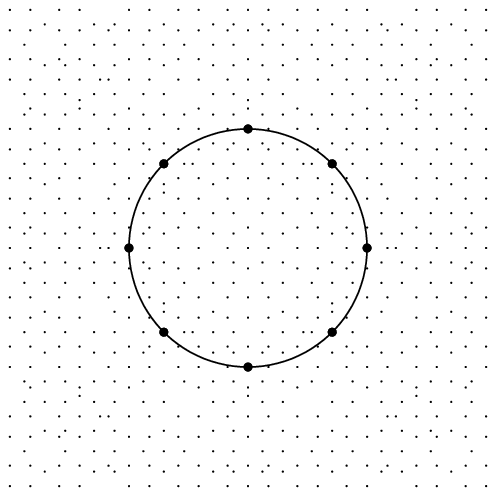}}\hfil  
  \mbox{\includegraphics[width=0.3\hsize]{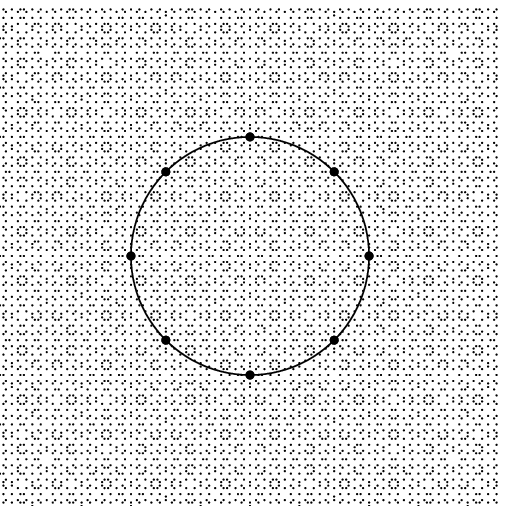}}  
\hfil} 
\caption{Example quasilattice with~$Q=8$, after~\protect\cite{Rucklidge2003}.
(a) The $8$ wavevectors with $|\mathbf{k}|=1$ that form the basis of the 
quasilattice. (b,c) The truncated quasilattices $\Gamma_9$ and~$\Gamma_{27}$. 
The small dots mark the positions of combinations of up to~$9$ or $27$ of 
the $8$~basis vectors on the unit circle. Note how the density of points 
increases with~$N_{\mathbf{k}}$.} 
\label{fig:quasilattice} 
\end{figure} 
 
In the calculations that follow, we will require Diophantine bounds on the 
magnitude of the small divisors. We see below that the small divisors are $%
\left\vert |\mathbf{k}|^{2}-1\right\vert $, for $\mathbf{k}\in \Gamma $. To 
compute the required lower bound, we start with  
\begin{equation*} 
|\mathbf{k}_{\mathbf{m}}|^{2}=\sum_{1\leq j_{1}<j_{2}\leq 
Q}2m_{j_{1}}m_{j_{2}}\cos (j_{1}-j_{2})\theta _{0}+\sum_{1\leq j\leq 
Q}m_{j}^{2}, 
\end{equation*}
where $\theta _{0}=2\pi /Q$. Let us define
\begin{equation*} 
\omega =2\cos \theta _{0}
\end{equation*}
We now show how $\left\vert |\mathbf{k}_{\mathbf{m}}|^{2}-1\right\vert $ can be expressed as
a polynomial in~$\omega$.

First, we can express $2\cos p\theta _{0}$ as a polynomial in $\omega $, for $%
1\leq p\leq Q-1$:
\begin{equation*} 
2\cos p\theta _{0}=\omega ^{p}-p\omega ^{p-2}+\frac{p(p-3)}{2}\omega 
^{p-4}\dots 
\end{equation*}
with integer coefficients which only depend on~$Q$ (easy proof by 
induction), and the leading coefficient being 1, and since $\cos 
(p+Q/2)\theta _{0}=-\cos \theta _{0}$, this leads to
\begin{equation} 
|\mathbf{k}_{\mathbf{m}}|^{2}=\sum_{0\leq r\leq Q/2-1}q_{r}^{\prime }\omega 
^{r},\text{ \ }q_{r}^{\prime }\in  
\mathbb{Z} 
,r=0,1,\dots,Q/2-1,  \label{k_m^2} 
\end{equation}
where the integers $q_{r}^{\prime }$ are quadratic forms of $\mathbf{m}$. 

Next, we use the property that $\omega $ is an \emph{algebraic integer}, since
it is the sum of two algebraic integers $e^{i\theta_{0}}+ e^{(Q-1)i\theta_{0}}$. 
More precisely, $\omega $~is a root of the (minimal) polynomial $P(\omega )$ with 
integer coefficients, with leading coefficient equal to~$1$, and which is of 
degree $\varphi (Q)/2:=l+1$, where $\varphi (Q)$ is Euler's Totient 
function~\cite{Barbeau1989}, the number of positive integers $j<Q$ such that $j$ 
and~$Q$ are relatively prime. 
For example, $\varphi (14)=6$ since the $6$ numbers  
$1$, $3$, $5$, $9$, $11$ and $13$ have no factors in common with $14$, and 
so $l+1=3$ in the case $Q=14$. In the cases $Q=8$, $10$ and $12$, the 
irrational numbers $\omega =2\cos \theta _{0}$ are $\sqrt{2}$, $\frac{1+%
\sqrt{5}}{2}$ and $\sqrt{3}$: these are quadratic algebraic numbers ($l+1=2$), 
while for $Q=14$, $\omega $ is cubic. 
 
Finally, dividing (\ref{k_m^2}) by $P(\omega )$ we obtain a remainder 
of degree $l$ such that  
\begin{equation} 
|\mathbf{k}_{\mathbf{m}}|^{2}-1=q_{0}+\omega q_{1}+\dots +\omega ^{l}q_{l} 
\label{k^2-1} 
\end{equation}
where $q_{0}+1$ and $q_{j}$, $j=1,\dots ,l$ are integer-valued 
\emph{quadratic forms} of $\mathbf{m}$. 
 
Define $|\mathbf{m}|=\sum_{j}m_{j}$, then, for a given wavevector $\mathbf{k}%
\in \Gamma $, we define the order $N_{\mathbf{k}}$ of $\mathbf{k}$ by  
\begin{equation} 
N_{\mathbf{k}}=\min \{|\mathbf{m}|;\mathbf{k}=\mathbf{k}_{\mathbf{m}},%
\mathbf{k}_{\mathbf{m}}\in \Gamma \}.  \label{defN_k} 
\end{equation}
The reason for this is that, for a given $\mathbf{k}$, there is an infinite 
set of $\mathbf{m}$'s satisfying $\mathbf{k}=\mathbf{k}_{\mathbf{m}}$. For 
example, we could increase $m_{j}$ and $m_{j+Q/2}$ by $1$: this increases $|%
\mathbf{m}|$ by $2$ but does not change $\mathbf{k}_{\mathbf{m}}$. Whenever 
solutions are computed numerically, it is necessary to use only a finite 
number of Fourier modes, so we define the \emph{truncated quasilattice}~$%
\Gamma _{N}$ to be:  
\begin{equation} 
\Gamma _{N}=\left\{ \mathbf{k}\in \Gamma :N_{\mathbf{k}}\leq N\right\} . 
\label{truncated_quasilattice} 
\end{equation}
Figure~\ref{fig:quasilattice}(b,c) shows the truncated quasilattices $\Gamma 
_{9}$ and $\Gamma _{27}$ in the case~$Q=8$. For example, we have in the case  
$Q=8$:  
\begin{eqnarray} 
\left\vert \mathbf{k}_{\mathbf{m}}\right\vert ^{2} &=&\sum_{j=1}^{4}{%
m_{j}^{\prime }}^{2}+\sqrt{2}\left( m_{1}^{\prime }m_{2}^{\prime 
}+m_{2}^{\prime }m_{3}^{\prime }+m_{3}^{\prime }m_{4}^{\prime 
}-m_{4}^{\prime }m_{1}^{\prime }\right) ,  \label{|k|^2} \\ 
N_{\mathbf{k}} &=&\sum_{j=1}^{4}|{m_{j}^{\prime }}|  \label{Nk} 
\end{eqnarray}
where $m_{j}^{\prime }=m_{j}-m_{j+Q/2}$. More generally we have  
\begin{equation*} 
N_{\mathbf{k}}\leq \sum_{j=1}^{Q/2}|{m_{j}^{\prime }}|. 
\end{equation*}
The above inequality can occur strictly (for example) in the case $Q=12$, 
because only $4$ of the $12$ vectors $\mathbf{k}_{j}$ are rationally 
independent in this case. More generally only $\varphi (Q)$ vectors $\mathbf{%
k}_{j}$ are rationally independent~\cite{Washington1997}. 
 
Now, the quantity in (\ref{k^2-1}), $|q_{0}+\omega q_{1}+\dots +\omega
^{l}q_{l}|$, may be as small as we want for good choices of large integers
$q_{j}$, and we need to have a lower bound when this is different from~$0$.
 
In~\cite{Rucklidge2003}, it was proved that in the cases $Q=8$, $10$ and $12$, 
there is a constant $c>0$ such that  
\begin{equation} 
\left||\mathbf{k}|^{2}-1\right|\geq\frac{c}{N_{\mathbf{k}}^2}, \qquad 
\text{for any $\mathbf{k}\in\Gamma$ with $|\mathbf{k}|\neq 1$.}  \label{dioph0} 
\end{equation} 
The proof relies on the fact that for quadratic algebraic numbers, there 
exists $C>0$ such that  
\begin{equation*} 
|p-\omega q|\geq \frac{C}{q} 
\end{equation*} 
holds for any $(p,q)\in\mathbb{Z}^{2}$, $q\neq 0$~\cite{Hardy1960}. Now 
using the fact that $q$ is quadratic in $\mathbf{m}$ (see~(\ref{|k|^2})) we 
have  
\begin{equation} 
q\leq QN_{\mathbf{k}}^{2}  \label{ineg_q} 
\end{equation} 
from which (\ref{dioph0}) can be deduced. 
 
The Diophantine bound~(\ref{dioph0}) may be extended to any even $Q\geq 8$, 
and there exists $c>0$ depending only on $Q$, such that for any $\mathbf{k}%
\in \Gamma $, with $|\mathbf{k}|\neq 1$, we have  
\begin{equation} 
\left\vert |\mathbf{k}|^{2}-1\right\vert \geq \frac{c}{N_{\mathbf{k}}^{2l}}. 
\label{dioph1} 
\end{equation}
To show this, we use the following known result (see \cite{Cohen1993}) 
proved in Appendix~\ref{app:dioph_estimate}: 
 
\begin{lemma} 
\label{dioph_estimate} Let $\omega $ be an algebraic number of order $l+1$, 
that is, a solution of $P(\omega )=0$ where $P$ is a polynomial of degree $%
l+1$ with integer coefficients, that is irreducible on $\mathbb{Q}$. 
Then, there exists a constant $C>0$ such that 
for any $\mathbf{q}=(q_{0},q_{1},\dots ,q_{l})\in \mathbb{Z}^{l+1}\backslash 
\{0\}$, the following estimate  
\begin{equation} 
|q_{0}+q_{1}\omega +q_{2}\omega ^{2}+\dots +q_{l}\omega ^{l}|\geq \frac{C}{|%
\mathbf{q}|^{l}}  \label{dioph_omega} 
\end{equation}
holds, where $|\mathbf{q}|=\sum_{0\leq j\leq l}|q_{j}|$. 
\end{lemma}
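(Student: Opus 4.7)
The plan is to use a classical Liouville-type argument based on the algebraic norm of $Q(\omega)$, where I write $Q(X) := q_0 + q_1 X + \cdots + q_l X^l \in \mathbb{Z}[X]$, a nonzero polynomial of degree at most $l$; the goal is to bound $|Q(\omega)|$ from below. First I would introduce the complete set of Galois conjugates $\omega = \omega_1,\omega_2,\ldots,\omega_{l+1}$ of $\omega$, namely the complex roots of $P$, which are pairwise distinct since $P$ is irreducible over $\mathbb{Q}$ and hence separable in characteristic $0$. Denoting by $a$ the leading coefficient of $P$, the key quantity is the product
$$\Pi := \prod_{i=1}^{l+1} Q(\omega_i).$$

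The heart of the proof is to show that $a^l \Pi$ is a \emph{nonzero integer}, so that in particular $|\Pi| \geq |a|^{-l}$. Nonzeroness uses the irreducibility of $P$ together with $\deg Q \leq l < \deg P$: if some $Q(\omega_i)$ vanished, then $P$, being the minimal polynomial of each $\omega_i$, would divide $Q$, forcing $Q \equiv 0$, a contradiction. Integrality follows from the fundamental theorem of symmetric polynomials: $\Pi$ is symmetric in the $\omega_i$ and has integer coefficients when expanded, so it can be written as an integer-coefficient polynomial in the elementary symmetric functions $\sigma_k = (-1)^k a_{l+1-k}/a$ of the roots of $P$; these carry at most $l$ powers of $a$ in the denominator, so $a^l \Pi \in \mathbb{Z}$ (this is essentially the resultant $\mathrm{Res}(P,Q)$ up to sign).

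Next I would bound the conjugate factors above. Setting $M := \max_{1\leq i \leq l+1} \max(1,|\omega_i|)$, a constant depending only on $\omega$, the triangle inequality gives $|Q(\omega_i)| \leq |\mathbf{q}|\,M^l$ for each $i$. Isolating $|Q(\omega)| = |Q(\omega_1)|$ in the lower bound $\prod_{i=1}^{l+1} |Q(\omega_i)| \geq |a|^{-l}$ then yields
$$|Q(\omega)| \;\geq\; \frac{1}{|a|^l \prod_{i=2}^{l+1} |Q(\omega_i)|} \;\geq\; \frac{1}{|a|^l\bigl(M^l |\mathbf{q}|\bigr)^l} \;=\; \frac{C}{|\mathbf{q}|^l},$$
with $C = \bigl(|a|\,M^l\bigr)^{-l} > 0$ depending only on $P$. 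The main obstacle is the nonzero-integer step: nonzeroness genuinely requires the irreducibility hypothesis (otherwise some conjugate of $\omega$ could be a root of $Q$), and integrality relies on the denominator-control provided by symmetric-function theory applied to the roots of $P$. Once this is secured, the remaining estimates are an elementary application of the triangle inequality to the $l$ extra conjugate factors.
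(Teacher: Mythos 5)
Your proposal is correct, and it proves the lemma by the classical Liouville norm/resultant argument rather than by the Bezout--Cramer computation used in the paper. The paper writes down an explicit Bezout identity $A(x)P(x)+B(x)Q(x)=1$ over $\mathbb{Q}[x]$, realizes the cofactor $B$ as the solution of a $(2l+1)\times(2l+1)$ integer linear system (the Sylvester-type matrix $\mathbf{M}$) and inverts by Cramer's rule; evaluating at $x=\omega$ kills the $A\,P$ term and gives $B(\omega)Q(\omega)=1$, so $|Q(\omega)|=|d|/|B_0(\omega)|$ where $d=\det\mathbf{M}$ is a nonzero integer and $B_0$ has integer coefficients of degree $l$ in $\mathbf{q}$, whence $|Q(\omega)|\geq C/|\mathbf{q}|^{l}$. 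Your argument instead bounds $|Q(\omega)|$ from below by observing that the full conjugate product $a^{l}\prod_{i}Q(\omega_i)$ is a nonzero integer (nonzero by irreducibility and $\deg Q<\deg P$, integral because $a^{\deg Q}\prod_i Q(\omega_i)=\mathrm{Res}(P,Q)$ is the determinant of an integer matrix), and then bounding each of the $l$ extra conjugate factors above by $M^{l}|\mathbf{q}|$. The two routes are two presentations of the same underlying integer: the determinant $d$ in the paper is exactly the resultant $\mathrm{Res}(P,Q)$ that you are using implicitly, so both proofs trade on the same ``nonzero integer implies $\geq 1$ in absolute value'' step. Your version is arguably the more streamlined and conceptual one and is the standard textbook Liouville argument, whereas the paper's version makes the rational Bezout cofactor and its degree-in-$\mathbf{q}$ structure fully explicit. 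One small point to tighten in a written-up version: when you appeal to the fundamental theorem of symmetric polynomials for the integrality of $a^{l}\Pi$, it is cleaner to invoke the resultant formula $\mathrm{Res}(P,Q)=a^{\deg Q}\prod_{i}Q(\omega_i)\in\mathbb{Z}$ directly and then multiply by $a^{l-\deg Q}$, rather than tracking powers of $a$ through the symmetric-function expansion by hand; also, strictly speaking the minimal polynomial of $\omega_i$ is $P/a$ rather than $P$ if $P$ is not monic, though the divisibility argument is unaffected since you are working over the field $\mathbb{Q}$.
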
 
 
In the general case, by choosing $\mathbf{m}$ such that $|\mathbf{m}|=N_{%
\mathbf{k}_{\mathbf{m}}}$, the estimate (\ref{ineg_q}) is replaced by
\begin{equation*} 
|\mathbf{q}|\leq c(Q)N_{\mathbf{k}}^{2} 
\end{equation*}
where $c(Q)$ depends only on $Q$. Then estimate (\ref{dioph1}) is satisfied 
by taking  
\begin{equation*} 
c=\frac{C}{[c(Q)]^{l}}. 
\end{equation*} 
 
It remains to show that $|\mathbf{k}|^{2}\neq1$ for all $\mathbf{k}\in\Gamma$,
apart from $\mathbf{k}=\mathbf{k}_1,\dots,\mathbf{k}_Q$. This is solved by
denoting $\zeta =e^{i\theta _{0}}$, the $Q$th primitive root of unity, and 
relating $\mathbf{k}_{j+1}$ to $\zeta^j$, and $\mathbf{k}_{\mathbf{m}}$ to 
$\sum_{j=0}^{Q-1}m_{j+1}\zeta ^{j}$. We then use the
Kronecker--Weber theorem which says that ``every abelian extension of $%
\mathbb{Q} 
$ is cyclotomic''~\cite{Washington1997}. 
This implies that if  
$\sum_{j}m_{j}\zeta ^{j}$ (which is an \emph{algebraic integer}) has 
modulus~$1$, then it is \emph{necessarily a root of unity}. 
Knowing that the dimension of the $%
\mathbb{Q} 
$-vector space spanned by the $\zeta ^{j}$ is $\varphi (Q)$, 
this implies that this root of 
unity is one of the $\zeta ^{j}$, $j=1,\dots,Q$.

\section{Formal power series computation} 
 
\label{sec:formal} 
 
Let us consider the \emph{steady Swift--Hohenberg equation}  
\begin{equation} 
(1+\Delta )^{2}U-\mu U+U^{3}=0  \label{eq:sh} 
\end{equation}
where we look for a $Q$-fold quasiperiodic function $U$ of $\mathbf{x}\in  
\mathbb{R}^{2}$, defined by Fourier coefficients $U_{\mathbf{k}}$ on a 
quasilattice~$\Gamma $ as defined above. Let us rewrite (\ref{eq:sh}) in the 
form  
\begin{equation} 
\mathcal{L}_{0}U=\mu U-U^{3}  \label{basicEqu} 
\end{equation}
where  
\begin{equation*} 
\mathcal{L}_{0}=(1+\Delta )^{2}. 
\end{equation*}
We write formally  
\begin{equation*} 
U(\mathbf{x})=\sum_{\mathbf{k}\in \Gamma }U_{\mathbf{k}}e^{i\mathbf{k}\cdot  
\mathbf{x}}, 
\end{equation*}
the meaning of this sum being given in section~\ref{sec:spaces}. We seek a 
solution of (\ref{eq:sh}), bifurcating from the origin when $\mu =0$, that 
is \emph{invariant under rotations by} $2\pi /Q$. First we look for a formal 
solution in the form of a power series of an amplitude. More precisely we 
look for the series  
\begin{equation} 
U(\mathbf{x},\mu )=\sqrt{\frac{\mu }{\beta }}\sum_{n\geq 0}\mu ^{n}U^{(n)}(%
\mathbf{x})  \label{Expan_U} 
\end{equation}
as a formal solution of (\ref{eq:sh}), where all factors $U^{(n)}$ are 
invariant under rotations by~$2\pi /Q$ of the plane. The coefficient $\beta $ 
will be given by fixing $U_{\mathbf{k}_{1}}^{(0)}$. 
 
At order $\mathcal{O}(\sqrt{|\mu |})$ in (\ref{eq:sh}) we have  
\begin{equation} 
\mathcal{L}_{0}U^{(0)}=0  \label{order1} 
\end{equation}
and we choose the solution  
\begin{equation} 
U^{(0)}=\sum_{j=1}^{Q}e^{i\mathbf{k}_{j}\cdot \mathbf{x}},  \label{U1} 
\end{equation}
which is invariant under rotations by $2\pi /Q$ and defined up to a factor 
which we take equal to~$1$. 

In writing $U^{(0)}$ in this way, we have made use of the fact that the only
solutions $\mathbf{k}\in\Gamma$ of $|\mathbf{k}|=1$ are
$\mathbf{k}=\mathbf{k}_1,\dots,\mathbf{k}_Q$ (see discussion at the end of
section~\ref{sec:Diophantine}). This implies that the kernel of
$\mathcal{L}_{0}$ is only one-dimensional if restricted to functions invariant
under rotations by~$2\pi/Q$, this kernel being spanned by~$U^{(0)}$.
 
At order $\mathcal{O}(|\mu |^{3/2})$ we have  
\begin{equation} 
\mathcal{L}_{0}U^{(1)}=U^{(0)}-\beta ^{-1}(U^{(0)})^{3}.  \label{def_beta} 
\end{equation}
We need to impose a solvability condition, namely that the coefficients of $%
e^{i\mathbf{k}_{j}\cdot \mathbf{x}}$, for $j=1,\dots ,Q$ on the right hand 
side of this equation must be zero. Because of the invariance under 
rotations by $2\pi /Q$, it is sufficient to cancel the coefficient of $e^{i%
\mathbf{k}_{1}\cdot \mathbf{x}}$. This yields  
\begin{equation} 
\beta =3(Q-1)>0,  \label{mu2} 
\end{equation}
and $U^{(1)}$ is known up to an element $\beta ^{(1)}U^{(0)}$ in $\ker  
\mathcal{L}_{0}$, which is determined at the next step:  
\begin{eqnarray} 
U^{(1)} &=&\widetilde{U}^{(1)}+\beta ^{(1)}U^{(0)},\qquad \widetilde{U}%
^{(1)}=\sum_{\mathbf{k}\in \Gamma ,N_{\mathbf{k}}=3}\alpha _{\mathbf{k}}e^{i%
\mathbf{k}\cdot \mathbf{x}},  \label{U^1} \\ 
\alpha _{3\mathbf{k}_{j}} &=&-1/64,\qquad \alpha _{2\mathbf{k}_{j}+\mathbf{k}%
_{l}}=-\frac{3}{(1-|2\mathbf{k}_{j}+\mathbf{k}_{l}|^{2})^{2}},\qquad \mathbf{%
k}_{j}+\mathbf{k}_{l}\neq 0,  \notag \\ 
\alpha _{\mathbf{k}_{j}+\mathbf{k}_{l}+\mathbf{k}_{r}} &=&-\frac{6}{(1-|%
\mathbf{k}_{j}+\mathbf{k}_{l}+\mathbf{k}_{r}|^{2})^{2}},\qquad j\neq l\neq 
r\neq j,  \notag \\ 
\mathbf{k}_{j}+\mathbf{k}_{l} &\neq &0,\qquad \mathbf{k}_{j}+\mathbf{k}%
_{r}\neq 0,\qquad \mathbf{k}_{r}+\mathbf{k}_{l}\neq 0,  \notag 
\end{eqnarray}
where $\widetilde{U}^{(1)}$ has no component on $e^{i\mathbf{k}_{j}\cdot  
\mathbf{x}}$. 
 
Order $|\mu |^{n+1/2}$ in (\ref{basicEqu}) leads for $n\geq 2$ to  
\begin{equation} 
\mathcal{L}_{0}U^{(n)}=U^{(n-1)}-\beta ^{-1}\sum_{\substack{ k+l+r=n-1,  \\ %
k,l,r\geq 0}}U^{(k)}U^{(l)}U^{(r)}.  \label{identifOrder_n} 
\end{equation}
For $n=2$, the solvability condition on the right hand side gives $\beta 
^{(1)}$, and $U^{(2)}$ is then determined up to $\beta ^{(2)}U^{(0)}$. 
Indeed we obtain on the right hand side  
\begin{eqnarray} 
U^{(1)}-\frac{3}{\beta }U^{(1)}U^{(0)2} &=&\widetilde{U}^{(1)}+\beta 
^{(1)}U^{(0)}-\frac{3}{\beta }\beta ^{(1)}U^{(0)3}-\frac{3}{\beta }%
\widetilde{U}^{(1)}U^{(0)2}  \notag \\ 
&=&-2\beta ^{(1)}U^{(0)}+\widetilde{U}^{(1)}-\frac{3}{\beta }\widetilde{U}%
^{(1)}U^{(0)2}-\frac{3}{\beta }\mathcal{L}_{0}\widetilde{U}^{(1)}, 
\label{beta_1} 
\end{eqnarray}
where we used the fact that the component of $U^{(0)3}$ on $e^{i\mathbf{k}%
_{1}\cdot \mathbf{x}}$ is $\beta $ (see (\ref{def_beta})). Hence $2\beta 
^{(1)}$ is the coefficient of $e^{i\mathbf{k}_{1}\cdot \mathbf{x}}$ in $%
-3\beta ^{-1}\widetilde{U}^{(1)}U^{(0)2}$, and since all coefficients of $%
\widetilde{U}^{(1)}$ are negative, we find $\beta ^{(1)}>0$. We obtain in 
the same way the coefficients $\beta ^{(n-1)}U^{(0)}$ of $U^{(n-1)}$ in 
using the solvability condition on the right hand side of~(\ref%
{identifOrder_n}). 
 
\begin{figure}[t] 
\begin{center} 
\includegraphics[width=0.7\hsize]{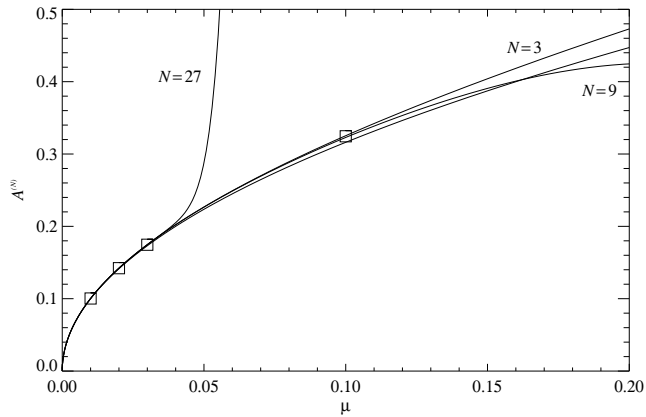} 
\end{center} 
\caption{Amplitude~$A^{(N)}$ of the quasipattern, as a function of~$\protect%
\mu$ and of~$N$, with $Q=8$, $N=1$, $3$, $9$ and~$27$, and scaled so that $%
A^{(1)}=\protect\sqrt{\protect\mu}$. Increasing the order of the truncation 
leads to divergence for smaller values of~$\protect\mu$. The squares 
represent amplitudes computed by solving the PDE by Newton iteration, 
truncated to the quasilattice~$\Gamma_{27}$ ($N_{\mathbf{k}}\le27$) and 
restricted to wavevectors with $|\mathbf{k}|\le\protect\sqrt{5}$. Note that 
for $\protect\mu=0.1$, the Newton iteration succeeds in finding an 
equilibrium solution of the PDE, while the formal power series has diverged. 
The spatial form of the solution with $\protect\mu=0.1$ is shown in 
figure~\protect\ref{fig:quasipattern}.} 
\label{fig:amplitude} 
\end{figure} 
 
\paragraph{Small divisor problem.} 
 
It is clear that we can continue to compute this expansion as far as we 
wish, where at each step we use the formal inverse of~$\mathcal{L}_{0}$ on 
the complement of the kernel. However, applying $\mathcal{L}_{0}^{-1}$ to $%
e^{i\mathbf{k}\cdot \mathbf{x}}$ introduces a factor  
\begin{equation*} 
\frac{1}{(1-|\mathbf{k}|^{2})^{2}}, 
\end{equation*}
which may be very large for combinations $\mathbf{k}=\mathbf{k}_{\mathbf{m}}$ 
with large $\mathbf{m}$, since points $\mathbf{k}_{\mathbf{m}}$ of the 
quasilattice $\Gamma $ sit as close as we want to the unit circle. This is a  
\emph{small divisor problem} and computations indicate that the series (\ref%
{Expan_U}) seems to diverge numerically~\cite{Rucklidge2003}. We illustrate 
this in figure~\ref{fig:amplitude}, plotting the amplitude~$A^{(N)}$ 
against~$\mu $, where  
\begin{equation*} 
A^{(N)}=||\mathbf{P}_{0}\sqrt{\frac{\mu }{\beta }}\sum_{n=0}^{(N-1)/2}\mu 
^{n}U^{(n)}||_{s}=\sqrt{\frac{\mu }{\beta }}\left( \sum_{n=0}^{(N-1)/2}\mu 
^{n}\beta ^{(n)}\right) ||U^{(0)}||_{s}, 
\end{equation*}
and the norm $||\cdot ||_{s}$ and the projection operator~$\mathbf{P}_{0}$ 
are defined below: $A^{(N)}$~is essentially the magnitude of the coefficient 
of~$e^{i\mathbf{k}_{1}\cdot \mathbf{x}}$ as a function of~$\mu $ and of~$N$, 
the maximum order of wavevectors included in the truncated power series. 
 
However, we prove in section~\ref{sec:Gevrey} that in all cases we can 
control the divergence of the terms of the series~(\ref{Expan_U}), 
and obtain a Gevrey estimate $||U^{(n)}||_{s}\leq \gamma K^{n}(n!)^{4l}$, 
where the norm $||\cdot ||_{s}$ is defined below. 
 
\begin{remark} 
For $Q=4$ or $6$, there is no small divisor problem since $\Gamma $ is a 
periodic lattice, and the only points in $\Gamma $ that lie in a small 
neighborhood of the unit circle are $\{\mathbf{k}_{j};j=1,\dots,Q\}$. 
\end{remark} 
 
\section{Function spaces} 
 
\label{sec:spaces} 
 
We characterise the functions of interest by their Fourier coefficients on 
the quasilattice $\Gamma $ generated by the $Q$ unit vectors 
$\mathbf{k}_{j}$:  
\begin{equation*} 
U(\mathbf{x})=\sum_{\mathbf{k}\in \Gamma }U_{\mathbf{k}}e^{i\mathbf{k}\cdot  
\mathbf{x}} 
\end{equation*}
Recall that for each $\mathbf{k}\in \Gamma $, there exists a vector $\mathbf{%
m}\in \mathbb{N}^{Q}$ such that $\mathbf{k}=\mathbf{k}_{\mathbf{m}%
}=\sum_{j=1}^{Q}m_{j}\mathbf{k}_{j}$ and we can choose $\mathbf{m}$ such 
that $|\mathbf{m}|=N_{\mathbf{k}}$ as defined in (\ref{defN_k}). We have the 
following properties, proved in Appendix~\ref{app:lemN_k}: 
 
\begin{lemma} 
\label{lem:N_k} 
 
(i) We have the following inequalities:  
\begin{equation} 
N_{\mathbf{k}+\mathbf{k}^{\prime }}\leq N_{\mathbf{k}}+N_{\mathbf{k}^{\prime 
}},\quad N_{-\mathbf{k}}=N_{\mathbf{k}},  \label{triangular} 
\end{equation}
\begin{equation} 
|\mathbf{k}|\leq N_{\mathbf{k}}.  \label{compare_k_and_N_k} 
\end{equation} 
 
(ii) We have the following estimate of the numbers of vectors $\mathbf{k}$ 
having a given $N_{\mathbf{k}}$:  
\begin{equation} 
\mathop{\rm card}\{\mathbf{k}:N_{\mathbf{k}}=N\}\leq c_{1}(Q)N^{Q/2-1} 
\label{card_k} 
\end{equation}
where $c_{1}(Q)$ only depends on $Q$. 
\end{lemma}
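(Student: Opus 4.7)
The plan for (i) is to exploit directly the definition of $N_{\mathbf{k}}$ as a minimum over representations $\mathbf{k}=\mathbf{k}_{\mathbf{m}}$ with $\mathbf{m}\in\mathbb{N}^{Q}$. I would first pick optimal representatives $\mathbf{m},\mathbf{m}^{\ast}\in\mathbb{N}^{Q}$ with $|\mathbf{m}|=N_{\mathbf{k}}$ and $|\mathbf{m}^{\ast}|=N_{\mathbf{k}^{\prime}}$; since their entries are non-negative, $\mathbf{m}+\mathbf{m}^{\ast}\in\mathbb{N}^{Q}$ represents $\mathbf{k}+\mathbf{k}^{\prime}$, giving $N_{\mathbf{k}+\mathbf{k}^{\prime}}\le|\mathbf{m}|+|\mathbf{m}^{\ast}|$. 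For $N_{-\mathbf{k}}=N_{\mathbf{k}}$ I use the pairing $\mathbf{k}_{j+Q/2}=-\mathbf{k}_{j}$: cyclically shifting the indices of $\mathbf{m}$ by $Q/2$ yields a representative of $-\mathbf{k}$ with the same weight, and symmetry closes the argument. Finally $|\mathbf{k}|\le N_{\mathbf{k}}$ follows from the ordinary triangle inequality in $\mathbb{R}^{2}$ applied to $\mathbf{k}=\sum_{j}m_{j}\mathbf{k}_{j}$, using $|\mathbf{k}_{j}|=1$ and then minimizing over $\mathbf{m}$.

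For (ii) the idea is to quotient out the redundancy coming from $\mathbf{k}_{j}+\mathbf{k}_{j+Q/2}=0$ and thereby reduce to an elementary lattice-point count on an $\ell^{1}$-sphere. Given $\mathbf{k}$ with $N_{\mathbf{k}}=N$, I choose a minimal representative $\mathbf{m}\in\mathbb{N}^{Q}$. The key observation is that at the minimum, for each $j=1,\dots,Q/2$ at most one of $m_{j},m_{j+Q/2}$ is nonzero: otherwise decreasing both by $1$ would produce a representative of $\mathbf{k}$ with sum smaller by $2$, contradicting minimality. Setting $m^{\prime}_{j}=m_{j}-m_{j+Q/2}\in\mathbb{Z}$ for $j=1,\dots,Q/2$, this non-overlap gives $|m^{\prime}_{j}|=m_{j}+m_{j+Q/2}$, and therefore
\[
\sum_{j=1}^{Q/2}|m^{\prime}_{j}|=|\mathbf{m}|=N,\qquad \mathbf{k}=\sum_{j=1}^{Q/2}m^{\prime}_{j}\mathbf{k}_{j}.
\]

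Consequently every $\mathbf{k}$ with $N_{\mathbf{k}}=N$ is of the form $\sum_{j=1}^{Q/2}m^{\prime}_{j}\mathbf{k}_{j}$ for some $\mathbf{m}^{\prime}\in\mathbb{Z}^{Q/2}$ on the $\ell^{1}$-sphere of radius $N$, so
\[
\mathop{\rm card}\{\mathbf{k}:N_{\mathbf{k}}=N\}\le \mathop{\rm card}\Bigl\{\mathbf{m}^{\prime}\in\mathbb{Z}^{Q/2}:\textstyle\sum_{j}|m^{\prime}_{j}|=N\Bigr\}.
\]
The right-hand side is bounded elementarily: fixing the signs of the entries in at most $2^{Q/2}$ ways, what remains is the number of compositions of $N$ into $Q/2$ non-negative parts, $\binom{N+Q/2-1}{Q/2-1}=O(N^{Q/2-1})$, from which (\ref{card_k}) follows with some $c_{1}(Q)$.

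Neither step presents a real obstacle; the only subtle point is the minimality reduction in (ii), which converts an $\mathbb{N}^{Q}$-count into a $\mathbb{Z}^{Q/2}$-count via the relations $\mathbf{k}_{j}+\mathbf{k}_{j+Q/2}=0$. The resulting bound is not tight in general, since only $\varphi(Q)$ of the vectors $\mathbf{k}_{j}$ are rationally independent and distinct $\mathbf{m}^{\prime}$ may yield the same $\mathbf{k}$, but this redundancy only makes the stated inequality easier.
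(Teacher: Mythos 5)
Your proof is correct and follows essentially the same route as the paper's: additivity of representatives in $\mathbb{N}^Q$ for the triangle inequality, the cyclic shift by $Q/2$ for $N_{-\mathbf{k}}=N_{\mathbf{k}}$, the ordinary triangle inequality in $\mathbb{R}^2$ for $|\mathbf{k}|\le N_{\mathbf{k}}$, and the reduction to counting lattice points on the $\ell^1$-sphere $\sum_{j}|m_j'|=N$ in $\mathbb{Z}^{Q/2}$ for (ii). One small merit of your write-up is that you make explicit the observation (a minimal $\mathbf{m}$ has $\min(m_j,m_{j+Q/2})=0$ for every $j$, whence $\sum_j|m_j'|=|\mathbf{m}|=N$) that the paper's appendix uses without comment when it passes from $\mathbf{m}\in\mathbb{N}^Q$ to $\mathbf{m}'\in\mathbb{Z}^{Q/2}$.
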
 
 
Define now the space of functions  
\begin{equation} 
{\mathcal{H}}_{s}=\left\{U=\sum_{\mathbf{k}\in\Gamma } U_{\mathbf{k}}e^{i%
\mathbf{k}\cdot\mathbf{x}}: \, ||U||_{s}^{2}=\sum_{\mathbf{k}\in\Gamma} 
(1+N_{\mathbf{k}}{}^{2})^{s} |U_{\mathbf{k}}|^{2}<\infty \right\} , 
\label{defHs} 
\end{equation} 
which becomes a Hilbert space with the scalar product  
\begin{equation} 
\langle W,V\rangle_{s} = \sum_{\mathbf{k}\in\Gamma} (1+N_{\mathbf{k}%
}{}^{2})^{s} W_{\mathbf{k}}\overline{V}_{\mathbf{k}}.  \label{scalarproduct} 
\end{equation} 
In the sequel we need the following lemma, proved in Appendix~\ref%
{app:algebra}: 
 
\begin{lemma} 
\label{algebra}The space ${\mathcal{H}}_{s}$ is a Banach algebra for 
$s>Q/4$. In particular there exists $c_{s}>0$ such that  
\begin{equation} 
||UV||_s\leq c_{s} ||U||_s ||V||_s.  \label{algeb} 
\end{equation} 
For $\ell\geq0$ and $s>\ell+Q/4$, ${\mathcal{H}}_{s}$ is continuously 
embedded into $\mathcal{C}^{\ell }$. 
\end{lemma}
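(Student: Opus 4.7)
The strategy is the standard Sobolev-algebra / Fourier-embedding argument, transplanted from the torus to the quasilattice~$\Gamma$. The triangle inequality $N_{\mathbf{k}+\mathbf{k}'}\leq N_{\mathbf{k}}+N_{\mathbf{k}'}$ from Lemma~\ref{lem:N_k}(i) lets the weight $(1+N_{\mathbf{k}}^2)^{s/2}$ behave just like $(1+|\mathbf{k}|^2)^{s/2}$ does in the usual Sobolev theory, while the counting estimate~(\ref{card_k}) plays the role of the polynomial growth of lattice points and fixes the dimension-like threshold at $Q/4$ in place of the usual $d/2$.

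For the algebra statement, I would proceed as follows. Writing $(UV)_{\mathbf{k}}=\sum_{\mathbf{k}_1+\mathbf{k}_2=\mathbf{k}}U_{\mathbf{k}_1}V_{\mathbf{k}_2}$, the key pointwise inequality, deduced from $N_{\mathbf{k}}\leq N_{\mathbf{k}_1}+N_{\mathbf{k}_2}$ and $(a+b)^2\leq 2(a^2+b^2)$, is
\begin{equation*}
(1+N_{\mathbf{k}}^2)^{s/2}\leq 2^{s}\bigl[(1+N_{\mathbf{k}_1}^2)^{s/2}+(1+N_{\mathbf{k}_2}^2)^{s/2}\bigr].
\end{equation*}
Inserting this into $(1+N_{\mathbf{k}}^2)^{s/2}|(UV)_{\mathbf{k}}|$ splits the sum into two convolution-type expressions symmetric in $U$ and $V$. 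Each can be estimated by Young's inequality $\ell^2\ast\ell^1\hookrightarrow\ell^2$ applied on the additive semigroup generated by $\Gamma$, yielding
\begin{equation*}
\|UV\|_s\leq 2^{s+1}\,\|U\|_s\,\sum_{\mathbf{k}\in\Gamma}|V_{\mathbf{k}}|
\end{equation*}
(and the symmetric bound with $U$ and $V$ exchanged).

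It remains to control $\sum_{\mathbf{k}}|V_{\mathbf{k}}|$ by $\|V\|_s$. Cauchy--Schwarz gives
\begin{equation*}
\sum_{\mathbf{k}\in\Gamma}|V_{\mathbf{k}}|\leq \Bigl(\sum_{\mathbf{k}\in\Gamma}(1+N_{\mathbf{k}}^2)^{-s}\Bigr)^{1/2}\|V\|_s,
\end{equation*}
and using~(\ref{card_k}) the weight sum is bounded by
\begin{equation*}
\sum_{\mathbf{k}\in\Gamma}(1+N_{\mathbf{k}}^2)^{-s}\leq c_1(Q)\sum_{N\geq 0}\frac{N^{Q/2-1}}{(1+N^2)^s},
\end{equation*}
which converges precisely when $2s-(Q/2-1)>1$, i.e.\ $s>Q/4$. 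This produces a finite constant $c_s$ with $\|UV\|_s\leq c_s\|U\|_s\|V\|_s$, and incidentally shows that the Fourier series defining $U$ converges absolutely so that $UV$ is an honest element of $\mathcal{H}_s$.

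For the embedding $\mathcal{H}_s\hookrightarrow\mathcal{C}^{\ell}$ when $s>\ell+Q/4$, note that formal differentiation $\partial^{\alpha}U$ has Fourier coefficients $(i\mathbf{k})^{\alpha}U_{\mathbf{k}}$ and, by~(\ref{compare_k_and_N_k}), $|\mathbf{k}|^{|\alpha|}\leq N_{\mathbf{k}}^{\ell}\leq (1+N_{\mathbf{k}}^2)^{\ell/2}$ for $|\alpha|\leq\ell$. Applying Cauchy--Schwarz as above,
\begin{equation*}
\sup_{\mathbf{x}\in\mathbb{R}^2}|\partial^{\alpha}U(\mathbf{x})|\leq\sum_{\mathbf{k}\in\Gamma}(1+N_{\mathbf{k}}^2)^{\ell/2}|U_{\mathbf{k}}|\leq\Bigl(\sum_{\mathbf{k}\in\Gamma}(1+N_{\mathbf{k}}^2)^{-(s-\ell)}\Bigr)^{1/2}\|U\|_s,
\end{equation*}
and the right-hand series converges by the same counting argument under the hypothesis $s-\ell>Q/4$. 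This simultaneously shows uniform convergence of the series for $\partial^{\alpha}U$ and gives the desired continuous embedding.

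The only mildly delicate step is the first one: making sure the convolution/Young argument is correctly justified on the \emph{dense} set $\Gamma\subset\mathbb{R}^2$ (as opposed to a genuine lattice). This is resolved by viewing the Fourier coefficients as $\ell^2$-sequences indexed by the countable abelian group generated by $\mathbf{k}_1,\dots,\mathbf{k}_Q$, on which the standard $\ell^p$ convolution inequalities apply without modification; everything else is routine given Lemma~\ref{lem:N_k}.
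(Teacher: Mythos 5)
Your argument is correct, and it takes a genuinely different (and shorter) route than the one in the paper's Appendix~C. Both proofs open identically: the triangle inequality from Lemma~\ref{lem:N_k} plus convexity give $(1+N_{\mathbf{k}+\mathbf{k}'}^2)^{s/2}\lesssim (1+N_{\mathbf{k}}^2)^{s/2}+(1+N_{\mathbf{k}'}^2)^{s/2}$, which splits the weighted convolution into two symmetric pieces $S_1$ and $S_2$. From there you close the estimate in one shot with Young's inequality $\ell^2\ast\ell^1\hookrightarrow\ell^2$ on the countable abelian group generated by the $\mathbf{k}_j$, then control $\sum_{\mathbf{k}}|V_{\mathbf{k}}|$ by $\|V\|_s$ via Cauchy--Schwarz and the counting estimate~(\ref{card_k}); this is the textbook Sobolev-algebra proof, transplanted verbatim to the quasilattice index set. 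The paper instead splits each $S_i$ a second time according to whether $N_{\mathbf{k}}\leq 3N_{\mathbf{k}'}$ or $N_{\mathbf{k}}>3N_{\mathbf{k}'}$, treats the quasi-diagonal part by Cauchy--Schwarz (using $N_{\mathbf{K}}\leq 4N_{\mathbf{k}'}$ there), and handles the off-diagonal part with a Littlewood--Paley-style dyadic packet decomposition $\Delta_p u$ and the equivalent norm $(\sum_p 2^{2ps}\|\Delta_p u\|_0^2)^{1/2}$. Both routes land on the same threshold $s>Q/4$, so for the statement as given the paper's machinery is strictly heavier; the dyadic decomposition is what one would reach for to prove a sharper \emph{tame} bilinear estimate such as $\|UV\|_s\lesssim\|U\|_s\|V\|_{s_0}+\|U\|_{s_0}\|V\|_s$ with a fixed low regularity $s_0>Q/4$, but that refinement is not what Lemma~\ref{algebra} asserts. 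Your embedding argument coincides with the paper's. One small imprecision worth flagging: what Young's inequality actually produces is $\|UV\|_s\leq 2^{s}\bigl(\|U\|_s\sum_{\mathbf{k}}|V_{\mathbf{k}}|+\|V\|_s\sum_{\mathbf{k}}|U_{\mathbf{k}}|\bigr)$, i.e.\ the \emph{sum} of the two cross terms, not either one alone as your phrasing (``and the symmetric bound'') suggests; since you bound both $\ell^1$ sums by the corresponding $\|\cdot\|_s$ norms at the next step, the conclusion is unaffected.
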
 
 
From now on, all inner products are $s$ unless otherwise stated, so that we 
can remove the $s$ subscripts throughout in scalar products. 
 
We will also use the orthogonal projection on $\ker\mathcal{L}_{0}$: for any  
$U\in{\mathcal{H}}_{s}$, let  
\begin{equation*} 
\mathbf{P}_{0}U=\sum_{j=1,\dots,Q} U_{\mathbf{k}_{j}}e^{i\mathbf{k}_{j}\cdot%
\mathbf{x}}, 
\end{equation*} 
and we denote by $Q_{0}$ the orthogonal projection:  
\begin{equation*} 
\mathbf{Q}_{0}=\mathbb{I}-\mathbf{P}_{0}, 
\end{equation*} 
which consists in suppressing the Fourier components of $e^{i\mathbf{k}%
_{j}\cdot\mathbf{x}}$, $j=1,\dots,Q$. The norm of the linear operator $%
\mathbf{Q}_{0}$ is $1$ in all spaces ${\mathcal{H}}_{s}$. 
 
\section{Gevrey estimates} 
 
\label{sec:Gevrey} 
 
In this section we prove rigorously a Gevrey estimate of $%
U^{(n)}$ in~(\ref{Expan_U}). The estimate for $Q=8$, $10$ and $12$ ($l=1$) 
was announced in~\cite{Rucklidge2003}. Recall that a formal power series $%
\sum_{n=0}^{\infty }u_{n}\zeta ^{n}$ is \emph{Gevrey-}$k$~\cite{Gevrey1918}, 
where $k$~is a positive integer, if there are constants $\delta >0$ and $%
K>0$ such that  
\begin{equation} 
|u_{n}|\leq \delta K^{n}(n!)^{k}\qquad \forall n\geq 0.  \label{defn:Gevery} 
\end{equation} 
 
\begin{theorem} 
\label{gevreythm} For any even $Q\geq 8$, assume that $s>Q/4$. Then there 
exist positive numbers $K(Q,c,s)$ and $\delta (Q,s)$ such that there exists 
a unique formal solution $U(\mu )$ of~(\ref{eq:sh}), under the form of a 
power series in $\mu ^{1/2}$, all factors $U^{(n)}$ belonging to ${\mathcal{H%
}}_{s}$, and which satisfies  
\begin{eqnarray} 
U &=&\sqrt{\frac{\mu }{\beta }}\sum_{n\geq 0}\mu ^{n}U^{(n)}, 
\label{formal_exp} \\ 
U^{(n)} &=&\beta ^{(n)}U^{(0)}+\widetilde{U}^{(n)},\qquad \langle \widetilde{%
U}^{(n)},e^{i\mathbf{k}_{j}\cdot \mathbf{x}}\rangle _{s}=0,\quad j=1,\dots 
,Q,  \notag \\ 
||\widetilde{U}^{(n)}||_{s} &\leq &\delta \frac{(Q-1)}{2^{s/2}c_{s}^{2}Q}%
K^{n}(n!)^{4l},\quad n\geq 1,  \notag \\ 
|\beta ^{(n)}| &\leq &\delta K^{n}(n!)^{4l},\quad n\geq 1.  \notag 
\end{eqnarray}
where $l=\frac{1}{2}\varphi (Q)-1$~is the integer defined in Lemma~\ref%
{dioph_estimate}. From the above inequalities, it follows that  
\begin{equation*} 
||U^{(n)}||_{s}\leq \gamma K^{n}(n!)^{4l},\quad n\geq 0, 
\end{equation*}
where $\gamma $ is related to $\delta $, $Q$ and~$s$ only. 
\end{theorem}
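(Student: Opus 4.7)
The plan is to prove the bounds by strong induction on~$n$, combining the recurrence~(\ref{identifOrder_n}) with the Diophantine estimate~(\ref{dioph1}), the Banach algebra property (Lemma~\ref{algebra}), and a combinatorial estimate on convolution sums of factorials. The base cases $n=0,1$ are already handled explicitly in Section~\ref{sec:formal}. At the inductive step I would split~(\ref{identifOrder_n}) into its $\mathbf{P}_0$ and $\mathbf{Q}_0$ components: the $\mathbf{P}_0$-projection of the right-hand side must vanish, and this solvability condition determines~$\beta^{(n-1)}$ as an inner product involving only quantities known at previous steps; the $\mathbf{Q}_0$-projection then gives $\widetilde{U}^{(n)}=\mathcal{L}_{0}^{-1}\mathbf{Q}_{0}R^{(n-1)}$, where $R^{(n-1)}=U^{(n-1)}-\beta^{-1}\sum_{k+l+r=n-1}U^{(k)}U^{(l)}U^{(r)}$.

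The core estimate controls $\mathcal{L}_{0}^{-1}\mathbf{Q}_{0}$ on the relevant finite-dimensional subspace. A preliminary induction shows that each~$U^{(n)}$ is a trigonometric polynomial whose Fourier modes all satisfy $N_{\mathbf{k}}\leq 2n+1$: indeed a cubic product with $k+l+r=n-1$ produces, by~(\ref{triangular}), modes with $N_{\mathbf{k}}\leq (2k+1)+(2l+1)+(2r+1)=2n+1$, and $U^{(n-1)}$ is already supported in $N_{\mathbf{k}}\leq 2n-1$. On such a polynomial the Diophantine bound~(\ref{dioph1}) yields the operator estimate
\begin{equation*}
\|\mathcal{L}_{0}^{-1}\mathbf{Q}_{0}V\|_{s}\;\leq\;\frac{(2n+1)^{4l}}{c^{2}}\,\|V\|_{s},
\end{equation*}
since each eigenvalue $(1-|\mathbf{k}|^{2})^{2}$ of~$\mathcal{L}_{0}$ on this subspace is bounded below by $c^{2}/N_{\mathbf{k}}^{4l}\geq c^{2}/(2n+1)^{4l}$. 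This is precisely where the order~$l+1$ of the algebraic integer~$\omega$ enters and fixes the Gevrey exponent as~$4l$.

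To close the induction, I would apply Lemma~\ref{algebra} to obtain $\|U^{(k)}U^{(l)}U^{(r)}\|_{s}\leq c_{s}^{2}\|U^{(k)}\|_{s}\|U^{(l)}\|_{s}\|U^{(r)}\|_{s}$, insert the inductive bounds $\|U^{(j)}\|_{s}\leq \gamma K^{j}(j!)^{4l}$, and use the combinatorial estimate
\begin{equation*}
\sum_{k+l+r=n-1}(k!\,l!\,r!)^{4l}\;\leq\;C_{l}\,\bigl((n-1)!\bigr)^{4l},
\end{equation*}
which follows from $k!\,l!\,r!=(n-1)!/\binom{n-1}{k,l,r}$ and the uniform boundedness of $\sum_{k+l+r=n-1}\binom{n-1}{k,l,r}^{-4l}$ (using $4l\geq 4$ since $l\geq 1$). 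Combining these three ingredients produces a recursion of the form $\|\widetilde{U}^{(n)}\|_{s}\leq \frac{(2n+1)^{4l}}{c^{2}}\bigl(\gamma+\beta^{-1}c_{s}^{2}C_{l}\gamma^{3}\bigr)K^{n-1}\bigl((n-1)!\bigr)^{4l}$, which closes into the target Gevrey bound because $(2n+1)^{4l}/n^{4l}\leq 3^{4l}$ for $n\geq 1$, provided $K$ is chosen large enough in terms of $Q,c,s,l$ and~$\delta$. The same computation applied to the $\mathbf{P}_{0}$-component gives the matching bound on~$\beta^{(n-1)}$, and the final combined estimate $\|U^{(n)}\|_{s}\leq \gamma K^{n}(n!)^{4l}$ follows by the triangle inequality.

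The main delicate point I anticipate is the joint bookkeeping of constants so that the specific prefactor $\delta(Q-1)/(2^{s/2}c_{s}^{2}Q)$ for~$\widetilde{U}^{(n)}$ and the prefactor~$\delta$ for~$\beta^{(n)}$ propagate consistently through the coupled recurrence, since the two quantities feed back into each other through the cubic term at the next order. One must track how $\|U^{(0)}\|_{s}^{2}=Q\cdot 2^{s}$ (from which the $2^{s/2}$ in the denominator arises), the algebra constant~$c_{s}$, and the normalisation $\beta=3(Q-1)$ conspire so that both inequalities remain self-consistent under a single choice of~$K$ independent of~$n$.
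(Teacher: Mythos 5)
Your proposal follows essentially the same route as the paper's own proof: the decomposition $U^{(n)}=\beta^{(n)}U^{(0)}+\widetilde U^{(n)}$, the split into $\mathbf{P}_0$/$\mathbf{Q}_0$ projections, the observation that modes of $U^{(n)}$ have $N_{\mathbf{k}}\leq 2n+1$ giving the $(2n+1)^{4l}/c^2$ pseudo-inverse bound, the Banach-algebra estimate, a multinomial/factorial convolution lemma (your derivation via $\binom{n-1}{k,l,r}^{-4l}$ is a slicker alternative to the paper's Appendix D telescoping argument but yields the same bound), and closing the induction with $(2n+1)^{4l}/n^{4l}\leq 3^{4l}$. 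The constant-bookkeeping you correctly flag as the delicate point is exactly what the paper does at the end by fixing $\gamma_1=\delta(Q-1)/(2^{s/2}c_s^2 Q)$ and then choosing $\delta$ and $K$ to satisfy the three coupled conditions simultaneously.
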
 
 
\begin{remark} 
The above Theorem claims that the series $U$ in powers of $\sqrt{\mu }$ is 
Gevrey-$2l$ taking its values in ${\mathcal{H}}_{s}$. 
\end{remark} 
 
\begin{remark} 
In the cases when $Q=4$ or $6$, the pattern is periodic, and the above 
series may be built in the same way, leading to a series which is convergent 
for $\mu <\mu _{0}$ where $\mu _{0}>0$. This results simply, via the 
Lyapunov--Schmidt method, from the implicit function theorem in its analytic 
version. The values of $\mu_{0}$ for $Q=2$, $4$ or $6$ are estimated 
in~\cite{Rucklidge2003}.
\end{remark} 
 
\begin{proof} 
\ We choose $s>Q/4$ since Lemma~\ref{algebra} insures that ${\mathcal{H}}_{s} 
$ is then a Banach algebra. We notice that  
\begin{equation*} 
||e^{i\mathbf{k}_{j}\cdot \mathbf{x}}||_{s}=2^{s/2}, 
\end{equation*}
and  
\begin{equation} 
||U^{(0)}||_{s}=2^{s/2}\sqrt{Q}.  \label{estimU^0} 
\end{equation}
We also have $\beta ^{(0)}=1$ and $\widetilde{U}^{(0)}=0$. Now we notice 
from~(\ref{dioph1}) that for $|\mathbf{k}|\neq 1$ we have  
\begin{equation} 
\left\vert |\mathbf{k}|^{2}-1\right\vert ^{-2}\leq \frac{N_{\mathbf{k}}^{4l}%
}{c^{2}},  \label{dioph2} 
\end{equation}
which controls the unboundedness of the pseudo-inverse $\widetilde{\mathcal{L%
}}_{0}^{-1}$ (inverse of $\mathcal{L}_{0}$ restricted to the orthogonal 
complement of its kernel). Indeed $\widetilde{\mathcal{L}}_{0}^{-1}$ is 
bounded from ${\mathcal{H}}_{s}$ to $\mathcal{H}_{s-4l}$. 
 
\begin{remark} 
We may notice that the set of eigenvalues of $\mathcal{L}_{0}$ is dense in 
the positive real line, which constitutes the spectrum. Hence \emph{0 is not 
isolated in the spectrum of }$\mathcal{L}_{0}$. This explains why the 
pseudo-inverse of $\mathcal{L}_{0}$ on the complement of its kernel, is 
unbounded and satisfies (see (\ref{dioph2})):
\begin{equation*} 
||\widetilde{\mathcal{L}_{0}}^{-1}\mathbf{Q}_{0}U||_{s-4l}\leq \frac{1}{c^{2}%
}||U||_{s},\text{ for any }U\in \mathcal{H}_{s}. 
\end{equation*} 
\end{remark} 
 
The basic observation here is that the factor $U^{(n)}$ that multiplies $\mu ^{n}$ 
has a finite Fourier expansion in $e^{i\mathbf{k}\cdot \mathbf{x}}$, with $%
\mathbf{\ k}=\sum_{j=1}^{Q}m_{j}\mathbf{k}_{j}$, $\sum m_{j}\leq 2n+1$, 
hence $N_{\mathbf{k}}\leq 2n+1$. Since for $\widetilde{U}^{(1)}$ we have $|%
\mathbf{m}|=3$ in all $\mathbf{k}_{\mathbf{m}}$'s, equation~(\ref{def_beta}) 
leads to  
\begin{equation} 
||\widetilde{U}^{(1)}||_{s}\leq \frac{3^{4l}c_{s}^{2}2^{3s/2}Q^{3/2}}{%
c^{2}3(Q-1)}.  \label{estimU^1} 
\end{equation}
We set  
\begin{equation} 
U^{(n)}=\beta ^{(n)}U^{(0)}+\widetilde{U}^{(n)},\qquad \widetilde{U}^{(n)}=%
\mathbf{Q}_{0}U^{(n)},  \label{decomp_U} 
\end{equation}
and replacing this decomposition in~(\ref{identifOrder_n}) we obtain, by 
taking the scalar product with $e^{i\mathbf{k}_{1}\cdot \mathbf{x}}$  
\begin{equation*} 
\beta ^{(n-1)}2^{s}-\frac{1}{\beta }\left\langle 3U^{(n-1)}U^{(0)2},e^{i%
\mathbf{k}_{1}\cdot \mathbf{x}}\right\rangle -\frac{1}{\beta }\left\langle 
\sum_{\substack{ k+l+r=n-1, \\ 0\leq k,l,r\leq n-2}}%
U^{(k)}U^{(l)}U^{(r)},e^{i\mathbf{k}_{1}\cdot \mathbf{x}}\right\rangle =0, 
\end{equation*}
where we have used $\left\langle U^{(0)},e^{i\mathbf{k}_{1}\cdot \mathbf{x}%
}\right\rangle =||e^{i\mathbf{k}_{1}\cdot \mathbf{x}}||_{s}^{2}=2^{s}$. 
Next, we use  
\begin{eqnarray*} 
\langle 3U^{(n-1)}U^{(0)2},e^{i\mathbf{k}_{1}\cdot \mathbf{x}}\rangle  
&=&\beta ^{(n-1)}\langle 3U^{(0)3},e^{i\mathbf{k}_{1}\cdot \mathbf{x}%
}\rangle +\langle 3\widetilde{U}^{(n-1)}U^{(0)2},e^{i\mathbf{k}_{1}\cdot  
\mathbf{x}}\rangle  \\ 
&=&3\beta \beta ^{(n-1)}2^{s}+\langle 3\widetilde{U}^{(n-1)}U^{(0)2},e^{i%
\mathbf{k}_{1}\cdot \mathbf{x}}\rangle , 
\end{eqnarray*}
and we are led to solve with respect to $\beta ^{(n-1)},\widetilde{U}^{(n)}$ 
the following system for $n\geq 2$  
\begin{eqnarray} 
\mathcal{L}_{0}\widetilde{U}^{(n)} &=&\widetilde{U}^{(n-1)}-\beta ^{-1}%
\mathbf{Q}_{0}\sum_{\substack{ k+l+r=n-1, \\ k,l,r\geq 0}}%
U^{(k)}U^{(l)}U^{(r)},  \label{Utilde^n} \\ 
\beta ^{(n-1)} &=&\frac{-1}{2^{1+s}\beta }\left\langle 3\widetilde{U}%
^{(n-1)}U^{(0)2}+\sum_{\substack{ k+l+r=n-1, \\ 0\leq k,l,r\leq n-2}}%
U^{(k)}U^{(l)}U^{(r)},e^{i\mathbf{k}_{1}\cdot \mathbf{x}}\right\rangle . 
\label{beta^n-1} 
\end{eqnarray}
Now we make the following recurrence assumption: there exist positive 
constants $\gamma _{1}$, $\delta $ and $K$, depending on $Q$ and $s$, such 
that  
\begin{eqnarray} 
||\widetilde{U}^{(p)}||_{s} &\leq &\gamma _{1}K^{p}(p!)^{4l},\qquad 
p=0,1,\dots ,n-1,  \label{recurrence} \\ 
|\beta ^{(p)}| &\leq &\delta K^{p}(p!)^{4l},\qquad p=1,\dots ,n-2.  \notag 
\end{eqnarray}
These estimates hold for $\widetilde{U}^{(0)}=0$ and for $\widetilde{U}^{(1)} 
$ provided that $\gamma _{1}$ and $K$ satisfy  
\begin{equation} 
\frac{3^{4l}c_{s}^{2}2^{3s/2}Q^{3/2}}{c^{2}3(Q-1)}\leq \gamma _{1}K. 
\label{cond_gamma_k} 
\end{equation}
Putting these together results in  
\begin{equation*} 
||U^{(p)}||_{s}=||\beta ^{(p)}U^{(0)}+\widetilde{U}^{(p)}||_{s}\leq \left( 
2^{s/2}\delta \sqrt{Q}+\gamma _{1}\right) K^{p}(p!)^{4l}, 
\end{equation*}
or  
\begin{equation} 
||U^{(p)}||_{s}\leq \gamma K^{p}(p!)^{4l},\qquad \text{with}\qquad \gamma 
=2^{s/2}\delta \sqrt{Q}+\gamma _{1}.  \label{eq:U_gamma} 
\end{equation}
The resolution of (\ref{Utilde^n}) and (\ref{beta^n-1}) provides $\beta 
^{(n-1)}$ and $\widetilde{U}^{(n)}$, starting with $n=2$. A useful lemma is 
the following, proved in Appendix~\ref{app:factorials}. 
 
\begin{lemma} 
\label{factorials} The following estimates hold true for $l\geq 1$:  
\begin{eqnarray*} 
\Pi _{3,n} &=&\sum_{\substack{ k+l+r=n  \\ k,l,r\geq 0}}(k!l!r!)^{4l}\leq 
4(n!)^{4l},\quad n\geq 1 \\ 
\Pi _{3,n}^{\prime } &=&\sum_{\substack{ k+l+r=n  \\ 0\leq k,l,r\leq n-1}}%
(k!l!r!)^{4l}\leq 10((n-1)!)^{4l},\quad n\geq 2. 
\end{eqnarray*} 
\end{lemma}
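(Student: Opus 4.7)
The plan is to handle both estimates by the same strategy: isolate the ``extremal'' triples that attain the largest values of $k!\,l!\,r!$ subject to $k+l+r=n$, and control all remaining triples via a multinomial inequality; since the exponent $p := 4l \geq 4$ gives a large margin, the number of tail terms (which is polynomial in $n$) is easily absorbed. Throughout I would use the trivial bound $l!r! \leq (l+r)!$ to reduce a product of three factorials to a product of two. (I write $p=4l$ to avoid collision with the summation index.)

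For $\Pi_{3,n}$, the maximum value $k!l!r!=n!$ is attained precisely on the three permutations of $(n,0,0)$, contributing $3(n!)^{p}$. For every other triple, reordering so that $k$ is the largest entry gives $1\leq k\leq n-1$, and the bound $k!l!r!\leq k!(n-k)!$ combined with $\binom{n}{k}\geq n$ yields $k!l!r!\leq (n-1)!$. Since the number of such remaining triples is $\binom{n+2}{2}-3=(n^{2}+3n-4)/2$, and since $n^{p}\geq(n^{2}+3n-4)/2$ for all $n\geq 1$ and $p\geq 4$, I obtain
\[
\Pi_{3,n}\leq (n!)^{p}\left(3+\frac{n^{2}+3n-4}{2n^{p}}\right)\leq 4(n!)^{p}.
\]

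For $\Pi'_{3,n}$, the constraint $k,l,r\leq n-1$ removes the three extremal triples, and the new maximum of $k!l!r!$ is $(n-1)!$, attained on the $6$ permutations of $(n-1,1,0)$ (for $n=2$, on the $3$ permutations of $(1,1,0)$ with value $1$). For the remaining triples, all entries lie in $\{0,\dots,n-2\}$, and the crucial sharper estimate is
\[
k!\,l!\,r!\leq 2(n-2)!,\qquad\text{i.e.}\qquad \binom{n}{k,l,r}\geq\binom{n}{2},
\]
which I would prove by a short case split on the smallest entry: when it is $0$, the other two lie in $[2,n-2]$ and $\binom{n}{l}\geq\binom{n}{2}$; when it is $1$, $\binom{n}{1}\binom{n-1}{l}\geq n(n-1)=2\binom{n}{2}$; and when it is $\geq 2$, $\binom{n}{k}\geq\binom{n}{2}$ directly. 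Using $2(n-2)!=\tfrac{2}{n-1}(n-1)!$ and counting the remaining triples as $\binom{n+2}{2}-9$, the total is bounded by $((n-1)!)^{p}\bigl(6+\tfrac{n^{2}+3n-16}{2}(2/(n-1))^{p}\bigr)$, and a direct numerical check shows the bracket is at most $10$ for $n\geq 3$ and $p\geq 4$ (the value near $n=3$ being the least comfortable but still safe); the case $n=2$ is trivial.

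The main obstacle is the sharp multinomial inequality $\binom{n}{k,l,r}\geq\binom{n}{2}$ for $k,l,r\leq n-2$: without it the constant $10$ in the statement cannot be reached, and the case analysis must single out the smallest entry being $0$ or $1$, since only there does the inequality come close to being tight. Everything else — the counting of triples and the final numerical verifications — is routine arithmetic.
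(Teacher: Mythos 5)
Your proof is correct, and it takes a genuinely different route from the paper. The paper introduces auxiliary two-index sums $\Pi_{2,n}=\sum_{k=0}^n(k!(n-k)!)^{4l}$ and $\Pi'_{2,n}=\sum_{k=1}^{n-1}(k!(n-k)!)^{4l}$, proves bounds $\Pi_{2,n}\leq(2+\tfrac1{16})(n!)^{4l}$ and $\Pi'_{2,n}\leq 3((n-1)!)^{4l}$ by a telescoping argument (showing the normalized differences $\Pi_{2,n+1}/((n+1)!)^{4l}-\Pi_{2,n}/(n!)^{4l}$ are nonpositive), and then reduces the three-index sums to the two-index ones by peeling off a single index, e.g.\ $\Pi_{3,n}=\Pi_{2,n}+(n!)^{4l}+\sum_{r=1}^{n-1}(r!)^{4l}\Pi_{2,n-r}$. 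Your argument instead works directly with the three-index sum: it identifies the permutations of $(n,0,0)$, respectively $(n-1,1,0)$, as the extremal terms, controls all remaining triples by the multinomial inequality $\binom{n}{k,l,r}\geq n$, respectively $\geq\binom{n}{2}$ (proved by a short case split on the smallest entry), and then absorbs the polynomially many tail terms using the fact that the exponent $p=4l\geq 4$ comfortably dominates the $O(n^2)$ count. Both routes give the same constants. The paper's approach has the advantage of producing the intermediate bounds on $\Pi_{2,n}$, which are of the same flavor as the Gevrey-product estimate used later in the Borel-transform section; your approach is more direct, avoids the recursion, and makes explicit why the exponent must be at least $4$: it is exactly what is needed to swallow the quadratically growing number of non-extremal triples. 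One small remark: in the $\Pi'_{3,n}$ tail bound, the bracket $6+\tfrac{n^2+3n-16}{2}\,(2/(n-1))^{p}$ is actually largest near $n=4$ (about $7.19$ at $p=4$), not $n=3$ (exactly $7$), but either way it sits safely below $10$, so the conclusion is unaffected.
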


Thanks to Lemma~\ref{factorials} and the estimate for $||U^{(p)}||_{s}$ in (%
\ref{eq:U_gamma}), we observe that  
\begin{equation*} 
\left\Vert \sum_{\substack{ k+l+r=n-1 \\ 0\leq k,l,r\leq n-2}}%
U^{(k)}U^{(l)}U^{(r)}\right\Vert _{s}\leq 10c_{s}^{2}\gamma 
^{3}K^{n-1}((n-2)!)^{4l}. 
\end{equation*}
From this it follows that  
\begin{equation*} 
|\beta ^{(n-1)}|\leq \frac{c_{s}^{2}}{2^{1+s/2}\beta }K^{n-1}((n-1)!)^{4l}%
\left\{ 3\gamma _{1}2^{s}Q+10\gamma ^{3}\right\} , 
\end{equation*}
and the recurrence assumption is realized if  
\begin{equation} 
\frac{c_{s}^{2}}{3(Q-1)2^{1+s/2}}\left\{ 3\gamma _{1}2^{s}Q+10\gamma 
^{3}\right\} \leq \delta   \label{cond_on_delta} 
\end{equation}
holds. Now we have, still by using Lemma~\ref{factorials}  
\begin{eqnarray} 
||\widetilde{U}^{(n)}||_{s} &\leq &\frac{(2n+1)^{4l}K^{n-1}((n-1)!)^{4l}}{%
c^{2}}\left\{ \gamma _{1}+\frac{4c_{s}^{2}}{\beta }\gamma ^{3}\right\}  
\label{eq:estimateUtilden} \\ 
&\leq &K^{n}(n!)^{4l}(2+\frac{1}{n})^{4l}\frac{1}{Kc^{2}}\left\{ \gamma _{1}+%
\frac{4c_{s}^{2}}{\beta }\gamma ^{3}\right\} .  \notag 
\end{eqnarray}
The factor $(2n+1)^{4l}/c^{2}$ here comes from the pseudo-inverse of $%
\mathcal{L}_{0}$ acting on functions containing modes of order up to $2n+1$. 
The recurrence assumption is realized if  
\begin{equation} 
\frac{3^{4l}}{c^{2}}\left\{ \gamma _{1}+\frac{4c_{s}^{2}}{\beta }\gamma 
^{3}\right\} \leq \gamma _{1}K.  \label{cond_on_K} 
\end{equation} 
 
We now must choose $\gamma _{1}$, $\delta $ and $K$ in such a way as to 
satisfy the three conditions (\ref{cond_gamma_k}), (\ref{cond_on_delta}) 
and~(\ref{cond_on_K}). Indeed, we may choose $\gamma _{1}$ such that
\begin{equation*} 
\gamma _{1}=\delta \frac{(Q-1)}{2^{s/2}c_{s}^{2}Q}, 
\end{equation*}
and replacing this value in (\ref{eq:U_gamma}) and (\ref{cond_on_delta}), 
then (\ref{cond_on_delta}) is satisfied as soon as
\begin{equation*} 
\delta ^{2}\leq \frac{3(Q-1)2^{s/2-1}}{5c_{s}^{2}}\left( 2^{s/2}\sqrt{Q}+%
\frac{Q-1}{2^{s/2}c_{s}^{2}Q}\right) ^{-3} 
\end{equation*}
holds. Then, choosing $K$ such that
\begin{equation*} 
K=\max \left\{ \frac{3^{4l}}{c^{2}}\left( 1+\frac{2^{s+1}c_{s}^{2}Q}{5(Q-1)}%
\right) ,\frac{1}{\delta }\frac{3^{4l-1}2^{2s}c_{s}^{4}Q^{5/2}}{%
c^{2}(Q-1)^{2}}\right\} , 
\end{equation*}
allows to satisfy (\ref{cond_gamma_k}) and~(\ref{cond_on_K}). 
 
We conclude that the bounds on $||\widetilde{U}^{(n)}||_{s}$ and $|\beta 
^{(n)}|$ in Theorem~\ref{gevreythm} hold, and that~(\ref{eq:U_gamma}), which 
holds for $0\leq p\leq n-1$, also holds for $p=n$, and so  
\begin{equation*} 
||U^{(n)}||_{s}\leq \gamma K^{n}(n!)^{4l},\quad n\geq 1. 
\end{equation*}
This ends the proof of Theorem~\ref{gevreythm}. 
\end{proof} 
 
\section{Borel transform of the formal solution} 
 
\label{sec:Borel} 
 
In this and subsequent sections, we consider the cases with $l=1$ ($Q=8$, $%
10 $ and~$12$) and set  
\begin{equation*} 
\sqrt{\mu }=\zeta ^{2}. 
\end{equation*} 
 
\begin{remark} 
\label{alpha_diff_2}In the general case, we should set $\zeta =\mu ^{1/4l}$. 
\end{remark} 
 
The formal expansion~(\ref{formal_exp}) becomes, after incorporating $\beta 
^{-1/2}$ into~$U^{(n)}$,  
\begin{equation} 
U=\zeta ^{2}\sum_{n\geq 0}\zeta ^{4n}U^{(n)},  \label{formal_exp_zeta} 
\end{equation} 
and we have the estimate  
\begin{equation*} 
||U^{(n)}||_{s}\leq \gamma K^{n}(n!)^{4}\leq \gamma K^{n}(4n!). 
\end{equation*} 
Thus the formal power series (\ref{formal_exp_zeta}) is a Gevrey-$1$ series 
in~$\zeta$. 
 
Let us now consider the new function $\zeta\mapsto\widehat{U}(\zeta)$, 
taking its values in~${\mathcal{H}}_{s}$, defined by  
\begin{equation*} 
\widehat{U}(\zeta)=\sum_{n\geq 0}\frac{\zeta^{4n+2}}{(4n+2)!}U^{(n)}. 
\end{equation*} 
Indeed, by construction, this function is analytic in the disc $%
|\zeta|<K_{1}^{-1}=K^{-1/4}$, with values in the Hilbert space~${\mathcal{H}}%
_{s}$ and invariant under rotations of angle $2\pi/Q$. The mapping $U\mapsto%
\widehat{U}$, where we divide the coefficient of~$\zeta^{n} $ by $n!$, is 
the \emph{Borel transform}~\cite{Borel1901} applied to the series~$U$. Since  
$U$ satisfies a Gevrey-$1$ estimate, the Borel transform~$\widehat{U}$ is 
analytic in a disc. 
 
We now need to show that this function~$\widehat{U}(\zeta )$ is solution of 
a certain partial differential equation. Let us recall a simple property of 
Gevrey-$1$ series. Consider two scalar Gevrey-$1$ series $u$ and $v$  
\begin{eqnarray*} 
u &=&\sum_{n\geq 1}u_{n}\zeta ^{n},\qquad v=\sum_{n\geq 1}v_{n}\zeta ^{n}, \\ 
|u_{n}| &\leq &c_{1}K_{1}^{n}n!,\qquad |v_{n}|\leq c_{2}K_{1}^{n}n!, 
\end{eqnarray*}
then we have  
\begin{eqnarray*} 
(uv)_{n} &=&\sum_{1\leq k\leq n-1}u_{k}v_{n-k}, \\ 
|(uv)_{n}| &\leq &c_{1}c_{2}K_{1}^{n}n!, 
\end{eqnarray*}
as this results from Appendix~\ref{app:factorials}, by using the following 
inequality for $n\geq 3$  
\begin{equation*} 
\frac{1}{(n-1)!}\sum_{1\leq k\leq n-1}k!(n-k)!\leq 1+2(\frac{1}{2}+\dots +%
\frac{1}{n-1})\leq n, 
\end{equation*}
which shows that in our case we can multiply two Gevrey-$1$ series with 
coefficients belonging to ${\mathcal{H}}_{s}$ (the factor $c_{1}c_{2}$ is 
then multiplied by $c_{s})$ and obtain a new Gevrey-$1$ series with 
coefficients in ${\mathcal{H}}_{s}$. It is then classical that we can write  
\begin{equation} 
\widehat{U^{3}}=\widehat{U}\ast _{G}\widehat{U}\ast _{G}\widehat{U} 
\label{convolU^3} 
\end{equation}
where the \emph{convolution product}, written as $\ast _{G}$, is well 
defined by  
\begin{equation*} 
(\hat{u}\ast _{G}\hat{v})(\zeta )=\sum_{n\geq 1}\sum_{1\leq k\leq n-1}\frac{%
u_{k}v_{n-k}}{n!}\zeta ^{n}, 
\end{equation*}
and satisfies  
\begin{equation*} 
(\hat{u}\ast _{G}\hat{v})=\widehat{(uv)}. 
\end{equation*}
This convolution product is easily extended for two functions $f(\zeta )$ 
and~$g(\zeta )$, analytic in the disc $|\zeta |<K_{1}^{-1}$, and with no 
zero order term, by  
\begin{equation} 
(f\ast g)(\zeta )=\sum_{n\geq 1}\sum_{1\leq k\leq n-1}f_{k}g_{n-k}\frac{%
k!(n-k)!}{n!}\zeta ^{n}.  \label{defconvol} 
\end{equation}
It is clear that for $f=\hat{u}$, and $g=\hat{v}$ we have  
\begin{equation*} 
f\ast g=(\hat{u}\ast _{G}\hat{v})=\widehat{(uv)}. 
\end{equation*}
Since we have (\ref{convolU^3}), it is clear from (\ref{identifOrder_n}) 
that we have  
\begin{equation*} 
(\widehat{(1+\Delta )^{2}U})(\mathbf{x},\zeta )=(1+\Delta )^{2}\widehat{U}(%
\mathbf{x},\zeta ). 
\end{equation*}
Now let us define a bounded linear operator $\mathcal{K}$ as follows: for 
any function $\zeta \mapsto V(\zeta )$ analytic in the disc $|\zeta 
|<K_{1}^{-1}$, taking values in ${\mathcal{H}}_{s}$, canceling for 
$\zeta=0$, and satisfying  
\begin{equation*} 
V(\zeta )=\sum_{n\geq 1}V_{n}\zeta ^{n},\text{ \ }||V_{n}||_{s}\leq 
cK_{1}^{n}, 
\end{equation*}
we define  
\begin{equation*} 
(\mathcal{K}V)(\zeta )=\sum_{n\geq 1}\frac{n!}{(n+4)!}\zeta ^{n+4}V_{n}. 
\end{equation*}
It is then clear for $V=\widehat{U}$ that  
\begin{equation*} 
(\mathcal{K}\widehat{U})(\zeta )=\sum_{n\geq 0}\frac{\zeta ^{4n+6}}{(4n+6)!}%
U^{(n)}=\widehat{(\zeta ^{4}U)}, 
\end{equation*}
and we see that  
\begin{equation*} 
\partial _{\zeta }^{4}(\mathcal{K}\widehat{U})=\widehat{U}. 
\end{equation*} 
 
We now claim the following: 
 
\begin{theorem} 
\label{BorelTransfthm} The Borel transform $\widehat{U}(\mathbf{x},\zeta )$ 
of the Gevrey solution found in Theorem~\ref{gevreythm} for $l=1$ is the 
unique solution, analytic in the disc $|\zeta |<K^{-1/4}$, cancelling for $%
\zeta =0$, and taking values in ${\mathcal{H}}_{s}$ invariant under 
rotations of angle $2\pi /Q$, of the equation  
\begin{equation} 
(1+\Delta )^{2}V-\mathcal{K}V+V\ast V\ast V=0.  \label{Borel_PDEqu} 
\end{equation} 
\end{theorem}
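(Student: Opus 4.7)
The plan is to verify three things about $\widehat{U}$: that it is well-defined and analytic in the disc $|\zeta|<K^{-1/4}$, that it satisfies the claimed PDE, and that it is the unique solution with the stated properties. Analyticity is immediate from the Gevrey estimate of Theorem~\ref{gevreythm} for $l=1$: $\|U^{(n)}\|_{s}\leq \gamma K^{n}(n!)^{4}$, together with the elementary inequality $(n!)^{4}\leq (4n)!$, yields
\begin{equation*}
\left\|\frac{U^{(n)}}{(4n+2)!}\right\|_{s}\leq \frac{\gamma K^{n}}{(4n+1)(4n+2)},
\end{equation*}
so the series $\widehat{U}(\zeta)=\sum_{n\geq 0}\frac{\zeta^{4n+2}}{(4n+2)!}U^{(n)}$ converges absolutely in $\mathcal{H}_{s}$ on the open disc. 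Rotation-invariance and $\widehat{U}(\mathbf{x},0)=0$ are inherited from the formal series.

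For the PDE, observe that $U$ satisfies by construction the identity $(1+\Delta)^{2}U-\zeta^{4}U+U^{3}=0$ order by order in $\zeta$. Apply the Borel transform termwise, using: (i) $(1+\Delta)^{2}$ acts on $\mathbf{x}$ alone and so commutes with the Borel transform, giving $\widehat{(1+\Delta)^{2}U}=(1+\Delta)^{2}\widehat{U}$; (ii) if $U=\sum U_{n}\zeta^{n}$ then $\zeta^{4}U=\sum U_{n}\zeta^{n+4}$ and hence $\widehat{\zeta^{4}U}=\sum\frac{n!}{(n+4)!}\frac{U_{n}}{n!}\zeta^{n+4}=\mathcal{K}\widehat{U}$; and (iii) the convolution identity $\widehat{U^{3}}=\widehat{U}\ast\widehat{U}\ast\widehat{U}$ is the multiplication-convolution duality for Gevrey-1 series described in the preamble to the theorem, extended from scalar series to $\mathcal{H}_{s}$-valued series via the Banach algebra estimate of Lemma~\ref{algebra}. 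Summing the three contributions yields $(1+\Delta)^{2}\widehat{U}-\mathcal{K}\widehat{U}+\widehat{U}\ast\widehat{U}\ast\widehat{U}=0$; convergence of $\widehat{U}\ast\widehat{U}\ast\widehat{U}$ on the same disc follows from the Gevrey-1 estimate on $\widehat{U}$ and the factorial weights in~(\ref{defconvol}).

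For uniqueness, let $V(\zeta)=\sum_{n\geq 1}V_{n}\zeta^{n}$ be any analytic rotation-invariant solution with $V(0)=0$ and coefficients in $\mathcal{H}_{s}$. Identifying powers of $\zeta$ in the equation gives $\mathcal{L}_{0}V_{n}=R_{n}(V_{1},\dots,V_{n-1})$, where $R_{n}$ is built from $\mathcal{K}V$ and $V\ast V\ast V$ and depends only on earlier coefficients. Decomposing $V_{n}=\beta_{n}U^{(0)}+\widetilde{V}_{n}$ with $\widetilde{V}_{n}=\mathbf{Q}_{0}V_{n}$, the component $\widetilde{V}_{n}$ is forced to equal $\widetilde{\mathcal{L}}_{0}^{-1}\mathbf{Q}_{0}R_{n}$, while the scalar $\beta_{n}$ is determined by the solvability condition four orders later, exactly as $\beta^{(n-1)}$ was determined in~(\ref{beta^n-1}) during the construction of the formal series in Theorem~\ref{gevreythm}. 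By induction on $n$, this recursion coincides with the one producing the Borel coefficients of $U$, so $V=\widehat{U}$. The main obstacle is justifying the Borel-transform calculus rigorously in~(ii) and~(iii): the convolution $\widehat{U}\ast\widehat{U}\ast\widehat{U}$ must converge in $\mathcal{H}_{s}$ on the full disc, and this requires the factorial weights in~(\ref{defconvol}) to interact correctly with both the Gevrey-1 bounds of Theorem~\ref{gevreythm} and the Banach algebra bound $\|UV\|_{s}\leq c_{s}\|U\|_{s}\|V\|_{s}$. Once these analytic details are in place, existence is immediate and uniqueness follows from the recursive matching above.
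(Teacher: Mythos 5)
Your proof is essentially correct and follows the same three-step plan as the paper (analyticity from the Gevrey bound, Borel-transform calculus to derive~(\ref{Borel_PDEqu}), and uniqueness by order-by-order recursion). The one substantive difference is in how uniqueness is handled: where you work directly with the Borel coefficients $V_n$ and re-derive the solvability structure in the transformed setting, the paper instead sets $U_n := n!\,V_n$, observes that $U(\zeta):=\sum_{n\geq1}U_n\zeta^n$ must then \emph{formally} satisfy the original equation $(1+\Delta)^2U - \zeta^4 U + U^3 = 0$, and reduces the uniqueness claim to the already-established uniqueness of the formal power series solution of the original problem. That rescaling trick buys something concrete that your write-up elides: it makes transparent \emph{why} $V_n = 0$ for $n\not\equiv 2\pmod 4$, because the solvability condition at, e.g., order~$\zeta^3$ forces $U_1=0$ and at order~$\zeta^7$ forces $U_3=0$, matching the zeros already built into the formal series of Theorem~\ref{gevreythm}. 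Your phrase ``the recursion coincides with the one producing the Borel coefficients of $U$'' is correct in spirit but leaves this vanishing, and the induction base where solvability kills rather than determines coefficients, implicit; working through $U_n=n!V_n$ is the cleanest way to make it explicit.
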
 
 
\begin{proof} 
We assume $l=1$ in what follows. The changes needed for larger $l$'s are 
left to the reader. Let us look for a solution $V$ in the form  
\begin{equation*} 
V=\sum_{n\geq 1}\zeta ^{n}V_{n}, 
\end{equation*}
where $V_{n}\in {\mathcal{H}}_{s}$ is invariant under rotations of angle $%
2\pi /Q$. Then defining a formal series  
\begin{equation*} 
U=\sum_{n\geq 1}\zeta ^{n}U_{n},\qquad U_{n}=n!V_{n}, 
\end{equation*}
it is clear that $U$ satisfies formally  
\begin{equation*} 
(1+\Delta )^{2}U-\mathcal{\zeta }^{4}U+U^{3}=0, 
\end{equation*}
and by identifying powers of~$\zeta $:  
\begin{eqnarray*} 
\mathcal{L}_{0}U_{1} &=&0, \\ 
\mathcal{L}_{0}U_{2} &=&0, \\ 
\mathcal{L}_{0}U_{3}+U_{1}^{3} &=&0, 
\end{eqnarray*}
which leads to $U_{1}=0$ because of the last equation where the solvability 
condition cannot be satisfied. Then we have  
\begin{equation*} 
U_{1}=0,\qquad \mathcal{L}_{0}U_{j}=0,j=2,3,4,5, 
\end{equation*}
and  
\begin{equation*} 
\mathcal{L}_{0}U_{6}-U_{2}+(U_{2})^{3}=0. 
\end{equation*}
We observe that $U_{2}$ and $U_{6}$ satisfy the equations verified by $\beta 
^{-1/2}U^{(0)}$ and $\beta ^{-1/2}U^{(1)}$ (see~(\ref{def_beta})). This is 
indeed the only solution invariant under rotations of $2\pi /Q$. Hence  
\begin{eqnarray*} 
U_{2} &=&\beta ^{-1/2}U^{(0)}, \\ 
U_{6} &=&\beta ^{-1/2}U^{(1)}. 
\end{eqnarray*}
Now at order $\zeta ^{7}$ we get  
\begin{equation*} 
\mathcal{L}_{0}U_{7}-U_{3}+3U_{2}^{2}U_{3}=0 
\end{equation*}
and since $U_{3}=CU^{(0)}$, where $C$ is a constant, the solvability 
condition gives  
\begin{equation*} 
C=\frac{3C}{\beta }\langle U^{(0)3},e^{i\mathbf{k}_{1}\cdot \mathbf{x}%
}\rangle _{s}=3C 
\end{equation*}
hence $C=0$ and $U_{3}=0$. It is the same for $U_{4}=U_{5}=0$, and we obtain  
$\mathcal{L}_{0}U_{7}=\mathcal{L}_{0}U_{8}=\mathcal{L}_{0}U_{9}=0$. Then the 
computation of higher orders is exactly as the one for the computation of $%
U^{(n)}$, since the cubic term cancels if the sum of the 3 indices $p$ in $%
U_{p}$ is not $2\mod4$. Coming back to the definition of $U_{n}=n!V_{n}$, it 
is then clear that Theorem~\ref{BorelTransfthm} is proved. 
\end{proof} 
 
\section{Truncated Laplace transform} 
 
\label{sec:TruncatedLaplace} 
 
Let us take $K^{\prime }>K_{1}$ and define a linear mapping $U\mapsto\bar{U}$ 
in the set of Gevrey-$1$ series taking values in ${\mathcal{H}}_{s}$  
\begin{equation} 
\bar{U}(\nu )=\frac{1}{\nu }\int_{0}^{\frac{1}{K^{\prime }}}e^{-\frac{ \zeta  
}{\nu }}\widehat{U}(\zeta )\,d\zeta ,  \label{approxinvBorel} 
\end{equation} 
where $\widehat{U}(\zeta )$ is the Borel transform of $U$ as defined above, 
which is analytic in the disc $|\zeta |<1/K_{1}$. The function $\nu \mapsto  
\bar{U}(\nu )$ is a truncated Laplace transform of the Borel transform of~$U$.
 
\begin{remark} 
If $\widehat{U}(\zeta)$ could be shown to be analytic on a line in the 
complex $\zeta$ plane extending to~$\infty$, instead of just in a disk, then 
the Laplace transform in~(\ref{approxinvBorel}) would be the inverse Borel 
transform, and would provide a quasiperiodic solution of~(\ref{eq:sh}) in~${%
\mathcal{H}}_{s}$. 
\end{remark} 
 
It is clear that $\bar{U}(\nu )$ is a $\mathcal{C}^{\infty }$ function of $%
\nu $ in a neighborhood of 0, taking its values in ${\mathcal{H}}_{s}$, as 
this results from  
\begin{equation*} 
\bar{U}(\nu )=\int_{0}^{\frac{1}{K^{\prime }\nu }}e^{-z}\widehat{U}(\nu 
z)\,dz 
\end{equation*}
and from the dominated convergence theorem. Moreover $\bar{U}(\nu )$ and $%
U(\mu )$ have the same asymptotic expansion in powers on $\nu $, when we set  
$\mu =\nu ^{1/4}$, as this results from  
\begin{equation} 
\frac{1}{\nu }\int_{0}^{\frac{1}{K^{\prime }}}e^{-\frac{\zeta }{\nu }}\frac{%
\zeta ^{n}}{n!}\,d\zeta =\nu ^{n}-e^{-\frac{1}{K^{\prime }\nu }}\left( \frac{%
\nu ^{n}}{1}+\frac{\nu ^{n-1}}{K^{\prime }1!}+\dots +\frac{\nu }{K^{\prime 
}{}^{n-1}(n-1)!}+\frac{1}{K^{\prime }{}^{n}n!}\right) . 
\label{basic_identity} 
\end{equation}
It is also clear that in a little disc near the origin  
\begin{equation*} 
\widehat{\bar{U}}=\widehat{U}, 
\end{equation*}
but this does not imply that $\bar{U}=U$ since $U$ is not a function, being 
defined as a formal series of~$\nu ^{4}$, and an asymptotic expansion does 
not define a unique function. The real question is whether or not $\bar{U}$ 
is solution of (\ref{eq:sh}) in~${\mathcal{H}}_{s}$. 
 
By construction, we know that the Gevrey-$1$ expansion of  
\begin{equation*} 
V(\mu ^{1/4})=:(1+\Delta )^{2}\bar{U}(\mu ^{1/4})-\mu \bar{U}(\mu ^{1/4})+%
\bar{U}(\mu ^{1/4})^{3} 
\end{equation*}
in powers of $\mu ^{1/4}$ is identically 0, but we don't know whether this 
function (smooth in $\mu ^{1/4})$, which is in $\mathcal{H}_{s-4}$, is 
indeed 0. In fact we have the following 
 
\begin{theorem} 
\label{sh_exp_estim} For any even $Q\geq 8$, take $s>Q/4$. Then, $%
l=(1/2)\varphi (Q)-1$ being defined by Lemma~\ref{dioph_estimate}, the 
quasiperiodic function $\bar{U}(\mu ^{1/4l})\in {\mathcal{H}}_{s}$, with $%
s>Q/4$, defined from the series found in Theorem~\ref{gevreythm}, is 
solution of the Swift--Hohenberg PDE (\ref{eq:sh}) up to an exponentially 
small term bounded by $C(K^{\prime })e^{-\frac{1}{K^{\prime }\mu ^{1/4l}}}$ 
in $\mathcal{H}_{s-4}$, for any $K^{\prime }>K^{1/4l}$. 
\end{theorem}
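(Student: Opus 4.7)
The plan is to leverage the exact equation satisfied by the Borel transform $\widehat{U}$ (Theorem~\ref{BorelTransfthm}) and apply the truncated Laplace transform $\mathcal{L}_{K'}$ term by term, showing that each term of the transformed PDE maps to the corresponding term of the original Swift--Hohenberg equation up to an $O(e^{-1/(K'\nu)})$ remainder. The choice $K'>K^{1/4l}$ ensures $1/K'<K^{-1/4l}$, so the contour of integration lies strictly inside the disc of analyticity of $\widehat{U}$ given by Theorem~\ref{gevreythm}.

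First I extend the Borel transform construction of Section~\ref{sec:Borel} to arbitrary $l\geq 1$: setting $\nu=\mu^{1/4l}$ and $\widehat{U}(\zeta)=\sum_{n\geq 0}\zeta^{4ln+2}/(4ln+2)!\,U^{(n)}$, essentially the same argument as in Theorem~\ref{BorelTransfthm} (with $4$ replaced by $4l$) shows that $\widehat{U}$ is analytic in $|\zeta|<K^{-1/4l}$, takes values in $\mathcal{H}_{s}$, and satisfies exactly
\begin{equation*}
(1+\Delta)^{2}\widehat{U}-\mathcal{K}\widehat{U}+\widehat{U}\ast\widehat{U}\ast\widehat{U}=0,
\end{equation*}
where now $\mathcal{K}$ is the right inverse of $\partial_\zeta^{4l}$ vanishing to order $4l$ at $0$, with the integral representation $\mathcal{K}V(\zeta)=\int_0^\zeta \tfrac{(\zeta-s)^{4l-1}}{(4l-1)!}V(s)\,ds$.

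Applying $\mathcal{L}_{K'}$, the three terms are handled as follows. The biharmonic operator commutes with $\mathcal{L}_{K'}$ because it acts only on $\mathbf{x}$, so $\mathcal{L}_{K'}((1+\Delta)^{2}\widehat{U})=(1+\Delta)^{2}\bar{U}$, with the output living in $\mathcal{H}_{s-4}$ (this loss of four derivatives accounts for the $\mathcal{H}_{s-4}$ in the statement). For $\mathcal{K}\widehat{U}$, I substitute the integral representation, Fubini, and change variable $t=\zeta-s$; the inner integral becomes $\int_0^{1/K'-s}e^{-t/\nu}t^{4l-1}/(4l-1)!\,dt$, which differs from $\int_0^\infty(\cdots)=\nu^{4l}$ by an incomplete-gamma tail bounded by $C\nu^{4l}e^{-(1/K'-s)/\nu}\cdot\mathrm{poly}$. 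The tail, multiplied by $e^{-s/\nu}$ in the outer integral, collapses to $e^{-1/(K'\nu)}$, giving
\begin{equation*}
\mathcal{L}_{K'}(\mathcal{K}\widehat{U})(\nu)=\nu^{4l}\bar{U}(\nu)+O(e^{-1/(K'\nu)})=\mu\,\bar{U}(\nu)+O(e^{-1/(K'\nu)}).
\end{equation*}

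The main obstacle is the cubic term, which I treat via the multiple integral
\begin{equation*}
\bar{U}(\nu)\bar{V}(\nu)=\frac{1}{\nu^{2}}\iint_{[0,1/K']^{2}}e^{-(\zeta_{1}+\zeta_{2})/\nu}\widehat{U}(\zeta_{1})\widehat{V}(\zeta_{2})\,d\zeta_{1}d\zeta_{2}.
\end{equation*}
Changing variables to $\sigma=\zeta_{1}+\zeta_{2}$ partitions $[0,1/K']^{2}$ into the triangular piece $\sigma\in[0,1/K']$, producing the integral convolution $(\widehat{U}\tilde{\ast}\widehat{V})(\sigma)=\int_{0}^{\sigma}\widehat{U}(\xi)\widehat{V}(\sigma-\xi)d\xi$, and the piece $\sigma\in[1/K',2/K']$ on which $e^{-\sigma/\nu}\leq e^{-1/(K'\nu)}$ makes the contribution exponentially small. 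A direct term-by-term check of the series formulas of Section~\ref{sec:Borel} yields the identity $(\widehat{U}\tilde{\ast}\widehat{V})'=\widehat{U}\ast\widehat{V}$ (with both sides vanishing at $0$), so integration by parts gives
\begin{equation*}
\frac{1}{\nu^{2}}\int_{0}^{1/K'}e^{-\sigma/\nu}(\widehat{U}\tilde{\ast}\widehat{V})(\sigma)\,d\sigma=\mathcal{L}_{K'}(\widehat{U}\ast\widehat{V})(\nu)+O(e^{-1/(K'\nu)}),
\end{equation*}
where the remainder is the boundary term at $\sigma=1/K'$. Iterating once more, and using Lemma~\ref{algebra} to keep products in $\mathcal{H}_{s}$, yields $\bar{U}(\nu)^{3}=\mathcal{L}_{K'}(\widehat{U}\ast\widehat{U}\ast\widehat{U})(\nu)+O(e^{-1/(K'\nu)})$ in $\mathcal{H}_{s}$.

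Combining the three pieces,
\begin{equation*}
(1+\Delta)^{2}\bar{U}(\nu)-\mu\bar{U}(\nu)+\bar{U}(\nu)^{3}=\mathcal{L}_{K'}\!\left((1+\Delta)^{2}\widehat{U}-\mathcal{K}\widehat{U}+\widehat{U}\ast\widehat{U}\ast\widehat{U}\right)(\nu)+O(e^{-1/(K'\nu)}),
\end{equation*}
and the parenthesis vanishes identically by the extension of Theorem~\ref{BorelTransfthm}, so only the exponentially small error survives. The constant $C(K')$ is controlled by $\sup_{|\zeta|\leq 1/K'}\|\widehat{U}(\zeta)\|_{s}$ together with the polynomial factors produced by the boundary terms in the two arguments above.
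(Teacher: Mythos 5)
Your proposal is correct and follows essentially the same strategy as the paper: apply the truncated Laplace transform $\mathcal{L}_{K'}$ to the exact Borel-transformed equation~(\ref{Borel_PDEqu}) and show that the commutators of $\mathcal{L}_{K'}$ with multiplication by $\nu^{4l}$ and with cubing are exponentially small --- which is precisely the content of Lemmas~\ref{Product} and~\ref{multiplnu^4} in Appendix~\ref{app:Gevrey}. Your re-derivation of those estimates (iterating the binary convolution via the identity $(\widehat{U}\tilde{\ast}\widehat{V})'=\widehat{U}\ast\widehat{V}$, and using the integral kernel of $\mathcal{K}$ with Fubini) is mechanically a little different from the paper's (a single triple integral over the simplex $D_{K'}$, and four-fold integration by parts), but it is mathematically equivalent and reaches the same exponentially small boundary terms.
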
 
 
\begin{proof} 
The result of the Theorem follows directly from two elementary lemmas \ref%
{Product} and \ref{multiplnu^4} on Gevrey-$1$ series shown in Appendix \ref%
{app:Gevrey}, and which may be understood in the function space $\mathcal{H}%
_{s}$ instead of $%
\mathbb{C} 
$. Indeed, for $l=1$ this gives an estimate of the difference beween $V(\mu 
^{1/4})$ and the truncated Laplace transform of the left hand side of 
equation (\ref{Borel_PDEqu}) (which is then 0), taking into account of  
\begin{equation*} 
(1+\Delta )^{2}\bar{U}(\mu ^{1/4})=\frac{1}{\mu ^{1/4}}\int_{0}^{\frac{1}{%
K^{\prime }}}e^{-\frac{\zeta }{\nu }}(1+\Delta )^{2}\widehat{U}(\zeta 
)d\zeta , 
\end{equation*}
which holds in $\mathcal{H}_{s-4}$. Using Remark~\ref{alpha_diff_2}, the 
extension to larger $l$'s is left to the reader. 
\end{proof} 
 
\section{On the initial value problem} 
\label{sec:initialvalueproblem}
 
Once we know an approximate solution $\bar{U}$ of the steady PDE (\ref{eq:sh}%
), a natural question is: let us start at time $t=0$ with $U|_{t=0}=\bar{U}$, 
what can we say about the solution $U(t)$ of the initial value problem, for $%
t>0$? Let us give the following partial answer to this question: 
 
\begin{lemma} 
Assume $Q\geq 8$ and $s>Q/4$ and consider the solution $U(t)$ of the initial 
value problem (\ref{eq:shtime}) with $U|_{t=0}=\bar{U}\in \mathcal{H}_{s+4}$, 
where $\bar{U}$ is given by Theorem \ref{sh_exp_estim}. Then there are $%
\alpha $ and $C^{\prime }>0$ such that the estimate
\begin{equation*} 
||U(t)-\bar{U}||_{s}\leq C^{\prime }e^{-\frac{c}{\mu ^{1/4l}}} 
\end{equation*}
holds for $0\leq t\leq \frac{\alpha }{\mu ^{1+1/4l}}$, \ where $c$ is the 
same as in Theorem \ref{sh_exp_estim}. 
\end{lemma}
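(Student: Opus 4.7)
My plan is to set $W(t) := U(t) - \bar{U}$, so that $W(0)=0$, and to show that $W$ stays exponentially small on the claimed time interval via a Gronwall-type energy estimate in $\mathcal{H}_s$. Substituting $U = \bar{U}+W$ into (\ref{eq:shtime}) and using that $\bar{U}$ is time-independent yields
\[
\partial_t W = -(1+\Delta)^2 W + \mu W - 3\bar{U}^2\,W - 3\bar{U}\,W^2 - W^3 - R,
\]
where the residual $R := (1+\Delta)^2\bar{U} - \mu\bar{U} + \bar{U}^3$ is controlled by Theorem~\ref{sh_exp_estim} applied with $s$ replaced by $s+4$ (this is why the lemma asks for $\bar{U}\in\mathcal{H}_{s+4}$): for any $K'>K^{1/4l}$, $\|R\|_s \leq C(K')\,e^{-1/(K'\mu^{1/4l})}$. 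Local well-posedness of $U(t)$ in $\mathcal{H}_s$ follows from standard semigroup arguments, since the symbol $-(1-|\mathbf{k}|^2)^2$ of the principal part $-(1+\Delta)^2$ is non-positive and the cubic nonlinearity is locally Lipschitz on the Banach algebra $\mathcal{H}_s$ (Lemma~\ref{algebra}).

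Next I would take the $\mathcal{H}_s$ scalar product of the equation for $W$ with $W$. Because $(1+\Delta)^2$ is non-negative and self-adjoint on $\mathcal{H}_s$, the term $-\langle (1+\Delta)^2 W, W\rangle_s$ is $\leq 0$ and can be discarded. Using Lemma~\ref{algebra} together with the leading-order bound $\|\bar{U}\|_s = O(\mu^{1/2})$ (coming from $\bar{U}\sim\sqrt{\mu/\beta}\,U^{(0)}$ via Theorem~\ref{gevreythm} and formula (\ref{approxinvBorel})), one arrives at
\[
\tfrac{1}{2}\tfrac{d}{dt}\|W\|_s^2 \leq C_1\mu\,\|W\|_s^2 + C_2\mu^{1/2}\|W\|_s^3 + C_3\|W\|_s^4 + \|R\|_s\,\|W\|_s.
\]
Running a bootstrap argument under the provisional assumption $\|W\|_s \leq \mu^{1/2}$, the super-quadratic terms are dominated by the first one, and (after the standard trick of replacing $\|W\|_s^2$ by $\|W\|_s^2+\varepsilon$ and letting $\varepsilon\downarrow 0$) the inequality reduces to $\tfrac{d}{dt}\|W\|_s \leq 2C_1\mu\|W\|_s + \|R\|_s$. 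Since $W(0)=0$, Gronwall's lemma gives
\[
\|W(t)\|_s \leq \|R\|_s\,\frac{e^{2C_1\mu t}-1}{2C_1\mu}.
\]

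The main obstacle, and the point that dictates the time scale $\alpha/\mu^{1+1/4l}$, is the balance between the $e^{\mu t}$ growth of the linearised dynamics and the exponential smallness of $R$: on this time scale $\mu t$ reaches order $1/\mu^{1/4l}$, exactly the scale of the exponent controlling $\|R\|_s$. To absorb this competition, I would first choose $K'>K^{1/4l}$ so that the decay constant $1/K'$ supplied by Theorem~\ref{sh_exp_estim} is strictly larger than the target $c$, and then pick $\alpha$ so small that $2C_1\alpha < 1/K' - c$. For $t\leq \alpha/\mu^{1+1/4l}$ this gives $e^{2C_1\mu t}\leq e^{2C_1\alpha/\mu^{1/4l}}$, whence
\[
\|W(t)\|_s \leq \frac{C(K')}{2C_1\mu}\,e^{-c/\mu^{1/4l}} \leq C'\,e^{-c/\mu^{1/4l}},
\]
the $\mu^{-1}$ prefactor being absorbed by a marginal decrease of the exponent for $\mu$ small. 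Since this bound is much smaller than $\mu^{1/2}$ for small $\mu$, the bootstrap assumption $\|W\|_s\leq\mu^{1/2}$ closes. The essential difficulty is not the energy estimate itself but the fact that $0$ is not isolated in the spectrum of $-(1+\Delta)^2$, which prevents any true contractive estimate and forces us to live with the $e^{\mu t}$ amplification; it is precisely this that caps the accessible time window at $O(\mu^{-1-1/4l})$.
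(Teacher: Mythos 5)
Your argument is correct and follows the same overall strategy as the paper: set $W = U - \bar U$, observe $W(0)=0$, use the dissipativity of $(1+\Delta)^2$, control the super-linear terms via the bootstrap $\|W\|_s \lesssim \sqrt{\mu}$ (which gives an $O(\mu)$ coefficient in front of $\|W\|_s$), and close with Gronwall. There is one genuine structural difference. You run an $\mathcal{H}_s$ energy estimate, pairing the evolution equation for $W$ with $W$ itself and discarding the non-positive term $-\langle(1+\Delta)^2 W, W\rangle_s$. The paper instead works from the Duhamel (mild) formulation,
\begin{equation*}
W(t) = \int_{0}^{t} e^{-\mathcal{L}_0(t-\tau)}\bigl\{\mu W-3\bar U^2 W-3\bar U W^2-W^3\bigr\}\,d\tau
  + \int_{0}^{t} e^{-\mathcal{L}_0(t-\tau)} R\,d\tau,
\end{equation*}
using that $e^{-\mathcal{L}_0 t}$ is a contraction semigroup on $\mathcal{H}_s$; the dissipativity is then encoded in $\|e^{-\mathcal{L}_0 t}\|\le 1$ rather than in a discarded quadratic form. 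Both routes lead to the same integral inequality $\|W(t)\|_s \le \gamma_2\int_0^t\|W(\tau)\|_s\,d\tau + t\,\|R\|_s$ with $\gamma_2 = O(\mu)$, and the same conclusion. The mild formulation has the small advantage that it never needs $W(t)\in\mathcal{H}_{s+4}$: it stays entirely inside $\mathcal{H}_s$, whereas your energy pairing $\langle(1+\Delta)^2W,W\rangle_s$ is, strictly speaking, only defined on a dense subspace (though it is non-negative as an extended-real quantity, so dropping it is harmless after a standard approximation). On the other hand, your handling of the end-game is more explicit than the paper's: you spell out that one must start Theorem~\ref{sh_exp_estim} with a $K'$ satisfying $1/K' > c$ and then choose $\alpha$ small enough that $2C_1\alpha < 1/K'-c$, so the $e^{O(\mu)t}$ amplification and the $\mu^{-1}$ prefactor from Gronwall are absorbed with the stated exponent $c$; the paper compresses this into ``which leads directly to the result of the Lemma.'' That filling-in is correct and exactly what is needed.
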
 
 
\begin{proof} 
We can replace $s$ in Theorem \ref{sh_exp_estim} by $s+4$, hence we have
\begin{equation*} 
-\mathcal{L}_{0}\bar{U}+\mu \bar{U}-\bar{U}^{3}=R\in \mathcal{H}_{s}, 
\end{equation*}
with $C$ and $c>0$ such that
\begin{equation*} 
||\bar{U}||_{s+4}\leq C\sqrt{\mu },,\text{ \ }||R||_{s}\leq Ce^{-\frac{c}{%
\mu ^{1/4l}}}. 
\end{equation*}
Let us introduce the semi-group $e^{-\mathcal{L}_{0}t},t\geq 0$, defined for 
any $U\in \mathcal{H}_{s},s\geq 0$, by
\begin{equation*} 
(e^{-\mathcal{L}_{0}t}U)_{\mathbf{k}}=e^{-(1-|\mathbf{k}|^{2})^{2}t}U_{%
\mathbf{k}}. 
\end{equation*}
This semi-group is strongly continuous in $\mathcal{H}_{s}$, and bounded by 
1. Now defining $W(t)=U(t)-\bar{U}$, we have in $\mathcal{H}_{s}$  
\begin{eqnarray} 
W(t) &=&\int_{0}^{t}e^{-\mathcal{L}_{0}(t-\tau )}\{\mu W(\tau )-3\bar{U}%
^{2}W(\tau )-3\bar{U}W(\tau )^{2}-W(\tau )^{3}\}d\tau +  \notag \\ 
&&+\int_{0}^{t}e^{-\mathcal{L}_{0}(t-\tau )}Rd\tau .  \label{eq:W} 
\end{eqnarray}
We know that $W(0)=0$, and by standard arguments the solution of the initial 
value problem exists at least on a finite interval $[0,T)$ in $\mathcal{H}%
_{s}$. Let us give a more precise estimate on $W(t)$ for a part of the 
interval of time where $||W(t)||_{s}\leq C_{1}\sqrt{\mu }$ for a certain $%
C_{1}>0$. A simple estimate on (\ref{eq:W}) leads to
\begin{equation*} 
||W(t)||_{s}\leq \int_{0}^{t}\gamma _{2}||W(\tau )||_{s}d\tau +tCe^{-\frac{c%
}{\mu ^{1/4l}}}, 
\end{equation*}
with  
\begin{equation*} 
\gamma _{2}=(1+3C^{2}+3CC_{1}+C_{1}^{2})\mu . 
\end{equation*}
Then solving this inequality by Gronwall, we obtain
\begin{equation*} 
||W(t)||_{s}\leq \frac{Ce^{-\frac{c}{\mu ^{1/4l}}}}{\gamma _{2}}(e^{\gamma 
_{2}t}-1) 
\end{equation*}
which leads directly to the result of the Lemma. 
\end{proof} 
 
\appendix   
 
\section{Proof of Lemma~\protect\ref{dioph_estimate}} 
 
\label{app:dioph_estimate} 
 
We give below an elementary proof of Lemma~\ref{dioph_estimate}. 
 
The polynomial $P$ being irreducible on $\mathbb{Q}$ of degree $l+1$ and the 
polynomial $Q$ defined by  
\begin{equation*} 
Q(x)=\sum_{0\leq j\leq l}q_{j}x^{j}, 
\end{equation*}
being of degree $l$, then by the Bezout Theorem there exist two polynomials $%
A(x)$ of degree $l-1$ and $B(x)$ of degree $l$, with coefficients in $%
\mathbb{Q}$ such that  
\begin{equation} 
A(x)P(x)+B(x)Q(x)=1.  \label{Bezout} 
\end{equation}
Defining coefficients $p_{j}$, $0\leq j\leq l+1$, $a_{j}$, $0\leq j\leq l-1$ 
and $b_{j}$, $0\leq j\leq l$ of polynomials $P$, $A$ and $B$, the identity 
(\ref{Bezout}) becomes a linear system of $2l+1$ equations, of the form  
\begin{equation} 
\mathbf{M}X=\xi _{0},  \label{2l+1_system} 
\end{equation}
where the unknown is $X$ with  
\begin{equation*} 
X=\left(  
\begin{array}{c} 
a_{j-1} \\  
a_{j-2} \\  
\cdot \\  
a_{0} \\  
b_{l} \\  
b_{l-1} \\  
\cdot \\  
b_{0}%
\end{array}%
\right) ,\xi _{0}=\left(  
\begin{array}{c} 
0 \\  
0 \\  
\cdot \\  
\cdot \\  
\cdot \\  
\cdot \\  
0 \\  
1%
\end{array}%
\right) , 
\end{equation*}%
\begin{equation*} 
\mathbf{M}=\left(  
\begin{array}{ccccccccc} 
p_{l+1} & 0 & \cdot & 0 & q_{l} & 0 & \cdot & \cdot & 0 \\  
p_{l} & p_{l+1} & \cdot & \cdot & q_{l-1} & q_{l} & 0 & \cdot & \cdot \\  
\cdot & \cdot & \cdot & 0 & q_{l-2} & q_{l-1} & q_{l} & \cdot & \cdot \\  
\cdot & \cdot & \cdot & p_{l+1} & \cdot & \cdot & \cdot & \cdot & 0 \\  
p_{1} & \cdot & \cdot & p_{l} & q_{0} & \cdot & \cdot & \cdot & q_{l} \\  
p_{0} & \cdot & \cdot & p_{l-1} & 0 & q_{0} & \cdot & \cdot & q_{l-1} \\  
0 & p_{0} & \cdot & p_{l-2} & 0 & 0 & \cdot & \cdot & \cdot \\  
\cdot & \cdot & \cdot & \cdot & \cdot & \cdot & \cdot & \cdot & \cdot \\  
0 & \cdot & 0 & p_{0} & 0 & \cdot & \cdot & 0 & q_{0}%
\end{array}%
\right) . 
\end{equation*}
The $(2l+1)\times (2l+1)$ matrix $\mathbf{M}$ has integer coefficients and 
is \emph{invertible} (otherwise it would contradict the Bezout Theorem). 
Hence its determinant is integer valued and is an homogeneous polynomial of 
degree $l+1$ in $q=(q_{0},\dots ,q_{l})$. We may invert the system (\ref%
{2l+1_system}) by Cramer's formulas and we observe that the coefficients $%
b_{j}$ are rational numbers, with a common denominator of degree $l+1$ in $q$ 
and with a numerator of degree $l$ only (we replace in the determinant one 
column containing the $q_{j}$'s by $\xi _{0})$. It results that the 
polynomial $B(x)$ is the ratio of a \emph{polynomial with integer 
coefficients }$B_{0}$ \emph{of degree }$l$\emph{\ in} $q$, with an integer $%
d $, homogeneous polynomial of $q$ of degree $l+1$ and which is different 
from 0 ($\det \mathbf{M}\neq 0$). Now taking $x=\omega $ in (\ref{Bezout}) 
leads to  
\begin{equation*} 
|Q(\omega )|=\frac{d}{|B_{0}(\omega )|}, 
\end{equation*}%
and since $d\geq 1$ and the coefficients of $B_{0}$ are bounded by $%
C^{\prime }|\mathbf{q}|^{l}$, this completes the proof of Lemma~\ref%
{dioph_estimate}. 
 
\section{Proof of Lemma~\protect\ref{lem:N_k}} 
 
\label{app:lemN_k} 
 
Assertion (ii) follows from the fact that we can group the coefficients $%
m_{j}-m_{j+Q/2}=m_{j}^{\prime }$, and since in the $Q/2-$ dimensional space 
of $\{m_{j}^{\prime }$, $j=1,\dots ,Q/2\}$ the set \ $%
\sum_{j=1}^{Q/2}|m_{j}^{\prime }|=N$ is a union of $2^{Q/2}$ simplexes of 
area of order $O(N^{Q/2-1})$. To prove the part (i) (\ref{triangular}) we 
observe that  
\begin{eqnarray*} 
N_{\mathbf{k}+\mathbf{l}} &=&\min \{|m+n|;\mathbf{k}+\mathbf{l}%
=\sum_{j=1}^{Q}(m_{j}+n_{j})\mathbf{k}_{j}\} \\ 
&\leq &\min \{|m|;\mathbf{k}=\sum_{j=1}^{Q}m_{j}\mathbf{k}_{j}\}+\min \{|n|;%
\mathbf{l}=\sum_{j=1}^{Q}n_{j}\mathbf{k}_{j}\} \\ 
&\leq &N_{\mathbf{k}}+N_{\mathbf{l}}, 
\end{eqnarray*}
where  
\begin{equation*} 
N_{\mathbf{k}}=\min_{\mathbf{k}=\mathbf{k}_{\mathbf{m}}}\sum_{j=1}^{Q}m_{j}%
\mathbf{k}_{j},\text{ \ }N_{\mathbf{l}}=\min_{\mathbf{l}=\mathbf{l}_{\mathbf{%
\ n}}}\sum_{j=1}^{Q}n_{j}\mathbf{k}_{j}. 
\end{equation*}
We notice that $N_{\mathbf{0}}=0$, and $N_{-\mathbf{k}}=N_{\mathbf{k}}$ 
(each $m_{j}^{\prime }$ for $\mathbf{k}$ is just the opposite for $-\mathbf{k%
}$); we deduce that inequality (\ref{triangular}) may be strict, since  
\begin{equation*} 
0=N_{\mathbf{0}}=N_{\mathbf{k}-\mathbf{k}}<N_{-\mathbf{k}}+N_{\mathbf{k}%
}=2N_{\mathbf{k}}. 
\end{equation*}
The last inequality (\ref{compare_k_and_N_k}) is easily deduced from  
\begin{equation*} 
\mathbf{k}=\sum_{j=1}^{Q}m_{j}\mathbf{k}_{j} 
\end{equation*}
where $\{m_{j}\}$ gives precisely the \textquotedblleft 
norm\textquotedblright\ $N_{\mathbf{k}};$ which implies (since $|\mathbf{k}%
_{j}|=1)$  
\begin{equation*} 
|\mathbf{k}|\leq \sum_{j=1}^{Q}|m_{j}|=N_{\mathbf{k}}, 
\end{equation*}
and the Lemma is proved. 
 
\section{Proof of Lemma~\protect\ref{algebra} \label{app:algebra}} 
 
Let $u\in {\mathcal{H}}_{s}$, then by Cauchy--Schwarz inequality in $%
l^{2}(\Gamma )$ ($\Gamma $ is countable) we have  
\begin{eqnarray*} 
\left\vert \sum_{\mathbf{k}\in \Gamma }u_{\mathbf{k}}e^{i\mathbf{k}\cdot  
\mathbf{x}}\right\vert ^{2} &\leq &\left( \sum_{\mathbf{k}\in \Gamma }(1+N_{  
\mathbf{k}}{}^{2})^{s}|u_{\mathbf{k}}|^{2}\right) \sum_{\mathbf{k}\in \Gamma 
}\frac{1}{(1+N_{\mathbf{k}}{}^{2})^{s}} \\ 
&\leq &||u||_{{\mathcal{H}}_{s}}^{2}\sum_{\mathbf{k}\in \Gamma }\frac{1}{ 
(1+N_{\mathbf{k}}{}^{2})^{s}}. 
\end{eqnarray*} 
Now by (\ref{card_k}) we have the following estimate  
\begin{equation*} 
\sum_{\mathbf{k}\in \Gamma }\frac{1}{(1+N_{\mathbf{k}}{}^{2})^{s}}\leq 
c_{1}(Q)\sum_{n\mathbf{\in\mathbb{N}}}\frac{n^{Q/2-1}}{(1+n^{2})^{s}} 
\end{equation*} 
which is bounded when $s>Q/4$. Hence for $s>Q/4$ the series $\sum_{\mathbf{k} 
\in \Gamma }u_{\mathbf{k}}e^{i\mathbf{k}\cdot \mathbf{x}}$ converges 
absolutely and represents a continuous quasiperiodic function, the norm 
(uniform norm) of which being bounded as soon as the norm in ${\mathcal{H}} 
_{s}$ is bounded. We may proceed in the same way for the derivatives in 
using (\ref{compare_k_and_N_k}), and show that the series  
\begin{equation*} 
\sum_{\mathbf{k}\in \Gamma }|\mathbf{k}|^{l}u_{\mathbf{k}}e^{i\mathbf{k} 
\cdot \mathbf{x}} 
\end{equation*} 
is absolutely convergent for $s>Q/4+l$. This ends the proof of the last 
assertion of the Lemma. Let us now prove the first assertion which is 
necessary for our nonlinear problem. 
 
\emph{First step:} We first use the following inequality due to (\ref%
{triangular})  
\begin{equation*} 
(1+N_{\mathbf{k}+\mathbf{k}^{\prime }}^{2})^{s/2}\leq 2^{s-1}\left\{ (1+N_{  
\mathbf{k}}^{2})^{s/2}+(1+N_{\mathbf{k}^{\prime }}^{2})^{s/2}\right\} 
\end{equation*} 
valid for any $s\geq 1$, because of (\ref{triangular}) and a simple 
convexity argument (this inequality is in fact valid for $s>0)$. Then the 
following decomposition holds  
\begin{equation*} 
\sum_{\mathbf{K}}\left\vert \sum_{\mathbf{k}+\mathbf{k}^{\prime }=\mathbf{K} 
}u_{\mathbf{k}}v_{\mathbf{k}^{\prime }}\right\vert ^{2}(1+N_{\mathbf{K} 
}^{2})^{s}\leq 2^{2s-1}(S_{1}+S_{2}) 
\end{equation*} 
with  
\begin{eqnarray*} 
S_{1} &=&\sum_{\mathbf{K}}\left\vert \sum_{\mathbf{k}+\mathbf{k}^{\prime }=  
\mathbf{K}}u_{\mathbf{k}}v_{\mathbf{k}^{\prime }}\right\vert ^{2}(1+N_{  
\mathbf{k}}^{2})^{s} \\ 
S_{2} &=&\sum_{\mathbf{K}}\left\vert \sum_{\mathbf{k}+\mathbf{k}^{\prime }=  
\mathbf{K}}u_{\mathbf{k}}v_{\mathbf{k}^{\prime }}\right\vert ^{2}(1+N_{  
\mathbf{k}^{\prime }}^{2})^{s}. 
\end{eqnarray*} 
For symmetry reasons in the space $(\mathbf{k},\mathbf{k}^{\prime })$, it is 
then sufficient to estimate $S_{1}$. Let us split the bracket in the sum $%
S_{1}$ into two terms: a sum $S_{1}^{\prime }$ containing $(\mathbf{k},  
\mathbf{k}^{\prime })$ such that  
\begin{equation*} 
N_{\mathbf{k}}\leq 3N_{\mathbf{k}^{\prime }}, 
\end{equation*} 
and a sum $S_{1}^{\prime \prime }$ containing $(\mathbf{k},\mathbf{k} 
^{\prime })$ such that $N_{\mathbf{k}}>3N_{\mathbf{k}^{\prime }}$. Hence we 
have now  
\begin{equation*} 
S_{1}\leq 2(S_{1}^{\prime }+S_{1}^{\prime \prime }) 
\end{equation*} 
with  
\begin{eqnarray*} 
S_{1}^{\prime } &=&\sum_{\mathbf{K}}\left\vert \sum_{\substack{ \mathbf{k}+%
\mathbf{k}^{\prime }=\mathbf{K},  \\ N_{\mathbf{k}}\leq 3N_{\mathbf{k}%
^{\prime }}}} u_{\mathbf{k}}v_{\mathbf{k}^{\prime }}\right\vert ^{2}(1+N_{%
\mathbf{k} }^{2})^{s}, \\ 
S_{1}^{\prime \prime } &=&\sum_{\mathbf{K}}\left\vert \sum_{\substack{  
\mathbf{k}+\mathbf{k}^{\prime }=\mathbf{K},  \\ N_{\mathbf{k}}>3N_{\mathbf{k}%
^{\prime }}}} u_{\mathbf{k}}v_{\mathbf{k}^{\prime }}\right\vert^{2} (1+N_{%
\mathbf{k}}^{2})^{s}. 
\end{eqnarray*} 
To estimate $S_{1}^{\prime }$ we use (\ref{triangular}) which gives $N_{  
\mathbf{K}}\leq 4N_{\mathbf{k}^{\prime }}$, hence  
\begin{equation*} 
\frac{1}{1+N_{\mathbf{k}^{\prime }}^{2}}\leq \frac{16}{1+N_{\mathbf{K}}^{2}}, 
\end{equation*} 
and, in using again Cauchy--Schwarz  
\begin{eqnarray*} 
\sum_{\substack{ \mathbf{k}+\mathbf{k}^{\prime }=\mathbf{K},  \\ N_{\mathbf{k%
}}\leq 3N_{\mathbf{k}^{\prime }}}} |u_{\mathbf{k}}v_{\mathbf{k}^{\prime 
}}|(1+N_{\mathbf{k}}^{2})^{s/2} &\leq & \sum_{\substack{ \mathbf{k}+\mathbf{k%
}^{\prime }=\mathbf{K},  \\ N_{\mathbf{k}}\leq 3N_{\mathbf{k}^{\prime }}}} 
4^{s}|u_{\mathbf{k}}v_{\mathbf{k}^{\prime }}| \frac{(1+N_{\mathbf{k}%
}^{2})^{s/2}(1+N_{\mathbf{k}^{\prime }}^{2})^{s/2}} {(1+N_{\mathbf{K}%
}^{2})^{s/2}} \\ 
&\leq &\frac{4^{s}}{(1+N_{\mathbf{K}}^{2})^{s/2}} ||u||_{{\mathcal{H}}%
_{s}}||v||_{{\mathcal{H}}_{s}}. 
\end{eqnarray*} 
It results that  
\begin{equation*} 
S_{1}^{\prime }\leq ||u||_{{\mathcal{H}}_{s}}^{2}||v||_{{\mathcal{H}} 
_{s}}^{2}\sum_{\mathbf{K}}\frac{4^{2s}}{(1+N_{\mathbf{K}}^{2})^{s}} 
\end{equation*} 
which, for $s>Q/4$ leads to  
\begin{equation*} 
S_{1}^{\prime }\leq C||u||_{{\mathcal{H}}_{s}}^{2}||v||_{{\mathcal{H}} 
_{s}}^{2}. 
\end{equation*} 
 
\emph{Second step}: We now find a bound for $S_{1}^{\prime \prime }$, which 
is more technical, since we split this sum into packets of increasing 
lengths. 
 
Let us define  
\begin{equation*} 
\Delta _{p}u=\sum_{2^{p}\leq N_{\mathbf{k}}<2^{p+1}}u_{\mathbf{k}}e^{i  
\mathbf{k}\cdot \mathbf{x}},\text{ \ }\Delta _{-1}u=u_{\mathbf{0}}. 
\end{equation*} 
It is clear that for $s>Q/4$ (the series is absolutely convergent)  
\begin{equation*} 
u=\sum_{p=-1}^{\infty }\Delta _{p}u. 
\end{equation*} 
Moreover, it is clear from the definition that the norm of $u\in {\mathcal{H} 
}_{s}$ is equivalent to  
\begin{equation*} 
\left( \sum_{p=-1}^{\infty }2^{2ps}||\Delta _{p}u||_{0}^{2}\right) ^{1/2}. 
\end{equation*} 
To estimate the sum $S_{1}^{\prime \prime }$, we notice that in the product $%
uv$ the terms $\Delta _{p}u\Delta _{q}v$ only take into account the 
wavevectors $\mathbf{k}$ and $\mathbf{k}^{\prime }$ such that  
\begin{equation*} 
2^{p}\leq N_{\mathbf{k}}<2^{p+1},\text{ \ }2^{q}\leq N_{\mathbf{k}^{\prime 
}}<2^{q+1},\text{ \ }N_{\mathbf{k}}>3N_{\mathbf{k}^{\prime }}. 
\end{equation*} 
This implies  
\begin{equation*} 
N_{\mathbf{k}^{\prime }}<2^{p},\qquad 2^{q+1}<N_{\mathbf{k}}, 
\end{equation*} 
hence in $S_{1}^{\prime \prime }$  
\begin{equation*} 
\Delta _{p}u\Delta _{q}v=0,\text{ for }p\leq q. 
\end{equation*} 
Now, we use (for the sum in $S_{1}^{\prime \prime })$  
\begin{equation*} 
\frac{2}{3}N_{\mathbf{k}}\leq N_{\mathbf{K}} 
\end{equation*} 
\begin{equation*} 
S_{1}^{\prime \prime }\leq (\frac{2}{3})^{2s}\sum_{\mathbf{K}}\left\vert 
\sum _{\substack{ \mathbf{k}+\mathbf{k}^{\prime }=\mathbf{K},  \\ N_{\mathbf{%
k}}>3N_{\mathbf{k}^{\prime }}}} u_{\mathbf{k}}v_{\mathbf{k}^{\prime 
}}\right\vert ^{2}(1+N_{\mathbf{K}}^{2})^{s} 
\end{equation*} 
and the right hand side is the square of the norm of the product $uv$ 
computed on terms such that $N_{\mathbf{k}}>3N_{\mathbf{k}^{\prime }}$, $%
\mathbf{k}+\mathbf{k}^{\prime }=\mathbf{K}$. We now use the equivalent norm 
defined above with the decomposition in packets, hence  
\begin{equation*} 
S_{1}^{\prime \prime }\leq C\sum_{j=-1}^{\infty }2^{2js}\left\Vert \Delta 
_{j}\left( \sum_{p\geq 0}\left( \sum_{q=-1}^{p-1}\Delta _{q}v\right) \Delta 
_{p}u\right) \right\Vert _{0}^{2}. 
\end{equation*} 
Let us define $S_{p-1}v=\sum_{q=-1}^{p-1}\Delta _{q}v$, then we have  
\begin{equation*} 
\Delta _{j}\left( \sum_{p}S_{p-1}v\Delta _{p}u\right) 
=\sum_{p=j-1}^{j+1}\Delta _{j}(S_{p-1}v\Delta _{p}u)2^{ps}2^{-ps} 
\end{equation*} 
hence by Cauchy--Schwarz  
\begin{equation*} 
2^{2js}\left\Vert \Delta _{j}\left( \sum_{p}S_{p-1}v\Delta _{p}u\right) 
\right\Vert _{0}^{2}\leq \left( \sum_{p=j-1}^{j+1}2^{2(j-p)s}\right) 
\sum_{p=j-1}^{j+1}2^{2ps}\left\Vert \Delta _{j}(S_{p-1}v\Delta 
_{p}u)\right\Vert _{0}^{2} 
\end{equation*} 
Now  
\begin{equation*} 
\left\Vert S_{p-1}v\Delta _{p}u\right\Vert _{0}^{2}=\sum_{\mathbf{K}}|\sum_{  
\mathbf{k}+\mathbf{k}^{\prime }=\mathbf{K},0\leq N_{\mathbf{k}^{\prime 
}}<2^{p}\leq N_{\mathbf{k}}<2^{p+1}}u_{\mathbf{k}}v_{\mathbf{k}^{\prime 
}}|^{2} 
\end{equation*} 
and a classical computation (convolution $l^{1}\ast l^{2})$ using 
Cauchy--Schwarz gives  
\begin{eqnarray*} 
\sum_{\mathbf{K}}|\sum_{\mathbf{k}+\mathbf{k}^{\prime }=\mathbf{K}}u_{  
\mathbf{k}}v_{\mathbf{k}^{\prime }}|^{2} &\leq &\sum_{\mathbf{K}}\left\{ 
(\sum_{\mathbf{k}+\mathbf{k}^{\prime }=\mathbf{K}}|v_{\mathbf{k}^{\prime 
}}||u_{\mathbf{k}}|^{2})(\sum_{\mathbf{k}^{\prime }}|v_{\mathbf{k}^{\prime 
}}|)\right\} \\ 
&\leq &(\sum_{\mathbf{k}^{\prime }}|v_{\mathbf{k}^{\prime }}|)(\sum_{\mathbf{%
\ k}}\sum_{\mathbf{K}}|v_{\mathbf{K}-\mathbf{k}}||u_{\mathbf{k}}|^{2}) \\ 
&\leq &\left( (\sum_{\mathbf{k}^{\prime }}|v_{\mathbf{k}^{\prime }}|)\right) 
^{2}\sum_{\mathbf{k}}|u_{\mathbf{k}}|^{2}) 
\end{eqnarray*} 
which leads to  
\begin{equation*} 
\left\Vert S_{p-1}v\Delta _{p}u\right\Vert _{0}^{2}\leq ||\Delta 
_{p}u||_{0}^{2}\left( (\sum_{\mathbf{k}^{\prime }}|v_{\mathbf{k}^{\prime 
}}|)\right) ^{2} 
\end{equation*} 
and since the series$\sum |v_{\mathbf{k}^{\prime }}|\leq c||v||_{\mathcal{H} 
_{s}}$ for $s>Q/4$, as shown at the beginning of the proof of Lemma~\ref%
{algebra}, we have  
\begin{equation*} 
\left\Vert S_{p-1}v\Delta _{p}u\right\Vert _{0}^{2}\leq C||\Delta 
_{p}u||_{0}^{2}||v||_{{\mathcal{H}}_{s}}^{2}. 
\end{equation*} 
Finally, we obtain  
\begin{eqnarray*} 
\sum_{p=j-1}^{j+1}2^{2ps}\left\Vert \Delta _{j}(S_{p-1}v\Delta 
_{p}u)\right\Vert _{0}^{2} &\leq &\sum_{p=j-1}^{j+1}2^{2ps}\left\Vert 
S_{p-1}v\Delta _{p}u\right\Vert _{0}^{2} \\ 
&\leq &C^{\prime }||v||_{{\mathcal{H}}_{s}}^{2}\sum_{p=j-1}^{j+1}2^{2ps}|| 
\Delta _{p}u||_{0}^{2}, 
\end{eqnarray*} 
and  
\begin{equation*} 
2^{2js}\left\Vert \Delta _{j}\left( \sum_{p}S_{p-1}v\Delta _{p}u\right) 
\right\Vert _{0}^{2}\leq C^{\prime \prime }||v||_{\mathcal{H} 
_{s}}^{2}\sum_{p=j-1}^{j+1}2^{2ps}||\Delta _{p}u||_{0}^{2}, 
\end{equation*} 
hence  
\begin{eqnarray*} 
S_{1}^{\prime \prime } &\leq &3C^{^{\prime \prime }}||v||_{\mathcal{H} 
_{s}}^{2}\sum_{p=-1}^{\infty }2^{2ps}||\Delta _{p}u||_{0}^{2} \\ 
&\leq &C_{1}||u||_{{\mathcal{H}}_{s}}^{2}||v||_{{\mathcal{H}}_{s}}^{2} 
\end{eqnarray*} 
and Lemma~\ref{algebra} is proved. 
 
\section{Proof of Lemma~\protect\ref{factorials} \label{app:factorials}} 
 
Let us define the two sums  
\begin{eqnarray*} 
\Pi _{2,n} &=&\sum_{k=0}^{n}(k!(n-k)!)^{4l} \\ 
\Pi _{2,n}^{\prime } &=&\sum_{k=1}^{n-1}(k!(n-k)!)^{4l} 
\end{eqnarray*}
we have already  
\begin{eqnarray*} 
\Pi _{2,0} &=&1,\text{ \ }\Pi _{2,1}=2,\qquad \Pi _{2,2}=(2+\frac{1}{2^{4l}}%
)(2!)^{4l}, \\ 
\Pi _{2,2}^{\prime } &=&1,\text{ \ }\Pi _{2,3}^{\prime }=2(2!)^{4l}, 
\end{eqnarray*}
which shows that $\Pi _{2,n}\leq (2+\frac{1}{16})(n!)^{4}$ for $n=0,1,2$, 
and $l\geq 1$. Now we have for $n\geq 2$  
\begin{eqnarray*} 
\frac{\Pi _{2,n+1}}{((n+1)!)^{4l}}-\frac{\Pi _{2,n}}{(n!)^{4l}} 
&=&\sum_{k=2}^{n-2}\left( \frac{k!(n-k)!}{n!}\right) ^{4l}\left\{ \left(  
\frac{n+1-k}{n+1}\right) ^{4l}-1\right\} + \\ 
&&+\frac{2}{(n+1)^{4l}}-\frac{2}{n^{4l}}+\frac{2^{4l}}{(n(n+1))^{4l}}, 
\end{eqnarray*}
and since $n^{4l}-(n+1)^{4l}+2^{4l-1}<0$ for $n\geq 1$ the above right hand 
side terms are negative. It results that for $n\geq 2$  
\begin{equation*} 
\Pi _{2,n+1}\leq \left( \frac{(n+1)!}{n!}\right) ^{4l}\Pi _{n,2}, 
\end{equation*}
hence  
\begin{equation} 
\Pi _{2,n}\leq (2+\frac{1}{16})(n!)^{4l},\qquad n\geq 0. 
\label{estim_pi_2,n} 
\end{equation}
In the same way  
\begin{equation*} 
\frac{\Pi _{2,n+1}^{\prime }}{(n!)^{4l}}-\frac{\Pi _{2,n}^{\prime }}{%
((n-1)!)^{4l}}=\sum_{k=2}^{n-2}\left( \frac{k!(n-k)!}{n!}\right) 
^{4l}\left\{ \left( \frac{n+1-k}{n+1}\right) ^{4l}-1\right\} +\frac{2^{4l}}{%
n^{4l}}, 
\end{equation*}
hence for $n\geq 2$  
\begin{equation*} 
\frac{\Pi _{2,n+1}^{\prime }}{(n!)^{4l}}\leq \frac{\Pi _{2,n}^{\prime }}{%
((n-1)!)^{4l}}+\frac{2^{4l}}{n^{4l}}, 
\end{equation*}
and  
\begin{eqnarray*} 
\frac{\Pi _{2,n}^{\prime }}{((n-1!)^{4l}} &\leq &2^{4l}\left( \frac{1}{%
(n-1)^{4l}}+\dots +\frac{1}{2^{4l}}\right) +\Pi _{2,2}^{\prime } \\ 
&\leq &2+2^{4l}\left( \frac{1}{(n-1)^{4l}}+\dots +\frac{1}{3^{4l}}\right) \\ 
&\leq &2+\frac{2}{4l-1}\leq 3. 
\end{eqnarray*}
Finally  
\begin{equation} 
\Pi _{2,n}^{\prime }\leq 3((n-1!)^{4l}\text{ for }n\geq 2. 
\label{estim_pi'_2,n} 
\end{equation}
Consider now $\Pi _{3,n}$ defined by  
\begin{equation*} 
\Pi _{3,n}=\sum_{\substack{ k+l+r=n  \\ k,l,r\geq 0}}(k!l!r!)^{4l}. 
\end{equation*}
We already have  
\begin{equation*} 
\Pi _{3,0}=1,\text{ \ }\Pi _{3,1}=3,\text{ \ \ }\Pi _{3,2}=(3+\frac{3}{2^{4l}%
})(2!)^{4l}\leq 4(2!)^{4l}, 
\end{equation*}
In splitting the sum we obtain easily for $n\geq 3$  
\begin{eqnarray*} 
\Pi _{3,n} &=&\Pi _{2,n}+(n!)^{4l}+\sum_{r=1}^{n-1}(r!)^{4l}\Pi _{2,n-r} \\ 
&\leq &(3+\frac{1}{16})(n!)^{4l}+(2+\frac{1}{16})\Pi _{2,n}^{\prime } \\ 
&\leq &(n!)^{4l}(3+\frac{1}{16}+3(2+\frac{1}{16})\frac{1}{n^{4l}}) \\ 
&\leq &(3+\frac{3}{16}+\frac{9}{3^{4}})(n!)^{4l}\leq 4(n!)^{4l}. 
\end{eqnarray*}
Hence  
\begin{equation} 
\Pi _{3,n}\leq 4(n!)^{4l}  \label{estim_pi_3,n} 
\end{equation}
holds for any $n\geq 0$. Consider now $\Pi _{3,n}^{\prime }$ defined for $%
n\geq 2$ by  
\begin{equation*} 
\Pi _{3,n}^{\prime }=\sum_{\substack{ k+l+r=n  \\ 0\leq k,l,r\leq n-1}}%
(k!l!r!)^{4l}. 
\end{equation*}
We already have  
\begin{equation*} 
\Pi _{3,2}^{\prime }=1, 
\end{equation*}
and for $n\geq 3$, we obtain in the same way  
\begin{eqnarray} 
\Pi _{3,n}^{\prime } &=&\Pi _{2,n}^{\prime }+\sum_{r=1}^{n-1}(r!)^{4l}\Pi 
_{2,n-r}  \notag \\ 
&\leq &\left( 3+3(2+\frac{1}{16})\right) (n-1!)^{4l}  \notag \\ 
&\leq &10(n-1!)^{4l}.  \label{estim_pi'_3,n} 
\end{eqnarray}
Hence, with estimates (\ref{estim_pi_3,n}) and (\ref{estim_pi'_3,n}), 
Lemma~\ref{factorials} is proved. 
 
\section{Lemmas on Gevrey-$1$ series} 
 
\label{app:Gevrey} 
 
Below we give elementary proofs of two useful lemmas. The interested reader 
will find more general results in \cite{Ramis1996} and \cite{Marco2003}. 
 
In the following we denote by $\mathcal{L}_{K^{\prime }}$ the linear 
operator defined for analytic functions $v$ on the disc $\{|z|<1/K_{1}\}$ by
\begin{equation*} 
(\mathcal{L}_{K^{\prime }}v)(\nu )=\frac{1}{\nu }\int_{0}^{1/K^{\prime }}e^{-%
\frac{z}{\nu }}v(z)dz,\text{ \ }K^{\prime }>K_{1}. 
\end{equation*}
We also use the notations  
\begin{equation*} 
||v||_{0,K^{\prime }}=\sup_{z\in (0,1/K^{\prime })}|v(z)|,\text{ }%
||v||_{1,K^{\prime }}=\sup_{z\in (0,1/K^{\prime })}|v^{\prime }(z)|, 
\end{equation*}
and when $v(0)=0$, we notice that (integrating by parts for the second 
estimate)
\begin{eqnarray} 
\left\vert (\mathcal{L}_{K^{\prime }}v)(\nu )\right\vert &\leq 
&||v||_{0,K^{\prime }},  \label{simplEstim} \\ 
\left\vert (\mathcal{L}_{K^{\prime }}v)(\nu )-\int_{0}^{1/K^{\prime }}e^{-%
\frac{z}{\nu }}v^{\prime }(z)dz\right\vert &\leq &e^{-\frac{1}{K^{\prime 
}\nu }}||v||_{0,K^{\prime }}.  \notag 
\end{eqnarray}
Then we have the following Lemmas giving estimates of the commutator of $%
\mathcal{L}_{K^{\prime }}\circ \mathcal{B}$ (where $\mathcal{B}$ is the 
Borel transform) with the multiplication by $\nu ^{4}$ and with the mapping $%
u\mapsto u^{3}$ in the space of Gevrey series. 
 
\begin{lemma} 
\label{Product} Assume that $u(\nu )$ is a Gevrey-$1$ series, with $u_{0}=0$, 
then for $\nu <1/K^{\prime }$
\begin{equation*} 
\left\vert \left( \mathcal{L}_{K^{\prime }}\widehat{u^{3}}\right) (\nu 
)-\left( \mathcal{L}_{K^{\prime }}\widehat{u}\right) ^{3}(\nu )\right\vert 
\leq \frac{e^{-\frac{1}{K^{\prime }\nu }}}{(K^{\prime }\nu )^{3}}||\widehat{u%
}||_{0,K^{\prime }}(||\widehat{u}||_{0,K^{\prime }}+\nu ||\widehat{u}%
||_{1,K^{\prime }})^{2}. 
\end{equation*}
For any given Gevrey-$1$ series $u$, with $u_{0}=0$, there is $C(K^{\prime 
})>0 $ such that for $\nu <\nu _{0}(K^{\prime })$ we have the estimate
\begin{equation*} 
\left\vert \left( \mathcal{L}_{K^{\prime }}\widehat{u^{3}}\right) (\nu 
)-\left( \mathcal{L}_{K^{\prime }}\widehat{u}\right) ^{3}(\nu )\right\vert 
\leq C(K^{\prime })e^{-\frac{1}{K^{\prime }\nu }},\text{ \ }K^{\prime 
}>K_{1}. 
\end{equation*} 
\end{lemma}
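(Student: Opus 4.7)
The plan is to reduce both $\mathcal{L}_{K'}(\widehat{u^3})(\nu)$ and $(\mathcal{L}_{K'}\hat{u})^3(\nu)$ to a common integral against $e^{-\zeta/\nu}$ over $[0,1/K']$, so that the difference collects into explicit boundary and tail terms each carrying the factor $e^{-1/(K'\nu)}$.

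I first relate the paper's convolution $*$ to the standard integral convolution $(f\star g)(\zeta)=\int_0^\zeta f(z)g(\zeta-z)\,dz$ through the elementary identity $f * g=\frac{d}{d\zeta}(f\star g)$, valid when $f(0)=g(0)=0$ (check it on monomials $f=\zeta^a,g=\zeta^b$). Iterating this with $\widehat{u^3}=\hat u* \hat u*\hat u$ and $\hat u(0)=0$ gives $\widehat{u^3}(\zeta)=\frac{d^2}{d\zeta^2}(\hat u)^{\star 3}(\zeta)$, where $(\hat u)^{\star 3}(\zeta)$ vanishes to order $5$ at $\zeta=0$ (each $\hat u$ factor contributes a $z$), so in particular $(\hat u)^{\star 3}(0)=\frac{d}{d\zeta}(\hat u)^{\star 3}(0)=0$. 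Integrating by parts twice in $\mathcal{L}_{K'}\widehat{u^3}(\nu)=\frac{1}{\nu}\int_0^{1/K'}e^{-\zeta/\nu}\frac{d^2}{d\zeta^2}(\hat u)^{\star 3}(\zeta)\,d\zeta$, the $\zeta=0$ boundary terms drop and one obtains
\begin{equation*}
\mathcal{L}_{K'}\widehat{u^3}(\nu)=\frac{1}{\nu^3}\int_0^{1/K'}e^{-\zeta/\nu}(\hat u)^{\star 3}(\zeta)\,d\zeta+\frac{e^{-1/(K'\nu)}}{\nu^2}(\hat u)^{\star 3}(1/K')+\frac{e^{-1/(K'\nu)}}{\nu}\tfrac{d}{d\zeta}(\hat u)^{\star 3}(1/K').
\end{equation*}

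On the other side I write $(\mathcal{L}_{K'}\hat u)^3(\nu)$ as the triple integral of $e^{-(z_1+z_2+z_3)/\nu}\hat u(z_1)\hat u(z_2)\hat u(z_3)/\nu^3$ over $[0,1/K']^3$ and change variables to $(\zeta,z_1,z_2)$ with $\zeta=z_1+z_2+z_3$. On the tetrahedral region $\{z_1+z_2+z_3\le 1/K'\}$ the fibre over each $\zeta$ is the full $2$-simplex and the inner integral equals $(\hat u)^{\star 3}(\zeta)$, so this piece reproduces exactly $\frac{1}{\nu^3}\int_0^{1/K'}e^{-\zeta/\nu}(\hat u)^{\star 3}(\zeta)\,d\zeta$. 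Denoting by $T(\nu)$ the remainder coming from $R=\{(z_1,z_2,z_3)\in[0,1/K']^3:z_1+z_2+z_3>1/K'\}$, the interior integrals cancel and
\begin{equation*}
\mathcal{L}_{K'}\widehat{u^3}(\nu)-(\mathcal{L}_{K'}\hat u)^3(\nu)=\frac{e^{-1/(K'\nu)}}{\nu^2}(\hat u)^{\star 3}(1/K')+\frac{e^{-1/(K'\nu)}}{\nu}\tfrac{d}{d\zeta}(\hat u)^{\star 3}(1/K')-T(\nu).
\end{equation*}

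Each piece is then bounded by elementary estimates. The pointwise inequalities $|(\hat u)^{\star 3}(\zeta)|\le\frac{\zeta^2}{2}\|\hat u\|_{0,K'}^3$ and $|\frac{d}{d\zeta}(\hat u)^{\star 3}(\zeta)|\le\frac{\zeta^2}{2}\|\hat u\|_{0,K'}^2\|\hat u\|_{1,K'}$ follow from Leibniz's rule applied to the parametrised double integral (the surface contributions all vanish because $\hat u(0)=0$); evaluated at $\zeta=1/K'$ they yield boundary contributions of sizes $\frac{e^{-1/(K'\nu)}}{(K'\nu)^3}\cdot K'\nu\cdot\frac12\|\hat u\|_{0,K'}^3$ and $\frac{e^{-1/(K'\nu)}}{(K'\nu)^3}\cdot(K'\nu)^2\cdot\frac12\|\hat u\|_{0,K'}^2\|\hat u\|_{1,K'}$. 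The tail obeys $|T(\nu)|\le\frac{\|\hat u\|_{0,K'}^3}{\nu^3}\,\mathrm{vol}(R)\,e^{-1/(K'\nu)}\le\frac{e^{-1/(K'\nu)}}{(K'\nu)^3}\|\hat u\|_{0,K'}^3$, using $e^{-(z_1+z_2+z_3)/\nu}\le e^{-1/(K'\nu)}$ on $R$ and $\mathrm{vol}(R)\le(K')^{-3}$. Since $K'\nu<1$, the three contributions are each majorised by one summand in the expansion of $\frac{e^{-1/(K'\nu)}}{(K'\nu)^3}\|\hat u\|_{0,K'}(\|\hat u\|_{0,K'}+\nu\|\hat u\|_{1,K'})^2$, proving the first estimate.

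The second assertion reduces to the first by fixing an intermediate $K''\in(K_1,K')$ and applying the first bound at $K''$: the extra factor $(K''\nu)^{-3}$ is absorbed into $C(K')e^{-1/(K'\nu)}$ using that $\nu\mapsto\nu^{-3}e^{-a/\nu}$ is bounded on $(0,\infty)$ for any $a>0$ (here $a=1/K''-1/K'$), and replacing $\mathcal{L}_{K''}$ by $\mathcal{L}_{K'}$ in each of the four Laplace integrals introduces only an $O(e^{-1/(K'\nu)})$ correction from the tail $\int_{1/K'}^{1/K''}$. The main obstacle is purely bookkeeping—pinning down the convolution identity $\widehat{u^3}=\frac{d^2}{d\zeta^2}(\hat u)^{\star 3}$, checking that enough derivatives of $(\hat u)^{\star 3}$ vanish at $0$, and tracking the geometry of $R$—rather than any genuinely hard analytic step, in line with the paper's remark that sharper and more general results are available in the references.
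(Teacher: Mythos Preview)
Your argument is correct and rests on the same idea as the paper's proof: write both quantities as triple integrals of an exponential weight over, respectively, the simplex $\{z_1+z_2+z_3\le 1/K'\}$ and the cube $[0,1/K']^3$, and estimate the discrepancy (boundary terms from integration by parts plus the ``corner'' region $R$) by $e^{-1/(K'\nu)}$ times norms. The only genuine difference is where the integration by parts is performed. The paper keeps the integrand $\hat u'(z_1)\hat u'(z_2)\hat u(z_3)$ on the simplex side (via the beta-integral identity) and integrates by parts on two of the three one-dimensional factors of $(\mathcal L_{K'}\hat u)^3$ to match it; you instead identify $\widehat{u^3}=\frac{d^2}{d\zeta^2}(\hat u)^{\star 3}$ and integrate by parts twice in the single outer Laplace integral, working throughout with the derivative-free integrand $\hat u(z_1)\hat u(z_2)\hat u(z_3)$. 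Both routes produce two boundary contributions and one tail contribution of the right size; your packaging via $(\hat u)^{\star 3}$ is perhaps a bit cleaner to state, while the paper's version makes the role of $\|\hat u\|_{1,K'}$ more transparent. A minor point: your claim that ``the three contributions are each majorised by one summand'' of the expansion is slightly loose on constants (the tail and the $G$-boundary term both want the $\|\hat u\|_0^3$ summand), so you obtain the stated bound up to a harmless factor; the paper's own bookkeeping has the same feature. Your treatment of the second assertion, explicitly correcting $\mathcal L_{K''}$ to $\mathcal L_{K'}$ via the tail $\int_{1/K'}^{1/K''}$, is in fact more careful than the paper's relabelling argument.
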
 
 
\begin{lemma} 
\label{multiplnu^4} Assume that $u(\nu )$ is a Gevrey-$1$ series, with $%
u_{0}=0$, then for $\nu <1/K^{\prime }$ there exists $C(K^{\prime })$ such 
that
\begin{equation*} 
\left\vert (\mathcal{L}_{K^{\prime }}\mathcal{K}\widehat{u})(\nu )-\nu ^{4}%
\mathcal{L}_{K^{\prime }}\widehat{u}\right\vert \leq C(K^{\prime })||%
\widehat{u}||_{0,K^{\prime }}e^{-\frac{1}{K^{\prime }\nu }}. 
\end{equation*} 
\end{lemma}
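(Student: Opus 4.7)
The plan is to exploit the key identity $\partial_\zeta^4(\mathcal{K}\widehat{u}) = \widehat{u}$ established in Section~\ref{sec:Borel}, together with the vanishing of $(\mathcal{K}\widehat{u})^{(k)}$ at $\zeta = 0$ for $k=0,1,2,3$. Since $u_0 = 0$ and $\mathcal{K}$ shifts the series by four orders, the power series of $F := \mathcal{K}\widehat{u}$ begins at $\zeta^5$, so $F^{(k)}(0) = 0$ for $k = 0, 1, 2, 3$, while $F^{(4)} = \widehat{u}$ by construction. This is precisely the setup required to perform four successive integrations by parts in the truncated Laplace integral without producing any non-trivial contribution at the lower endpoint.

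Concretely, I would integrate $\int_0^{1/K'} e^{-\zeta/\nu} F(\zeta)\,d\zeta$ by parts four times, picking up at each step a factor of $\nu$, a boundary term at the upper endpoint of the form $-\nu^{k+1} e^{-1/(K'\nu)} F^{(k)}(1/K')$, and a zero contribution at the lower endpoint. After four iterations the remaining integral is $\nu^4 \int_0^{1/K'} e^{-\zeta/\nu} \widehat{u}(\zeta)\,d\zeta$. Dividing through by $\nu$ yields
\begin{equation*}
\mathcal{L}_{K'}(\mathcal{K}\widehat{u})(\nu) - \nu^4 \mathcal{L}_{K'}\widehat{u}(\nu)
= -e^{-1/(K'\nu)}\sum_{k=0}^{3} \nu^{k}\, F^{(k)}(1/K'),
\end{equation*}
so the problem reduces to controlling the four boundary values $F^{(k)}(1/K')$.

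To bound these, I would use the fact that $F^{(k)}$ is the $(4-k)$-fold iterated antiderivative of $\widehat{u}$ starting from zero, which yields
$|F^{(k)}(1/K')| \leq (4-k)!^{-1} (1/K')^{4-k}\,\|\widehat{u}\|_{0,K'}$.
Since $\nu \leq 1/K'$ implies $(\nu K')^k \leq 1$, summing over $k=0,\dots,3$ produces a bound of the form $C(K')\,\|\widehat{u}\|_{0,K'}$ with $C(K')$ depending only on $K'$. Combining this with the identity above gives the claimed exponentially small estimate.

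There is no real obstacle here; the proof is essentially bookkeeping once one observes the four-fold vanishing at $\zeta = 0$. The only point that needs a brief justification is the iterated-antiderivative bound on $F^{(k)}(1/K')$, and the extension to the $\mathcal{H}_s$-valued setting is immediate since the integration-by-parts formula and the sup-bound on iterated primitives of a continuous $\mathcal{H}_s$-valued function on $[0,1/K']$ go through unchanged.
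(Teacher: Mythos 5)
Your proof is correct and follows exactly the paper's own argument: four integrations by parts in the truncated Laplace integral (using $(\mathcal{K}\widehat{u})^{(k)}(0)=0$ for $k=0,\dots,3$ to kill the lower boundary terms and $\partial_\zeta^4(\mathcal{K}\widehat{u})=\widehat{u}$ to recognize the remaining integral as $\nu^4\mathcal{L}_{K'}\widehat{u}$), followed by the iterated-primitive bound $|(\mathcal{K}\widehat{u})^{(k)}(1/K')|\leq \frac{1}{(4-k)!\,K'^{4-k}}\|\widehat{u}\|_{0,K'}$ on the upper-endpoint contributions. Your explicit boundary sum reproduces precisely the factor $\{\nu^3/K'+\nu^2/(2K'^2)+\nu/(6K'^3)+1/(24K'^4)\}$ appearing in the paper.
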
 
 
\begin{proof}[Proof of Lemma \protect\ref{Product}] 
From the identity
\begin{equation*} 
\int_{0}^{z}\left( \int_{0}^{z_{1}}\frac{%
z_{1}^{k-1}z_{2}^{m-1}(z-z_{1}-z_{2})^{l}}{(k-1)!(m-1)!l!}dz_{2}\right) 
dz_{1}=\frac{z^{k+m+l}}{(k+m+l)!}, 
\end{equation*}
from the definition (\ref{defconvol}) of the convolution product, and from 
the analyticity of $\widehat{u}$ in the disc $\{|z|<1/K_{1}\}$, we have  
\begin{eqnarray*} 
\left( \mathcal{L}_{K^{\prime }}(\widehat{u}\ast \widehat{u}\ast \widehat{u}%
)\right) (\nu ) &=&\left( \mathcal{L}_{K^{\prime }}\widehat{u^{3}}\right) 
(\nu )= \\ 
&=&\frac{1}{\nu }\int_{0}^{1/K^{\prime }}e^{-\frac{z}{\nu }}\left( 
\int_{0}^{z}\left( \int_{0}^{z_{1}}\widehat{u}^{\prime }(z_{1})\widehat{u}%
^{\prime }(z_{2})\widehat{u}(z-z_{1}-z_{2})dz_{2}\right) dz_{1}\right) dz. 
\end{eqnarray*}
By Fubini's theorem and a simple change of variables, we obtain
\begin{equation} 
\left( \mathcal{L}_{K^{\prime }}\widehat{u^{3}}\right) (\nu )=\frac{1}{\nu }%
\int_{D_{K^{\prime }}}e^{-\frac{z_{1}+z_{2}+z_{3}}{\nu }}\widehat{u}^{\prime 
}(z_{1})\widehat{u}^{\prime }(z_{2})\widehat{u}(z_{3})dz_{1}dz_{2}dz_{3} 
\label{Laplacubic} 
\end{equation}
where $D_{K^{\prime }}=\{z_{1},z_{2},z_{3}>0;z_{1}+z_{2}+z_{3}<1/K^{\prime 
}\}$. Now, we have
\begin{equation*} 
\left( \mathcal{L}_{K^{\prime }}\widehat{u}\right) ^{3}(\nu )=\frac{1}{\nu 
^{3}}\int_{(0,1/K^{\prime })^{3}}e^{-\frac{z_{1}+z_{2}+z_{3}}{\nu }}\widehat{%
u}(z_{1})\widehat{u}(z_{2})\widehat{u}(z_{3})dz_{1}dz_{2}dz_{3}, 
\end{equation*}
and from (\ref{simplEstim}) we obtain
\begin{equation*} 
\left\vert \left( \mathcal{L}_{K^{\prime }}\widehat{u}\right) ^{3}(\nu )-%
\frac{1}{\nu }\int_{(0,1/K^{\prime })^{3}}e^{-\frac{z_{1}+z_{2}+z_{3}}{\nu }}%
\widehat{u}^{\prime }(z_{1})\widehat{u}^{\prime }(z_{2})\widehat{u}%
(z_{3})dz_{1}dz_{2}dz_{3}\right\vert \leq 
\end{equation*}
\begin{equation} 
\leq e^{-\frac{1}{K^{\prime }\nu }}||\widehat{u}||_{0,K^{\prime }}^{2}(||%
\widehat{u}||_{0,K^{\prime }}+\nu ||\widehat{u}||_{1,K^{\prime }}). 
\label{estim2} 
\end{equation}
Now, we observe that $(0,1/K^{\prime })^{3}\backslash D_{K^{\prime }}$ is 
such that $z_{1}+z_{2}+z_{3}>1/K^{\prime }$, hence
\begin{equation} 
\left\vert \frac{1}{\nu }\int_{(0,1/K^{\prime })^{3}\backslash D_{K^{\prime 
}}}e^{-\frac{z_{1}+z_{2}+z_{3}}{\nu }}\widehat{u}^{\prime }(z_{1})\widehat{u}%
^{\prime }(z_{2})\widehat{u}(z_{3})dz_{1}dz_{2}dz_{3}\right\vert \leq \frac{%
e^{-\frac{1}{K^{\prime }\nu }}}{\nu ^{3}K^{\prime 3}}||\widehat{u}%
||_{0,K^{\prime }}(\nu ||\widehat{u}||_{1,K^{\prime }})^{2}. 
\label{remainderestimate} 
\end{equation}
Collecting (\ref{Laplacubic}), (\ref{estim2}) and (\ref{remainderestimate}) 
the first result of Lemma \ref{Product} is proved. Notice that by choosing $%
K^{\prime \prime }>K^{\prime }$, then for $\nu $ small enough $\frac{e^{-%
\frac{1}{K^{\prime }\nu }}}{\nu ^{3}K^{\prime 3}}\leq e^{-\frac{1}{K^{\prime 
\prime }\nu }}$. Since $K^{\prime }$ is chosen arbitrarily larger than $%
K_{1}$, we can assert that $u$ being given, there is $C(K^{\prime })$ such 
that
\begin{equation*} 
\left\vert \left( \mathcal{L}_{K^{\prime }}\widehat{u^{3}}\right) (\nu 
)-\left( \mathcal{L}_{K^{\prime }}\widehat{u}\right) ^{3}(\nu )\right\vert 
\leq C(K^{\prime })e^{-\frac{1}{K^{\prime }\nu }},\text{ \ }K^{\prime 
}>K_{1}. 
\end{equation*} 
\end{proof} 
 
\begin{proof}[Proof of Lemma \protect\ref{multiplnu^4}] 
By integrating by parts, we obtain
\begin{eqnarray*} 
(\mathcal{L}_{K^{\prime }}\mathcal{K}\widehat{u})(\nu ) &=&-e^{-\frac{1}{%
K^{\prime }\nu }}\left[ (\mathcal{K}\widehat{u})+\nu (\mathcal{K}\widehat{u}%
)^{\prime }+\nu ^{2}(\mathcal{K}\widehat{u})^{\prime \prime }+\nu ^{3}(%
\mathcal{K}\widehat{u})^{\prime \prime \prime }\right] |_{1/K^{\prime }}+ \\ 
&&+\nu ^{4}(\mathcal{L}_{K^{\prime }}\widehat{u})(\nu ). 
\end{eqnarray*}
Hence
\begin{equation*} 
\left\vert (\mathcal{L}_{K^{\prime }}\mathcal{K}\widehat{u})(\nu )-\nu ^{4}(%
\mathcal{L}_{K^{\prime }}\widehat{u})(\nu )\right\vert \leq e^{-\frac{1}{%
K^{\prime }\nu }}||\widehat{u}||_{0,K^{\prime }}\left\{ \frac{\nu ^{3}}{%
K^{\prime }}+\frac{\nu ^{2}}{2K^{\prime 2}}+\frac{\nu }{6K^{\prime 3}}+\frac{%
1}{24K^{\prime 4}}\right\} 
\end{equation*}
which proves Lemma \ref{multiplnu^4}. 
\end{proof} 
 
 
\bibliographystyle{plain} 
\bibliography{IR4} 
 
 
%
%
%
%
 
 
\end{document}